\documentclass[11pt,reqno,a4paper]{amsart}
\usepackage{amssymb}
\usepackage{amsmath}
\usepackage[english,activeacute]{babel}
\usepackage{appendix}
\usepackage{xcolor}
\usepackage[active]{srcltx}
\usepackage{verbatim}
 \newtheorem{thm}{Theorem}[section]
 
 \newtheorem{cor}[thm]{Corollary}
 \newtheorem{lem}[thm]{Lemma}
 \newtheorem{prop}[thm]{Proposition}
 \newtheorem{hypothesis}[thm]{Hypothesis}
 \theoremstyle{definition}
 \newtheorem{defn}[thm]{Definition}
 
 \theoremstyle{remark}
 \newtheorem{rem}[thm]{Remark}

 \numberwithin{equation}{section}

\newcommand{\be}{\begin{equation}}
\newcommand{\ee}{\end{equation}}
\newcommand{\benon}{\begin{equation*}}
\newcommand{\eenon}{\end{equation*}}
\newcommand{\ba}{\begin{array}}
\newcommand{\ea}{\end{array}}
\newcommand{\bal}{\begin{align}}
\newcommand{\eal}{\end{align}}
\newcommand{\bea}{\begin{eqnarray}}
\newcommand{\eea}{\end{eqnarray}}
\newcommand{\bee}{\begin{eqnarray*}}
\newcommand{\eee}{\end{eqnarray*}}

\newcommand{\Z}{\mathbb Z}

\newcommand{\R}{\mathbb R}

\newcommand{\E}{\mathbb E}
\newcommand{\N}{\mathbb N}

\newcommand{\norm}[1]{\Vert #1 \Vert}
\newcommand{\abs}[1]{\left| #1 \right|}

\newcommand{\angles}[1]{\langle #1 \rangle}
\numberwithin{equation}{section}

\newcommand{\Schr}{Schr\"odinger }

\numberwithin{equation}{section}

\let\Im\undefined 
\DeclareMathOperator{\Im}{\mathrm{Im}}

\newcommand{\hm}[1]{\leavevmode{\marginpar{\tiny%
$\hbox to 0mm{\hspace*{-0.5mm}$\leftarrow$\hss}%
\vcenter{\vrule depth 0.1mm height 0.1mm width \the\marginparwidth}%
\hbox to 0mm{\hss$\rightarrow$\hspace*{-0.5mm}}$\\\relax\raggedright #1}}}

\newcommand{\tr}{{\mathop{\mathrm{tr} \,}}}

\newcommand{\dist}{\mathop{\mathrm{dist}}}

\newcommand{\supp}{{\mathop{\mathrm{supp\,}}}}

\newcommand{\ceil}[1]{\lceil #1 \rceil}
\newcommand{\floor}[1]{\lfloor #1 \rfloor}

\def\beq{\begin{equation}}
\def\eeq{\end{equation}}
\newcommand{\beas}{\begin{eqnarray*}}
\newcommand{\eeas}{\end{eqnarray*}}

\begin{document}

\title[Eigenvalue statistics and absence of dynamical localization]{Poisson eigenvalue statistics and dynamical delocalization for fractional and long-range Anderson models}

\author[P.\ D.\ Hislop]{Peter D.\ Hislop}
\address{Department of Mathematics,
    University of Kentucky,
    Lexington, Kentucky  40506-0027, USA}
\email{peter.hislop@uky.edu}

\author[R.\ Matos]{Rodrigo Matos}
\address{Department of Mathematics,  
PUC-Rio,
22451-900, Rio de Janeiro, Brasil}
\email{rodrigo@mat.puc-rio.br}

\author[C.\ Rojas-Molina]{Constanza Rojas-Molina}
\address{Laboratoire Analyse, G\'eom\'etrie et Mod\'elisation (AGM),
CY Cergy Paris Universit\'e,
2 av.\ Adolphe Chauvin,
95302 Cergy-Pontoise Cedex,
France}
\email{crojasmo@u-cergy.fr}



\begin{abstract}
We study local eigenvalue statistics and dynamical (de)-localization properties for long-range Anderson models, including the fractional Anderson model. These systems are random Anderson-type perturbations of operators with long-range (non random) hopping terms of the form $\abs{T(n,m)}\sim \|n-m\|^{-(d+2\beta)}$ for $\beta>0$, on the $d$-dimensional lattice. In the presence of a strong enough random potential, these models exhibit dense pure point spectrum with polynomially decaying eigenvectors, almost surely. We show that in this strong disorder regime, and for $\beta>d/2$, the local eigenvalue statistics centered at any $E$ in the deterministic spectrum is a Poisson point process with intensity given by the density of states function $n(E)$. Moreover, we prove that, at strong disorder, there is no dynamical localization for these models,  verifying a conjecture of Disertori et al.  We achieve this by establishing explicit, sharp lower bounds on the Green's functions fractional moments which imply that  large moments of the position operator are infinite. Hence, these models exhibit Poisson eigenvalue statistics and dynamical delocalization, that is, the absence of dynamical localization. In particular, this shows that in dimension $d=1$, the fractional Anderson model, a random perturbation of the fractional Laplacian $(-\Delta)^\alpha$ with $\alpha\in (0,1)$, exhibits Poisson eigenvalue statistics and dynamical delocalization if the disorder is strong and the exponent $\frac{1}{2}<\alpha<1$. This is in stark contrast with what is known for the usual Anderson model that has only nearest-neighbor hopping.

\end{abstract}




\maketitle

\tableofcontents



\section{Introduction and main results}\label{sec:intro1}

In this article, we consider the \emph{long-range Anderson Model} (lrAM) for which the discrete Laplacian is replaced by a long-range, slowly decaying hopping operator. The basic model is given by 
\be\label{op-h-beta} H_{\omega,\lambda}=T_\beta+\lambda V_\omega, \quad \lambda>0 \ee
acting on  $\ell^2(\mathbb Z^d)$. The kinetic energy operator $T_\beta$, with $\beta > 0$, is described by a polynomially-decaying (or, power-law) kernel $T_\beta (n,m) \sim \|n-m\|^{-(d + 2 \beta  )}$. At strong disorder, the deterministic spectrum of this model is dense pure point spectrum and its eigenfunctions decay at least polynomially. This was proved by Aizenman and Molchanov \cite{Aizenman-Molchanov93} using the FMM, with an improvement by Disertori, Escobar, and Rojas-Molina \cite{DERM24}, see Section \ref{sec:long_range1}. For these models, which include the \textit{fractional Laplacian}, we prove, in strong disorder regime, that the local eigenvalue statistics (LES) is a  Poisson point process,
that the system exhibits dynamical delocalization throughout its spectrum, and that the global eigenfunction correlators are bounded below by a power-law expression.

Long-range hopping models, including the fractional Laplacian, have attracted recent interest in the spectral theory and quantum dynamics community as they appear in the modeling of anomalous diffusion, large connected networks, and random walks with long jumps. 
Long-range hopping operators perturbed by quasiperiodic potentials, have received increased attention in the last years \cite{Chulaevsky-Dinaburg93, Haro-Puig13, Jito-Liu21, Ge-You-Zhao22, Liu-Powell-Wang24, Ge-Jito24}. Here, $T_\beta$ appear as the dual operator of quasiperiodic potentials, through Aubry duality \cite{AA} (see \cite[section 9.10]{df_v2} for a textbook description). The latter, together with the lrAM (and fAM) belong to the class of operators with dynamically defined potentials.

The lrAM was first studied by mathematicians using multi-scale analysis (MSA) or the fractional moment method (FMM). When $T_\beta$ in \eqref{op-h-beta} is replaced by a long-range hopping operator $T(m,n)$ with \textit{exponential decay}, Klein \cite{klein} used the MSA to prove that at large disorder, the spectrum is pure point throughout its deterministic spectrum, with exponentially decaying eigenfunctions, almost surely, Aizenman and Molchanov \cite{Aizenman-Molchanov93} applied their FMM at large disorder and proved that for lrAM, with power-law decay of $T_\beta$,  
polynomial (or, power-law) localization holds: the eigenfunctions decay at least polynomially and the spectrum is almost surely pure point. This result was later improved upon in \cite{DERM24}, see also Section \ref{sec:long_range1}. 

Other papers obtained results with slower-than-exponential decay of the matrix elements $T(m,n)$. 
Shi \cite{Shi21} studied the lrAM at large disorder with polynomial decay as in \eqref{op-h-beta}, but with a decay exponent much larger than $d$,  
and showed polynomial localization at large disorder using MSA. 
Jian and Sun \cite{jian_sun_PAMS22} studied a similar model, also with a decay exponent much larger than $d$, prove polynomial localization at strong disorder by modifying the SULE eigenfunction condition for polynomially-decaying eigenfunctions. Additionally, they prove that small moments of the position operator 
are bounded, roughly consistent with our result, Theorem \ref{thm-no_dl}. For additional detail, see Remark \ref{rem:yo_conj1}. (We use the more common definition of dynamical localization that \textit{all the moments} are bounded in time.)
For similar power-law localization results in the context of one-photon states in disordered media, see \cite{Kraisler-Schenker-Schotland26}. 

Another group of researchers studied the lrAM for which the hopping term  $T (m, n)$ is \textit{subexponential} (but not polynomial) of the form 
$$
T(m, n) \sim e^{- \zeta (\log ( \|m-n\|) + 1))^\rho } , ~~~
 \zeta > 0, \rho > 1, 
$$
Shi and Wen \cite{shi_wen22} studied the subexponential case with a quasi-periodic potential using KAM techniques and proved polynomial localization. More recently, Jian, Xu, and Yan \cite{JXY} used the FMM to prove dynamical localization and the corresponding eigenfunction decay at the same rate, at large disorder. Shi, Wen and Yan \cite{Shi-Wen-Yan25} obtained similar results using MSA. For power-law localization results regarding localization under the presence of an electric field we refer to \cite{Aloisio26,
Aloisio-Oliveira-Pigossi26} and \cite{Aloisio-Matosetal26}, for  structural results regarding  localization centers and Semi-Uniform Localization in the power-law setting.

In \cite{Jaksic-Molchanov99}, Jaksic and Molchanov  showed (for any $\lambda > 0$) that in dimension $d=1$ and $\beta>3/2$, the spectrum of $H_{\beta,\lambda}$ has no absolutely continuous component, while if $\beta>7/2$, the spectrum has no continuous component. In dimension one, $H_{\omega,\lambda}$ can be seen as a bi-infinite Toeplitz (also called Laurent) matrix with diagonal disorder, exhibiting Lifshitz tails, a feature often associated with localization, see \cite{GRM22}.

The long-range random operator $H_{\omega,\lambda}$ has been intensively studied in the physics literature, see \cite{Cressoni-Lyra98,Rodriguez-et-al03}, where
numerical studies showed the existence of a transition in the eigenvalue statistics and eigenfunction localization, depending on the region of the spectrum the disorder parameter $\lambda$, and the parameter $\beta$ appearing in \eqref{eq:lr_kernel1-1}. The idea that long-range hopping may cause delocalization appeared in \cite{BurinMaksimov1989}, which sometimes motivates the term Burin-Maksimov model for $H_{\omega,\lambda}$. One of our results, Theorem \ref{thm-no_dl}, establishes a quantitative version of this delocalization phenomena brought on by the long-range hopping term $T_\beta$. 

\subsection{Basic definitions and assumptions}

The lrAM studied in the paper consists of a long-range hopping term and a random potential. These are assumed to satisfy the following hypotheses.

\begin{hypothesis}\label{hypothesis:lr1}
The operator $T_\beta$, for $\beta > 0$,  is a bounded, symmetric, translationally invariant, long-range hopping operator on $\ell^2(\Z^d)$  defined with a bounded kernel $T_\beta(n,m)$ so that  
\be\label{eq:lr_defn1-1} 
(T_\beta f )(m) :=    \sum_{n\in\mathbb Z^d,n\neq m} T_\beta (m,n) f(n) - C_{T_{\beta}} f(m),  
 \ee
with the constant $0 < C_{T_\beta} < \infty$ given by
\beq\label{eq:lr_constant1}
C_{T_\beta} :=  \sum_{n\in\mathbb Z^d, n\neq 0} T_\beta (0,n),
\eeq
which is a convergent series for every $\beta>0$, under Hypothesis \ref{hypothesis:lr1} 
Here, $T_\beta (0,0) = T_\beta (m,m)$ for all $m\in\mathbb Z^d$, is a finite constant, and the off-diagonal terms $T_\beta(n,m)$ decay polynomially, for $\beta>0$, as
\beq\label{eq:lr_kernel1-1}
\frac{c_{\beta,d}}{\|m-n\|^{d+2\beta}}\leq | T_\beta (m,n) | \leq \frac{C_{\beta,d}}{\|m-n\|^{d+2\beta}},  ~~~ m\neq n,
\eeq
for some constants $0 < c_{\beta,d} \leq C_{d, \beta} < \infty$. 
\end{hypothesis}

\medskip

With respect to the random potential, we make the following assumption.

\begin{hypothesis}\label{hypothesis:pot1}
The potential $V_\omega$ is an Anderson-type random potential. That is, it is given by 
\begin{equation}\label{rand-pot}
    V_\omega  = \sum_{k \in \Z^d} \omega_k P_k , 
\end{equation}
where $P_k$ is the rank-one projection onto site $k \in \Z^d$, that is, $P_k = \angles{\delta_k,\cdot}\delta_k$, with $\{\delta_k \}_{k\in \mathbb Z^d}$ the canonical orthonormal base of $\ell^2(\mathbb Z^d)$. The set 
$\{\omega_k \}_{k\in \mathbb{Z}^d}$ is a collection of independent, identically distributed (iid) random variables with a common probability measure $\mu$. We denote the probability space $\Omega= (\supp \mu) ^{\mathbb Z^d}$, with probability measure $\mathbb P=\bigotimes_{k\in\mathbb Z^d} \mu$ and expectation denoted by $\mathbb E(\cdot)$. We assume that $\mu$ has a bounded and compactly supported density $\rho \geq 0$ with $\int \rho = 1$. 
\end{hypothesis}

Under Hypotheses \ref{hypothesis:lr1} and \ref{hypothesis:pot1}, the lrAM $H_{\beta,\lambda}$ is an ergodic family  (see, e.\ g. \cite{AW15}). In this case, its almost-sure spectrum $\Sigma_\lambda$ is given by
\begin{equation}
    \Sigma_\lambda = \sigma (T_\beta) + \lambda ~\supp \mu.
\end{equation}
where $\sigma(T_\beta)$ is the spectrum of $T_\beta$. The spectrum of $T_\beta$ is a closed interval determined by the range of its dispersion relation $T_\beta (\theta)$, for $\theta \in \mathbb{T}^d$. For example, in the case where
$$
T_\beta (m,n) = \frac{C}{\|m-n\|^{d + 2\beta}},
$$
the dispersion relation $T_\beta(\theta)$ is 
$$
T_\beta(\theta) = C_\beta ~ \sum_{n \in \mathbb{Z}^d \backslash \{0\}} \left( \frac{\sum_{j=1}^d \cos(n_j \theta_j) }{\|n\|^{d + 2 \beta}} \right) ,  ~~~ \theta \in \mathbb{T}^d ,
$$
for an appropriate finite constant $C_\beta > 0$ (see, for example, \cite{Cressoni-Lyra98}).

The fractional Laplacian is a specific example of a long-range operator satisfying Hypothesis \ref{hypothesis:lr1}.  This was proven in \cite{Ciaurrietal18}  for the case $d=1$ and in \cite{GRM20} for general dimension.  
Let $-\Delta$ be  the discrete, nonnegative, finite difference Laplacian on $\Z^d$ with spectrum $\sigma(-\Delta) := [0, 4d]$. 
The fractional Laplacian $(-\Delta)^\alpha$ is defined through the spectral theorem.
We recall that, using the discrete Fourier transform, for all $\varphi\in \ell^2(\mathbb Z^d)$,
\be\label{eq:fourier-rep}
((-\Delta)^\alpha \varphi)(k) = \frac{1}{(2 \pi)^\frac{d}{2}} \int_{T^d} e^{i \overline{k} \cdot \overline{\theta}} \left(  \sum_{n=1}^d (2- 2 \cos \theta_n ) \right)^\alpha \widehat{\varphi}( \overline{\theta}  ) ~d \overline{\theta}, 
\ee
where $\hat \varphi$ is the Fourier transform of $\varphi$. It follows that the spectrum of the fractional Laplacian is $\sigma((-\Delta)^\alpha)  := [0, (4d)^\alpha]$. Taking $\alpha=1$ in \eqref{eq:fourier-rep} yields the usual Anderson model.

\begin{rem} In the probability literature one often writes $(-\Delta)^{\alpha/2}$ with $\alpha\in(0,2)$. In this case, the decay in inequality \eqref{eq:lr_kernel1-1} has exponent $d+\alpha$.
\end{rem}

The \emph{fractional Anderson Model} (fAM),  is given by
\be \label{def:frac-am} H_{\alpha,\lambda}=(-\Delta)^{\alpha}+\lambda V_\omega,\quad \alpha \in (0,1),\, \lambda>0 \ee
acting on $\ell^2(\mathbb Z^d)$, where $(-\Delta)^\alpha $ is the fractional discrete Laplacian with exponent $\alpha$ and $V_\omega$ is an Anderson potential satisfying Hypothesis \ref{hypothesis:pot1}

It was shown in \cite{Ciaurrietal18} in $d=1$ and in \cite{GRM20} in $d\geq 1$, that $(-\Delta)^\alpha$ verifies \eqref{eq:lr_kernel1-1} with $\beta=\alpha\in (0,1)$ for some constants $c_\beta=c_{\alpha,d}, C_\beta=C_{\alpha,d}$ depending on $\alpha$ and the dimension $d$, see also the recent \cite{Hake-Keller-Pogorzelski-26a,Hake-Keller-Pogorzelski-26b}. This makes the fractional Anderson model a particular case of the long-range random model described \eqref{op-h-beta}.
Under Hypothesis \ref{hypothesis:pot1}, the fAM $H_{\alpha,\lambda}$ is an ergodic family  (see, e.\ g. \cite{AW15}) with almost-sure spectrum given by  $\Sigma_{\alpha,\lambda}= [0, (4d)^\alpha] + \lambda ~\supp \mu$. In an abuse of notation, we will write $\Sigma_\lambda$ for $\Sigma_{\alpha,\lambda}$ when no confusion arises.

Since the fAM can be seen as a particular case of the lrAM, the proof of dense pure point spectrum for large values of $\lambda$, from \cite{Aizenman-Molchanov93}, still applies. However, in  \cite{Padgett_Liaw_etal19} it was observed numerically that the fAM exhibits less localization, in a suitably defined sense, than the usual Anderson model. In \cite{DERM24}, the authors proved the decay of the Green's function of the fAM using a self-avoiding walk expansion, following ideas of  \cite{schenker2009}. They improved the fractional Green's function estimates, and among others, obtained boundedness of the first moment of the time-evolved wave packets. The authors of \cite{DERM24} conjectured that  higher moments are not bounded and that, in particular, the fAM does not exhibit dynamical localization. This is confirmed by our result, Theorem \ref{thm-poisson}.

\subsection{Background: Delocalization, local eigenvalue statistics, eigenfunction correlators}\label{subsec:back1}

One of the main motivations to study the fAM comes from an expected metal-insulator transition depending on the dimension and on the exponent $\alpha$. We recall that for the usual Anderson model, it is expected that pure point spectrum with exponentially decaying eigenfunctions holds in $d=1$, for $d=2$, there are conflicting views, while in $d=3$ there should be an energy transition between localized states, for energies in the spectral edges, to delocalized states, associated to energies in the bulk of the spectrum, provided the disorder isn't too strong. Complete localization is known for the case $d=1$ and band-edge localization is known for $d=2$. The existence of a transition in $d=3$ remains a challenging open problem. 

In the fractional Anderson model, however, this transition is expected to take place in dimension $d=1$, as observed numerically in \cite{Cressoni-Lyra98,Rodriguez-et-al03}. This is motivated by the recurrence-transience transition known for the random walk with long jumps associated to $(-\Delta)^\alpha$ in any dimension: the walk is known to be transient for $\alpha<d/2$ and recurrent for $\alpha\geq d/2$, see e.g. \cite{Caputo-Faggionato-Gaudilliere09}. In the one-dimensional case, therefore, we expect  different spectral and dynamical behaviors for the fAM for $\alpha<1/2$ and $\alpha\geq 1/2$. A phase transition depending on the exponent $\alpha$ and the dimension is known in the case of the long-range random Ising model, see e.g. \cite{DingHuangMaia2024,AffonsoBissacotMaia2023}
 and references therein.

Going back to the perspective of general long-range random models, in Theorems \ref{thm-poisson} and \ref{thm-no_dl}, below, we show that $H_{\omega,\lambda}$ exhibits Poisson eigenvalue statistics and dynamical delocalization for large $\lambda$ and $\beta>d/2$. This implies, in particular, the same results for the fractional Anderson model with $\alpha\in (1/2,1)$ and $d=1$. The co-existence of Poisson eigenvalue statistics and dynamical delocalization is strikingly different from the apparent dichotomy in the eigenvalue statistics of the Anderson model and other random quantum systems described above, although certain strongly correlated random systems are known to exhibit some transport along with localization in a certain direction \cite{Mavi-Schenker19,{Matos-Mavi-Schenker24}}.

For Anderson-type models with i.i.d.\ potentials, eigenvalue level repulsion is associated with extended eigenstates and delocalization, 
as found in matrices where all entries are random, while Poisson eigenvalue statistics is associated to exponentially localized eigenstates, and dynamical localization, found in the Anderson model, where disorder is only on the diagonal terms. This association is also found in the case of random band matrices and heavy-tailed random matrices. In the former case, it has been shown that in a conveniently defined limit where the matrix size grows and the width of the band decreases, the eigenvalue statistics become Poisson distributed and the associated eigenvectors become exponentially localized, deviating from the usual random matrix case and approaching the Anderson model behavior, see \cite {Casati-et-al1990,Fyodorov-Mirlin1994}, and \cite{hislop-krishna2007, schenker2009, chen-smart2023,cipolloni-et-al2025}. In the case of heavy-tailed random matrices, extreme eigenvalues are known to exhibit Poisson eigenvalue statistics with associated eigenvectors that are exponentially decaying \cite{Kieburg-Monteleone21,{Auffinger-BenArous-Peche09}}. 

One might ask whether Poisson eigenvalue statistics is a structural consequence of having exponentially decaying eigenfunctions, which in turn is a consequence of the exponential decay of the Green's function of the operator. Our first main result, Theorem \ref{thm-poisson}, shows that this is not the case and that Poisson statistics can be obtained with non-exponential decay of the Green's function. To our knowledge, this is the first example of a model defined on $\ell^2(\mathbb Z^d)$, with diagonal disorder, exhibiting this behavior. At finite volume, and rescaling the hopping terms with matrix size and strength of disorder, it was observed numerically in \cite{Tang-Khaymovich22} that the model exhibited both Poisson eigenvalue statistics and non-exponentially decaying eigenfunctions.

For the model in \eqref{op-h-beta}, Yeung and Oono conjectured in \cite{Yeung-Oono-87}, based on numerical evidence, that the eigenfunctions do not decay exponentially and instead inherit the decay in \eqref{eq:lr_kernel1-1}. That is, the eigenfunctions have the same decay rate as the kernel of $T_\beta$. Upper bounds showing (at least) polynomial decay have been obtained, under certain conditions, with the standards proof of localization \cite{Aizenman-Molchanov93,CMS90,Jaksic-Molchanov99, GRM20, Shi21,DERM24,Shi-Wen-Yan25,JXY}, giving a partial proof of the conjecture, while obtaining (averaged) lower bounds remains an open problem. Our results on dynamical delocalization do not involve decay of eigenfunctions, but, rather, are based upon lower bounds on the Green's function fractional moments. Using these bounds, we show that time-averaged moments of the position operator in states evolving under the action of $e^{-itH_{\omega,\lambda}}$ are infinite, for all large moments, see Theorem \ref{thm-no_dl}.  In order to study power-law, lower bounds on eigenfunctions, we establish lower bounds on the global eigenfunction correlators as stated in Corollary \ref{cor:ef-corr}. 
These lower bounds suggest that a suitable average of the eigenfunctions satisfy lower bounds of the form $C \|n\|^{- (d+ 2 \beta)}$, for some $C > 0$ and $n \neq 0$, see Remark \ref{rem:yo_conj1}. 
We believe that our results are a step closer to a complete proof of Yeung and Oono's conjecture.

\subsection{Main results: Local eigenvalue statistics and absence of dynamical localization} \label{subsec:fr_results1}

In order to study the local eigenvalue statistics (LES) of an ergodic random operator $H$ on $\ell^2(\mathbb Z^d)$, we consider finite-volume restrictions of the operator and study the distribution of the rescaled eigenvalues near a fixed energy $E$ in the almost-sure spectrum $\Sigma$. Let $\Lambda \subset \Z^d$ be a cube of volume $|\Lambda|$ and denote by $1_\Lambda$ the characteristic function on $\Lambda$. The operator $H^\Lambda := 1_\Lambda H 1_\Lambda$ acts on $\ell^2 (\Lambda)$ and has discrete spectrum consisting of (random) eigenvalues $\{ E_j(\lambda) \}_{j=1}^{|\Lambda|}$. We define the point process on $\R$ by
\be\label{def:pp-xi} 
\xi(\Lambda, E)=\sum_{j=1}^{|\Lambda|} \delta_{\abs{\Lambda}(E_j(\Lambda)-E)}, 
\ee 
and study the limit as $|\Lambda| \to \infty$. 

In the case of the lattice Anderson model $H_\omega=-\Delta+\lambda V_\omega$, Minami showed, under certain conditions, that the process $\xi(\Lambda,E)$ converges to a Poisson point process with intensity function $n(E) := N'(E)$, where $N(E)$ is the integrated density of states of $H_\omega$, provided $n(E) > 0$, which occurs for almost every $E \in \Sigma_\lambda$. We say the Anderson model exhibits Poisson local eigenvalue statistics \cite{Minami96}.  Minami's proof is based on estimates on the probability of finding eigenvalues in a given interval, the Wegner estimate and the Minami estimate, and the exponential decay of the Green's function of $H_\omega$. The Wegner and Minami estimates are a consequence of the regularity of the probability distribution $\rho$ in Hypothesis \ref{hypothesis:pot1} and the rank-one structure of $V_\omega$. The decay of Green's function is given by the Fractional Moment Method for large disorder.
These ingredients are crucial in controlling the error terms in the truncation procedure described above. 

In the case of long-range operators, the non-locality of $T_\beta$, and the consequent power-law, rather than exponential, decay of the Green's function, makes the truncation errors much more difficult to control. In Theorem \ref{thm:gen-minami}, we show that the condition on exponential decay of the Green's function in Minami's proof can be weakened to allow for only polynomial decay. We do this by exploiting the improved decay bounds on the Green's function obtained in \cite{DERM24}, and by pushing the summability of the errors in the truncation method, see Section \ref{sec:poisson1}.  This allows us to prove that  \eqref{def:pp-xi} converges to a Poisson point process, with the correct intensity, thereby establishing the existence of Poisson eigenvalue statistics at almost every $E \in \Sigma_\lambda$.  

In this article, we will always work in the strong localization regime so that the deterministic spectrum is purely pure point and the eigenfunctions decay polynomially.

\begin{thm}\label{thm-poisson}
We consider either of the following settings,
\begin{itemize}
    \item[i)] Let $d\geq 1$ and $H_\lambda=H_{\beta,\lambda}$ be the long-range Anderson Model given in \eqref{op-h-beta} satisfying Hypotheses \ref{hypothesis:lr1} and \ref{hypothesis:pot1}, with $\beta>d/2$. 
    \item[ii)] Let $d=1$ and $H_\lambda=H_{\alpha,\lambda}$ be the fractional Anderson model given in \eqref{def:frac-am} satisfying Hypothesis \ref{hypothesis:pot1} with $\alpha\in (\frac{1}{2},1)$.
\end{itemize}
Then, there exists $\lambda_0>0$, depending on the parameters of the model, such that for all $\lambda>\lambda_0$, for almost every $E \in \Sigma_\lambda$, the point process $\xi(\Lambda, E)$ converges weakly as $\abs{\Lambda} \to \infty$ to a Poisson point process with intensity $n(E)dx$, where $n(E)>0$ is the density of states function of $H_\lambda$.
\end{thm}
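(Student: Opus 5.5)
The plan is to follow Minami's strategy. The idea is to approximate $\xi(\Lambda,E)$ by a superposition of independent point processes coming from small boxes. Poisson convergence then follows from a standard criterion for triangular arrays of independent processes. The Wegner and Minami estimates carry over verbatim from the diagonal case, because they use only the regularity of $\rho$ and the rank-one structure of $V_\omega$; the long range of $T_\beta$ plays no role in them. The only ingredient that has to be replaced is the exponential decay of the Green's function. We substitute the polynomial fractional-moment bound at strong disorder from \cite{Aizenman-Molchanov93, DERM24}, valid uniformly in finite volumes and for $z$ near the real axis:
\begin{equation*}
\E\bigl(|G^\Lambda(m,n;z)|^s\bigr)\le C\,\|m-n\|^{-(d+2\beta)}, \qquad 0<s<1 .
\end{equation*}
Part ii) is a special case of part i), because $(-\Delta)^\alpha$ satisfies Hypothesis \ref{hypothesis:lr1} with $\beta=\alpha$ \cite{Ciaurrietal18,GRM20}, and $\alpha>1/2=d/2$.

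The key steps are as follows.
\begin{enumerate}
\item Partition $\Lambda$ (side $L$) into $N^d$ disjoint subcubes $\Lambda_p$ of side $\ell=L^{\gamma}$ with $0<\gamma<1$. Remove boundary layers of width $w=\ell^{\kappa}$ with $\kappa<1$, and decouple: set $\tilde H=\bigoplus_p H^{\Lambda_p}$. The processes $\xi_p$ of the rescaled eigenvalues of $H^{\Lambda_p}$ are independent.
\item Compare Stieltjes transforms. It suffices to show that for $z$ in the upper half plane,
\begin{equation*}
\frac{1}{|\Lambda|}\,\E\Bigl|\tr\bigl(G^\Lambda(E+z/|\Lambda|)-\tilde G(E+z/|\Lambda|)\bigr)\Bigr|\to0 .
\end{equation*}
Using the resolvent identity, this difference equals the sum over $m\in\Lambda$ and over pairs $(k,j)$ with $k,j$ in different subcubes of $G^\Lambda(m,k)\,T_\beta(k,j)\,\tilde G(j,m)$. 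Following Minami, the diagonal elements are controlled by the fractional moment bound: Hölder plus the a priori bound $\E|G(m,m)|^{\tau}$ for $\tau<1$, via the spectral averaging/Wegner input, gives a bound through $\E|G|^s$ at the price of a factor $|\Lambda|^{1-s}$-type losses that must be beaten by the decay.
\item Count the error. Sites $m$ in the boundary layers have relative density $w/\ell$, which tends to $0$. For $m$ in a bulk region at distance $\ge w$ from $\partial\Lambda_p$, the bound involves $\sum_{k,j} \|m-k\|^{-(d+2\beta)}\|k-j\|^{-(d+2\beta)}\|j-m\|^{-(d+2\beta)}$, together with the $|\Lambda|$-scale loss from $\Im z/|\Lambda|$. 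Since $\beta>0$, the tail $\sum_{\|x\|\ge w}\|x\|^{-(d+2\beta)}$ is of order $w^{-2\beta}$. The long-range terms $T_\beta(k,j)$ with $k,j$ far apart also contribute, and they are summable because $\sum_j \|k-j\|^{-(d+2\beta)}<\infty$.
\item Conclude. Wegner gives $\sum_p \mathbb P(\xi_p(I)\ge1)\to n(E)|I|$ at Lebesgue points of the density of states where $n(E)>0$; this holds for a.e. $E\in\Sigma_\lambda$. Minami gives $\sum_p\mathbb P(\xi_p(I)\ge2)\le C N^d (|I|\ell^d/|\Lambda|)^2\to0$. The Poisson limit theorem for arrays then gives convergence to a Poisson process with intensity $n(E)\,dx$.
\end{enumerate}

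The main obstacle is step 3. The fractional moment estimates involve only $\E|G|^s$, but the Stieltjes transform at imaginary part $|\Lambda|^{-1}$ requires first moments, so we must interpolate through the Wegner bound. That bound costs a factor roughly $(|\Lambda|/\Im z)^{1-s}$. This factor must be absorbed by the polynomial gain $w^{-2\beta}$, and it is exactly here that $\beta>d/2$ is needed. The decay exponent $d+2\beta>2d$ makes the two-point and three-point convolution sums square-summable in a way that beats the volume factors. Concretely, I would first bound $\E|G^\Lambda(m,k)\tilde G(j,m)|$ by Cauchy--Schwarz together with a Combes--Thomas-free resolvent bound. I would then choose $\gamma$, $\kappa$ and $s$ close to $1$ so that $L^{d(1-s)}w^{-(2\beta-d)}\to0$. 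If that exponent bookkeeping fails, the fallback is to use the improved self-avoiding-walk bounds of \cite{DERM24}, which give faster effective decay of $\E|G|^s$ at strong disorder.
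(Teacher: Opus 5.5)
Your architecture (decouple into boxes, compare normalized traces in first moment, then a Poisson limit theorem for the array) can be made to work for $\beta>d/2$. As written, however, the decisive estimate rests on a false input and a miscounted loss, and the intensity step is unjustified.

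\textbf{Input and step 3.} The available bound is $\E|G^\Lambda(m,n;z)|^s\le C\|m-n\|^{-s(d+2\beta)}$ for $s\in(\frac{d}{d+2\beta},1)$ and $\lambda>\lambda_0(\beta,s)$. This is Theorem~\ref{thm:decay-long-range-GF}, from the self-avoiding-walk argument; the original fractional moment bound is weaker. It is also sharp: Theorem~\ref{thm:lb_green1} gives the matching lower bound. So your exponent $d+2\beta$ is false for $s<1$, and there is no faster decay to fall back on.

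Each term $G^\Lambda(m,k)T_\beta(k,j)\tilde G(j,m)$ contains two resolvents. Interpolating with the deterministic bound $\|G\|\le|\Lambda|/\Im\zeta$ (this is not Wegner) therefore costs $(|\Lambda|/\Im\zeta)^{2-s}$, not $(|\Lambda|/\Im\zeta)^{1-s}$. Concretely, bound $|\tilde G(j,m)|\le|\Lambda|/\Im\zeta$ and $|G^\Lambda(m,k)|\le(|\Lambda|/\Im\zeta)^{1-s}|G^\Lambda(m,k)|^s$. Then sum $|T_\beta(k,j)|$ over $j$, and $\E|G^\Lambda(m,k)|^s$ over $\|k-m\|>w$. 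For bulk sites this gives $\E|(G^\Lambda-\tilde G)(m,m)|\le C L^{d(2-s)}w^{-(s(d+2\beta)-d)}$.

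With $w=L^{\theta'}$ ($\theta'=\gamma\kappa$ in your notation), this bound tends to $0$ iff $\theta'(s(d+2\beta)-d)>d(2-s)$. That is achievable, with $s>\frac{3d}{2d+2\beta}$ and $\theta'$ near $1$, exactly when $\beta>d/2$: essentially a full volume factor $L^d$ must be beaten by $w^{-2\beta}$. So the route closes, but not via your condition $L^{d(1-s)}w^{-(2\beta-d)}\to0$. With your stated loss and decay it would close for every $\beta>0$, which signals that the accounting was off. For boundary-layer sites, use imaginary parts and the bound $\E\Im G(m,m;z)\le\pi\|\rho\|_\infty/\lambda$, which is uniform in $\Im z$.

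\textbf{Step 4.} Wegner only gives $\sum_p\E\xi_p(I)\le C_W|I|$. The limit $n(E)|I|$ is the content of Minami's condition \eqref{cond:2.47-minami}. It requires comparing $\frac{1}{|\Lambda_p|}\E\Im\tr G^{\Lambda_p}(E+\zeta/|\Lambda|)$ with the infinite-volume $\E\Im G(0,0;E+\zeta/|\Lambda|)$. The latter is a Poisson integral of $n$ at scale $|\Lambda|^{-1}$, which tends to $\pi n(E)$ at Lebesgue points. Your steps 2--3 only compare $G^\Lambda$ with $\tilde G$, so you must rerun the first-moment estimate with $\Z^d$ in place of $\Lambda$. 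The infinite-volume fractional-moment bound needed there follows by Fatou.

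\textbf{Comparison with the paper.} The paper never takes first moments in the $\xi$-versus-$\eta$ step. It bounds $\E|B_L|^{s/2}$ via subadditivity of $x\mapsto x^{s/2}$, the geometric resolvent identity and Cauchy--Schwarz. This incurs no $\Im z$-loss, only the factor $L^{d(1-s/2)}$ from normalization, and it yields convergence in probability, which suffices there. In that argument $\beta>d/2$ enters through summability of $\|x\|^{-\frac{s}{2}(d+2\beta)}$ after Cauchy--Schwarz halves the exponent, and the admissible range $s>\frac{2d}{d+2\beta}$ is wider than yours. Your corrected estimate pays an extra volume factor but yields $L^1$ control. That is exactly what the intensity statement $\E\tilde{\mathcal E}_\ell\to0$ needs, since an $s/2$-moment bound alone gives only convergence in probability.
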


Technically, we recall that the assumption of large $\lambda$ in Theorem \ref{thm-poisson} corresponds to the region where the Green's function of the operator decays polynomially.
This was proved in  \cite{DERM24} for the fAM, with a decay in the Green's function of the form $|x|^{-s(d+2\alpha)}$, which is stronger than the one obtained from the Fractional Moment Method, which is of the form $|x|^{-2s\alpha}$. We generalize this result to the lrAM in Theorem \ref{thm:spectral-loc-lram}, with improved bounds.
This improvement of an extra $d$ term in the decay exponent of the Green's function allows us to treat the error bounds, with a convenient summation argument that pushes the limits of Minami's proof.

In the Anderson model, the assumption of $\lambda > 0$ large corresponds to the strong disorder regime, where, in addition to exponential localization, dynamical localization holds throughout the deterministic spectrum (under some regularity conditions). 
We recall the following definition from \cite[Definition 2.4]{GKduke} in a version suitable for our setting.

\begin{defn}[Dynamical (de)localization]\label{def:dyn-loc}
\textit{We say that an ergodic operator $H_{\omega}$ on $\ell^2(\mathbb Z^d)$, parametrized by $\omega\in\Omega$ with $\Omega$ a probability space, exhibits \textbf{dynamical localization} in its almost-sure spectrum if}
\be\label{dynloc}
\mathbb E \left( \sup_{t\in\mathbb R} \norm{\angles{X}^{q/2} e^{-itH_{\omega}}P_0}^2   \right) <\infty , \quad \mbox{for all}\, ~~ q\geq 0 ,
\ee
\textit{where $\mathbb E(\cdot)$ denotes the expectation in $\Omega$ and $P_0$ is the rank-one projection onto the site $0\in\mathbb Z^d$. If there exists $q$ for which \eqref{dynloc} does not hold, we say the operator exhibits \textbf{dynamical delocalization.}}
\end{defn}
The expression on the left-hand-side of \eqref{dynloc} is called the $q$-th moment of the position operator. 
Note that dynamical localization, that is, the fact that all moments of the position operator are bounded, does not follow directly from the polynomial decay of the Green's function for fAM and lrAM in Theorems \ref{thm:decay-fractional-GF} and \ref{thm:decay-long-range-GF}. Despite showing that the fAM exhibits pure point spectrum and that the first moment of the position operator is bounded, in \cite[Remark after Theorem 2]{DERM24}, the authors \emph{conjecture that the fractional Anderson model does not exhibit dynamical localization}. This is in stark contrast with the usual Anderson model, where both features are present. In fact, it will follow from our analysis that, if $\beta>d/2$, in the strong disorder regime of Theorems \ref{thm:decay-fractional-GF} and \ref{thm:decay-long-range-GF}, \textbf{dynamical localization does not hold for the lrAM.} This is a consequence of the long-range property of the operator $T_\beta$.  Our method of proof uses a self-avoiding random walk expansion for the Green's function, and the lower bound in \eqref{eq:lr_kernel1-1} to show that the dominant term in the Green's function expansion is given by the shortest path, which in the non-local scenario of the lrAM, is a long jump. This yields our second main result.

\begin{thm}\label{thm-no_dl}
We consider any of the following settings,
\begin{itemize}
    \item[i)] Let $d\geq 1$ and $H_\lambda=H_{\beta,\lambda}$ be the long-range Anderson model given in \eqref{op-h-beta} satisfying Hypotheses \ref{hypothesis:lr1} and \ref{hypothesis:pot1}, with $\beta>d/2$. 
    \item[ii)] Let $d=1$ and $H_\lambda=H_{\alpha,\lambda}$ be the fractional Anderson model given in \eqref{def:frac-am} satisfying Hypothesis \ref{hypothesis:pot1} with $\alpha\in (\frac{1}{2},1)$.
\end{itemize}
Then, there exists $\lambda_1\geq \lambda_0$, depending on the parameters of the model,  such that for all $\lambda>\lambda_1$, $H_\lambda$ does not exhibit dynamical localization in its almost-sure spectrum $\Sigma_\lambda$.
\end{thm}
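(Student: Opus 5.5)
We will argue by contradiction. Suppose dynamical localization holds, i.e. \eqref{dynloc} is finite for all $q\geq0$. Using the standard relation between time-averaged moments and the Green's function (the Laplace-transform/Abel-average identity, as in \cite{GKduke}), we have
\[
\int_0^\infty e^{-t/T}\,\abs{\angles{\delta_n, e^{-itH_\lambda}\delta_0}}^2\,\frac{dt}{T}
=\frac{1}{\pi T}\int_{\R}\abs{G_\lambda(n,0;E+i/T)}^2\,dE .
\]
Finiteness of \eqref{dynloc} would then give, after summing over $n$ with weight $\angles{n}^q$ and taking expectation,
\[
\sup_{\eta>0}\ \eta\int_{\R}\E\Bigl(\sum_n \angles{n}^{q}\abs{G_\lambda(n,0;E+i\eta)}^2\Bigr)dE<\infty
\]
for every $q$. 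Alternatively, we would use the eigenfunction correlator $Q(n,0;I)=\sup_{|f|\leq1}\abs{\angles{\delta_n,f(H)1_I(H)\delta_0}}$. Dynamical localization implies $\sum_n\angles{n}^q\,\E\,\abs{\angles{\delta_n, P_{\{E_j\}}\delta_0}}^2<\infty$ for all $q$, and hence, via Hölder, rapid decay of $\E\, Q(n,0;I)^{s}$ for some $s<1$. Since $\E(Q^s)\geq c\,\E\,\abs{G(n,0;E+i0)}^s$ averaged over $E\in I$ (the standard comparison used in the FMM, combined with the a priori bound $Q\leq1$), it suffices to show that the fractional moments $\E\abs{G_\lambda(n,0;z)}^s$ are bounded below by a power law $c\,\|n\|^{-s(d+2\beta)}$, uniformly in $z$ near the spectrum. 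Polynomial decay of order $s(d+2\beta)$ contradicts rapid decay, giving dynamical delocalization for large $q$, namely $q> s(d+2\beta)-d$ in the correlator version.

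The key step is the lower bound on $\E\abs{G(n,0;z)}^s$, $n\neq0$. We use the resolvent identity isolating the direct jump. Write $H=H^{(0n)}+\Gamma$, where $\Gamma$ contains the hopping terms $T_\beta(0,n)$ and $T_\beta(n,0)$. The two-site Schur complement (Krein formula) gives
\[
G(n,0;z)=\frac{-\bigl(T_\beta(n,0)+A\bigr)}{(\lambda\omega_0-z-B_0)(\lambda\omega_n-z-B_n)-(T_\beta(0,n)+A)^2},
\]
where $A,B_0,B_n$ are built from $G^{(\{0,n\}^c)}$ and are independent of $\omega_0,\omega_n$. Here $A=\sum_{x,y\neq0,n}T(0,x)G^{c}(x,y)T(y,n)$ is the sum over all paths of length at least two. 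Averaging over $\omega_0,\omega_n$ with bounded density, and using that fractional moments of the reciprocal of an affine function of $\omega$ are bounded above and below on compact support, we obtain
\[
\E_{0,n}\abs{G(n,0)}^s\geq c\,\lambda^{-2s}\,\abs{T_\beta(0,n)+A}^s .
\]
It remains to show that $\E\abs{T_\beta(0,n)+A}^s\geq c'\abs{T_\beta(0,n)}^s$.

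For this, we use $\abs{a+b}^s\geq\abs{a}^s-\abs{b}^s$ together with the upper bound of the self-avoiding walk expansion (Theorem~\ref{thm:spectral-loc-lram}/\ref{thm:decay-long-range-GF}). These give $\E\abs{A}^s\leq C\lambda^{-s}\|n\|^{-s(d+2\beta)}$: the multi-jump paths have the same power-law decay but carry an extra factor $\lambda^{-s}$, and the convolution sums converge precisely because $\beta>d/2$ (so $s(d+2\beta)>d$ for $s$ close to $1$). Combined with the lower bound $\abs{T_\beta(0,n)}\geq c_{\beta,d}\|n\|^{-(d+2\beta)}$ from \eqref{eq:lr_kernel1-1}, this yields the power-law lower bound for $\lambda>\lambda_1$ large.

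We expect the main obstacle to be this competition between the direct jump and the multi-step paths. It requires an upper bound on the remainder $A$ whose decay rate exactly matches $\|n\|^{-s(d+2\beta)}$ (not merely $\|n\|^{-2s\beta}$), with a constant made small by $\lambda$. This is exactly the improved bound of \cite{DERM24}. For case ii), we apply the same argument with $\beta=\alpha$, using the kernel bounds of \cite{Ciaurrietal18}.
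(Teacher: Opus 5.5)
Your outer reduction matches the paper's. Dynamical localization bounds the Abel-averaged moments, Plancherel turns these into $\eta\int\mathbb{E}|G(n,0;E+i\eta)|^2\,dE$, and a power-law lower bound on $\mathbb{E}|G(n,0;z)|^s$, upgraded via $\mathbb{E}|G|^2\ge(\mathbb{E}|G|^s)^{2/s}$ (which you should state), makes the weighted sum over $n$ diverge for $q\ge d+4\beta$.

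The gap is in your key step: the bound $\mathbb{E}_{0,n}|G(n,0)|^s\ge c\,\lambda^{-2s}|T_\beta(0,n)+A|^s$ with a deterministic $c$. Write $G(n,0)=-\tau/D$ with $\tau=T_\beta(0,n)+A$. Averaging over $\omega_0,\omega_n$ gives at best (e.g.\ by Jensen) $\mathbb{E}_{0,n}|G(n,0)|^s\ge|\tau|^s/\mathbb{E}_{0,n}|D|^s$, where $\mathbb{E}_{0,n}|D|^s\lesssim(\lambda^s+|z|^s+|B_0|^s)(\lambda^s+|z|^s+|B_n|^s)+|\tau|^{2s}$. The lower bound on $\mathbb{E}|\lambda\omega-w|^{-s}$ degenerates like $|w|^{-s}$, and here the shifts $z+B_0$, $z+B_n$ and $\tau$ are random functions of the other sites, built from $G^{\{0,n\}^c}$. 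Uniformly in $\Im z$ your inequality is actually false. Its left side is at most $C_s\lambda^{-s}$ by the two-site a priori bound, which would force $|A|\lesssim\lambda$, whereas $|A|$ can be of order $(\Im z)^{-1}$ near eigenvalues of $H^{\{0,n\}^c}$. So the constant is random and correlated with $\tau$, and estimating $\mathbb{E}|\tau|^s$ separately afterwards does not close the argument. This decoupling is exactly what the paper's Lemma~\ref{Lem:generalapriori}, Corollary~\ref{Lem:apriorilower} and Lemma~\ref{Lem:decouplinglowerbounds} handle, via Cauchy--Schwarz against the inverse moment $\mathbb{E}|G(j,j;z)|^{-s}\le\lambda^s/A(s,B)$. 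The paper applies this to the one-step term $-T(0,n)G_\Lambda(0,0;z)G_{\Lambda\setminus\{0\}}(n,n;z)$ of the SAW expansion, and bounds the multi-step paths directly by $\theta_s/\lambda^s$ per vertex plus a long-step counting argument.

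Your route can be repaired. One value of $T$ suffices to contradict dynamical localization, so fix $\eta=\Im z>0$. Then $|B_0|,|B_n|,|A|\le C(1+\|T_\beta\|^2/\eta)$ deterministically, and for $|z|\lesssim\lambda$ you get $\mathbb{E}_{0,n}|D|^s\le C_\eta\lambda^{2s}$, i.e.\ your inequality with $c=c_\eta$. Alternatively, use a Chebyshev good-event argument with $\mathbb{E}|B_0|^s\le\theta_s\|T\|_s^{2s}\lambda^{-s}$. You must also extract the $\lambda$-dependence of the constant in Theorem~\ref{thm:decay-long-range-GF}, which is not stated there, to get $\mathbb{E}|A|^s\le C\lambda^{-s}\|n\|^{-s(d+2\beta)}$ with $C$ independent of $\lambda$. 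Those convolution sums need only $s(d+2\beta)>d$, not $\beta>d/2$ as you claim. Once repaired, your two-site argument works directly in infinite volume, with no analogue of Lemma~\ref{Lem:infinitevolcomparision}. As far as I can see, it never uses $\beta>d/2$, which the paper needs for the decoupling at exponent $s/2$ and for the volume comparison.

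Finally, drop the eigenfunction-correlator alternative. Dynamical localization controls $\mathbb{E}\sup_t$ of the weighted $\ell^2$-sum over $n$, and its time average gives $\sum_j|\varphi_j(n)\varphi_j(0)|^2$. Neither controls $Q(n,0;I)=\sum_j|\varphi_j(n)\varphi_j(0)|$, which is an $\ell^1$-sum over infinitely many eigenfunctions. H\"older does not bridge this, and the known implication runs from correlator decay to dynamical localization, not back. So your threshold $q>s(d+2\beta)-d$ is unsupported. The Plancherel route gives $q\ge d+4\beta$, as in the paper, which suffices for the theorem.
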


\begin{rem}

\begin{enumerate} 
\item In Hypothesis \ref{hypothesis:pot1}, the assumption of $\rho$ having compact support is made for convenience.  Theorem \ref{thm-no_dl} holds more generally under a suitable assumption on the moments of $\rho$ if it has  noncompact support. 

\item In our proof, we present a lower bound on the expectation of the fractional moments of the Green's function which holds in arbitrary complex vicinities of $\Sigma_\lambda$, see Theorem \ref{thm:lb_green1} below. This lower bound might be of independent interest. 

\item As Theorem \ref{thm-no_dl} covers a regime where the random operator is usually considered to be strongly localized, we expect that similar delocalization phenomena should hold for more general classes of operators possessing a power-law decaying kernel.
\end{enumerate}
\end{rem}

To prove Theorem \ref{thm-no_dl}, we consider the time-averaged moments of the $q$-th moments of the wave packets, given in \eqref{defn:moment1}, Section \ref{sec:deloc1}. Note that dynamical localization implies that the time-averaged $q$-th moments are also bounded for all $q\geq 0$. We  show in  Theorem \ref{thm:no_dl1} that, under the conditions of Theorem \ref{thm-no_dl}, all time-averaged $q$-th moments of the position operator with $q\geq d + 4 \beta$ in case (i) and $q\geq d+4\alpha$ in case (ii), are infinite. This yields Theorem \ref{thm-no_dl}.

For comparison, in
\cite[Section 7]{Kerner-Post-Sabri-Taufer25}, the authors study the behavior of the first moment of the position operator for the unperturbed fractional Laplacian $(-\Delta)^\alpha$ for $d=1$. This operator has only absolutely continuous spectrum. The authors show that the first moment of the position operator grows with $t$ (ballistic behavior) $\alpha> 1/4$, while it is infinite for all $t$ if $\alpha\leq 1/4$ (super-ballistic behavior). Theorem \ref{thm-no_dl}-case (ii) shows that for $\alpha \in (\frac{1}{2}, 1)$, the addition of a random potential creates superballistic motion for moments with $q \geq 1 +  4 \alpha$.

\subsection{Outline}

The remainder of this article is structured as follows: in the next Section we recall an upper bound on the decay of the Green's function for the fAM obtained in \cite{DERM24} and generalize it to operators whose hopping terms exhibit slow inverse polynomial decay, as the lrAM. We summarize the consequences of these upper bounds on spectral localization and decay of eigenfunctions. In Section \ref{sec:poisson1} we show a generalization of a result by Minami \cite{Minami96} on Poisson local eigenvalue statistics and apply it to our setting. In Section \ref{sec::lower1} we obtain lower bounds for the Green's function of operators whose hopping terms do not decay faster than an inverse polynomial, including fAM and lrAM. In Section \ref{sec:deloc1} we use the lower bounds on the Green's function to show that the time-averaged moments are unbounded for large disorder, and to obtain lower bounds on the eigenfunction correlations.



\section{Upper bounds on the Green's function for long-range random operators}\label{sec:long_range1}

In this section we recall results on the decay of the Green's function for the fractional Anderson Model and generalize them to the long-range Anderson Model.

We first recall the estimate obtained in \cite[Theorem 2 (i)]{DERM24} on the decay of the averaged fractional power of the Green's function for the fractional Anderson Model  $H_{\alpha,\lambda}$. While the original result is stated for the operator on the full space, the proof is also valid for any finite-volume restriction (see \cite[Section 4]{DERM24}).
Let $\Lambda=\Lambda_L \subset \mathbb Z^d$, a cube of side length $L$, and consider the finite-volume restriction $H^\Lambda$. We write $G^\Lambda(m,n;z)= \langle \delta_m, (H^{\Lambda}-z)^{-1} \delta_n \rangle$, for $m,n \in \Lambda_L$.

\begin{thm}[\cite{DERM24}] \label{thm:decay-fractional-GF} Let $H_{\alpha,\lambda}=(-\Delta)^\alpha+\lambda V_\omega$ be the fAM with $\alpha\in (0,1)$, with a random potential satisfying Hypothesis \ref{hypothesis:pot1}. For $s\in(\frac{d}{d+2\alpha},1)$, there exists  $\lambda_0(\alpha,s)>0$ such that for $\lambda>\lambda_0(\alpha,s)$ and all $m,n \in\mathbb Z^d$, $m \neq n$, we have
\be \label{GF-decay0}
\mathbb E\left(  \abs{G^\Lambda(m,n;z)}^s \right) \leq \frac{{C_{\lambda,d,\alpha,s}} }{\norm{m-n}^{s(d+2\alpha)}} 
\ee
uniformly in $z\in\mathbb C\setminus \mathbb R$, where $C_{\lambda,d,\alpha,s}$ is a finite constant depending on $\lambda,d,\alpha$ and $s$.
\end{thm}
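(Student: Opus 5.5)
The plan is to follow the self-avoiding walk (resolvent) expansion of Schenker \cite{schenker2009}, which gives a decay exponent $s(d+2\alpha)$ inherited directly from the kernel bound \eqref{eq:lr_kernel1-1}. This avoids the weaker exponent $2s\alpha$ that comes from the usual Aizenman--Molchanov iteration. All estimates below are uniform in $\Lambda$ and in $z\in\mathbb C\setminus\mathbb R$, so it suffices to work with $H^\Lambda$.

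\emph{Expansion.} Split $H^\Lambda = D + T_{\rm off}$, where $D$ is diagonal with entries $\lambda\omega_x + (T(0,0)-C_T)$ and $T_{\rm off}$ carries the hopping $T(x,y)$, $x\neq y$. For $S\subset\Lambda$, let $G^{(S)}$ denote the Green's function of the restriction of $H^\Lambda$ to $\Lambda\setminus S$. The geometric resolvent identity obtained by removing site $m$ gives
\[
G(m,n) = -\,G(m,m)\sum_{k\neq m} T(m,k)\, G^{(\{m\})}(k,n).
\]
Iterating this on $G^{(\{m\})}(k,n)$, and so on, expresses $G(m,n)$ as a sum over self-avoiding paths $\gamma=(x_0=m,x_1,\dots,x_\ell=n)$ in $\Lambda$ of
\[
\pm\prod_{j=0}^{\ell}G^{(\{x_0,\dots,x_{j-1}\})}(x_j,x_j)\prod_{j=0}^{\ell-1}T(x_j,x_{j+1}).
\]
Because $s<1$, we have $|\sum a_i|^s\le\sum|a_i|^s$, so
\[
\mathbb E|G(m,n)|^s \le \sum_\gamma \prod_{j}|T(x_j,x_{j+1})|^s\;\mathbb E\prod_{j}\bigl|G^{(\{x_0,\dots,x_{j-1}\})}(x_j,x_j)\bigr|^s .
\]

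\emph{Probabilistic step (main obstacle).} I need the product bound
\[
\mathbb E\prod_{j=0}^{\ell}\bigl|G^{(S_j)}(x_j,x_j)\bigr|^s\le (C_s\lambda^{-s})^{\ell+1}.
\]
Each diagonal factor has the form $G^{(S_j)}(x_j,x_j)=(\lambda\omega_{x_j}-\sigma_j)^{-1}$, where $\sigma_j$ is independent of $\omega_{x_j}$. The difficulty is that $\sigma_j$ depends on the later variables $\omega_{x_{j+1}},\dots,\omega_{x_\ell}$, so the factors are not independent. I would proceed in the following order:
\begin{itemize}
\item Integrate out the variables one at a time, starting from $\omega_{x_\ell}$, the last site.
\item Use the a priori bound $\sup_{\sigma}\mathbb E_{\omega}|\lambda\omega-\sigma|^{-s}\le C_s\lambda^{-s}$, valid for bounded, compactly supported $\rho$ and $s<1$.
\item Handle the dependence of earlier factors on $\omega_{x_j}$ with the decoupling lemma for rank-one perturbations. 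That lemma shows that, conditionally on the other variables, $\mathbb E_{\omega_x}\bigl[|\lambda\omega_x-\sigma|^{-s}\,\prod_i|A_i(\omega_x)|^{s}\bigr]$ is controlled by $C\lambda^{-s}$ times the corresponding decoupled quantity. Here the $A_i$ are Möbius functions of $\omega_x$; compact support of $\rho$ makes the numerators harmless.
\item If needed, apply Hölder to reduce to a slightly smaller exponent $s'<s$, still above $d/(d+2\alpha)$.
\end{itemize}

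\emph{Summation over paths.} Set $p=s(d+2\alpha)$. The hypothesis $s>\frac{d}{d+2\alpha}$ gives exactly $p>d$, and the kernel bound gives $|T(x,y)|^s\le C\|x-y\|^{-p}$. For $p>d$ the standard convolution estimate
\[
\sum_{y\ne x,z}\|x-y\|^{-p}\|y-z\|^{-p}\le K_p\|x-z\|^{-p}
\]
holds; split according to which of $\|x-y\|$, $\|y-z\|$ is at least $\|x-z\|/2$. Dropping the self-avoidance constraint and applying this estimate inductively, the sum over paths of length $\ell$ is at most $(C_s\lambda^{-s})^{\ell+1}C^\ell K_p^{\ell-1}\|m-n\|^{-p}$. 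Choose $\lambda_0(\alpha,s)$ so that $C_sC K_p\lambda_0^{-s}<1$. Then the geometric series over $\ell\ge1$ converges and gives \eqref{GF-decay0}, uniformly in $z$ and $\Lambda$.
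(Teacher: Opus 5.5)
Your overall route is the one behind the cited result: \cite{DERM24} expands $G^\Lambda(m,n;z)$ over self-avoiding paths, bounds $\mathbb E|G^\Lambda(m,n;z)|^s$ by the two-point function of a long-jump self-avoiding walk, and gets the decay from \cite[Lemma 2.4]{CS15}. In that walk the step weight is $|T(x,y)|^s\lesssim\|x-y\|^{-s(d+2\alpha)}$ and the fugacity is of order $\theta_s\lambda^{-s}$. Your convolution argument, valid because $p=s(d+2\alpha)>d$, is an elementary substitute for that lemma at small fugacity.

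The gap is in the step you flag as the main obstacle. You integrate starting from $\omega_{x_\ell}$, but every factor $G^{(S_j)}(x_j,x_j)$, $j\le\ell$, depends on $\omega_{x_\ell}$. So you would need to bound $\mathbb E_{\omega}\bigl[|\lambda\omega-\sigma|^{-s}\prod_{i<\ell}|A_i(\omega)|^s\bigr]$, which has up to $\ell$ M\"obius factors, by $C\lambda^{-s}$ times a decoupled quantity, with $C$ independent of $\ell$ and $z$. No such lemma is standard. The usual decoupling lemma handles a single ratio $|v-\alpha|^s/|v-\beta|^s$. A multi-factor version whose constant grows with the number of factors, used at each of the $\ell+1$ integrations, would produce something like $C^{\ell^2}$, which no choice of $\lambda_0$ absorbs into a geometric series. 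As written, your product bound is unjustified.

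The obstacle disappears if you integrate in the opposite order. Since $S_j=\{x_0,\dots,x_{j-1}\}$, the factor $G^{(S_j)}(x_j,x_j)$ does not depend on $\omega_{x_0},\dots,\omega_{x_{j-1}}$. So, conditionally on all other variables, only the $j=0$ factor depends on $\omega_{x_0}$. Integrate it out with $\sup_{\sigma\in\mathbb C}\mathbb E_\omega|\lambda\omega-\sigma|^{-s}\le\theta_s\lambda^{-s}$. Among the remaining factors, $\omega_{x_1}$ then appears only in the $j=1$ factor, and so on. This gives $(\theta_s\lambda^{-s})^{\ell+1}$ with no decoupling or H\"older step. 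It is exactly how step 2 of the proof of Theorem \ref{thm:lb_green1} handles the same products. With this change the rest of your argument goes through.

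One small fix is needed in the summation. Once you drop self-avoidance, walks can return to their starting point, where the pointwise bound $\|x-y\|^{-p}$ is unavailable. Run the induction with the weight $(1+\|x-y\|)^{-p}$ instead, for which the same splitting argument gives the convolution bound including the diagonal.
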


We recall an important consequence of the above result on the spectral localization for fAM at strong disorder,  \cite[Theorem 2 (ii)]{DERM24}, see also \cite[Corollary 2.4]{GRM20} :
\begin{thm}[\cite{DERM24},\cite{GRM20}] \label{thm:spectral-loc-fam}
 Let $H_{\alpha,\lambda}$ be as Theorem \ref{thm:decay-fractional-GF}. For $s\in(\frac{d}{d+2\alpha},1)$, there exists  $\lambda_0(\alpha,s)>0$ such that for $\lambda>\lambda_0(\alpha,s)$ the spectrum $\Sigma_\lambda$ of $H_{\alpha,\lambda}$ consists only of pure point spectrum with polynomially decaying eigenfunctions for a.e.\ $\omega\in\Omega$. 
\end{thm}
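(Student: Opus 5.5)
This result is the standard passage from fractional-moment bounds on the Green's function to spectral localization, in the spirit of Aizenman--Molchanov \cite{Aizenman-Molchanov93} and the Simon--Wolff criterion. I would therefore derive Theorem \ref{thm:spectral-loc-fam} directly from Theorem \ref{thm:decay-fractional-GF}.

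\emph{Step 1: infinite-volume bound.} Fix $s\in(\frac{d}{d+2\alpha},1)$ and $\lambda>\lambda_0(\alpha,s)$. For $z\notin\R$ we have $G^\Lambda(m,n;z)\to G(m,n;z)$ as $\Lambda\uparrow\Z^d$, by strong resolvent convergence. Fatou's lemma then transfers \eqref{GF-decay0} to the infinite-volume Green's function, uniformly in $z\in\mathbb C\setminus\R$.

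\emph{Step 2: summability.} Since $s(d+2\alpha)>d$, the bound is summable:
\[
\sum_{n}\mathbb E\big(|G(0,n;E+i\epsilon)|^s\big)<\infty
\]
uniformly in $\epsilon>0$ and $E$. I expect Steps 1 and 2 to be routine; the only real role of the lower bound on $s$ is to make this sum finite.

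\emph{Step 3: Simon--Wolff.} I would verify the Simon--Wolff criterion in the form used for fractional moments: for a.e.\ $E$, almost surely,
\[
\sup_{\epsilon>0}\sum_n |G(0,n;E+i\epsilon)|^2<\infty.
\]
To get this, I would first use the rank-one (Krein) formula and the regularity of $\rho$ from Hypothesis \ref{hypothesis:pot1} to bound $\mathbb E|G(0,n;z)|^s$ conditionally on $\omega_0$. Combining this with Fubini in $E$ and the elementary inequality $\sum|a_n|^2\le(\sum|a_n|^s)^{2/s}$ gives
\[
\sum_n |G(0,n;E+i0)|^s<\infty
\]
for a.e.\ $E$, almost surely. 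By the Simon--Wolff theorem, together with averaging over $\omega_0$ for the single-site distribution, the spectral measure of $\delta_0$ is then pure point, almost surely. Repeating the argument for every site, and using that $\{\delta_k\}$ is countable and cyclic in total, gives pure point spectrum on $\Sigma_\lambda$.

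\emph{Step 4: polynomial decay of eigenfunctions.} This is the step I expect to be the main obstacle. I would pass through the eigenfunction correlator
\[
Q(m,n)=\sup_{|f|\le1}|\langle\delta_m,f(H)\delta_n\rangle|.
\]
The Aizenman-type bound, which uses the rank-two perturbation argument and the regularity of $\rho$, gives
\[
\mathbb E\,Q(m,n)\le C\liminf_{\epsilon\to0}\int_{I}\mathbb E|G(m,n;E+i\epsilon)|^s\,dE\le C'\|m-n\|^{-s(d+2\alpha)}.
\]
In the long-range setting one has to check that this bound uses only the diagonal disorder and not locality of the hopping; I believe it does.

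Finally, for each eigenfunction $\psi_j$ with localization center $x_j$ (a maximizer of $|\psi_j|$), I would apply Borel--Cantelli to the summable weights $\|m\|^{-d-\delta}\,\mathbb E Q(m,n)$, over $m,n$ in shells. This yields
\[
|\psi_j(n)|\le C_{\omega,j}\,\|n-x_j\|^{-s(d+2\alpha)+d+\delta},
\]
which is polynomial decay. Some exponent is lost in passing from the averaged bound to a deterministic one, but decay remains since $s(d+2\alpha)>d$.
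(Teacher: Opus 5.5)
Your proposal is correct and takes essentially the route the paper relies on. The paper quotes this statement from \cite{DERM24,GRM20}, which pass from the fractional-moment bound of Theorem~\ref{thm:decay-fractional-GF} to spectral localization by the standard fractional-moment/Simon--Wolff machinery, and its own sketch for the analogous Theorem~\ref{thm:spectral-loc-lram} obtains eigenfunction decay exactly as in your Step 4: correlator bound \eqref{eq:ef_corr1}, Markov, Borel--Cantelli, giving an exponent just below $s(d+2\alpha)-d$ as you found.

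The only slip is in Step 3. Simon--Wolff needs $\lim_{\epsilon\downarrow 0}\|(H-E-i\epsilon)^{-1}\delta_0\|^2<\infty$. You get this from Fatou applied to $\liminf_{\epsilon\downarrow0}\sum_n|G(0,n;E+i\epsilon)|^s$, together with monotonicity of that quantity in $\epsilon$. Finiteness of the boundary-value sum at $E+i0$ does not suffice, since it only bounds the limit from below. Step 3 is in any case dispensable: the bound $\sum_n\mathbb{E}\,Q(0,n)<\infty$ from Step 4 already gives pure point spectrum via RAGE.
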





The proof of \cite[Theorem 2 (i)]{DERM24} relies on the connection between the decay of the fractional moment of the Green's function to the two-point correlation function of a self-avoiding random walk (SAW) with long jumps, where the transition probabilities between any two different sites is related to the hopping terms in the operator $H_{\alpha,\lambda}$, see \cite[Theorem 1]{DERM24}. In general, a SAW with long jumps is characterized by the transition probabilities $D(n,m)$ between two different sites $n,m\in\mathbb Z^d$, where the function $D:\mathbb Z^d\times \mathbb Z^d\rightarrow \mathbb R$ is non-negative, symmetric, translation invariant, and such that
\be M:=\sum_{n\neq 0} D(0,n) <\infty.
\ee
In the case of the long-range Anderson Model, the long-range hopping terms determine the transition probabilities of a SAW with long jumps, with a function $D$
by 
\be D(n,m)=\frac{C}{\norm{n-m}^{d+2\beta}},\quad \beta>0\ee for $m\neq n$, with $D(m,m)$ uniformly bounded in $m\in\mathbb Z^d$. With this representation, \cite[Theorem 1]{DERM24} can be readily generalized to long-range Anderson models with arbitrary $\beta>0$, as the associated two-point correlation function $C^D_\gamma(n)=\sum_{m\geq 0}c_m^D(n)\gamma^m$ is well defined for $\gamma<R$ where the radius of convergence $R$ satisfies $R>\frac{1}{M}$, see \cite[Eq. (3.7)]{DERM24}.
 Note that the proof is also valid for any finite-volume restriction to a cube $\Lambda$. In this setting, \cite[Lemma 2.4]{CS15} on the decay of the two-point correlation function of a SAW with long jumps still applies and yields the following improved bound. 

\begin{thm}\label{thm:decay-long-range-GF} Let $H_{\beta,\lambda}$ be the long-range random operator satisfying Hypotheses \ref{hypothesis:lr1} and \ref{hypothesis:pot1}. For $s\in(\frac{d}{d+2\beta},1)$ and $\beta > 0$, there exists  $\lambda_0(\beta,s)>0$ such that for $\lambda>\lambda_0(\beta,s)$ all $m,n\in\mathbb Z^d$, $n\neq m$, we have
\be \label{GF-decay0-LR}
\mathbb E\left(  \abs{G^\Lambda(m,n;z)}^s \right) \leq \frac{C_{\lambda,d,\beta,s}}{\norm{n-m}^{s(d+2\beta)}} ,
\ee
uniformly in $z\in\mathbb C\setminus \mathbb R$, where $C$ is a finite constant depending on $\lambda,d,\beta$ and $s$.
\end{thm}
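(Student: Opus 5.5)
We follow the strategy of \cite{DERM24}, which the statement generalizes. Fix $s\in(\frac{d}{d+2\beta},1)$ and $z\in\mathbb C\setminus\mathbb R$. The first step is the standard fractional-moment decoupling. From the resolvent identity, isolating the random coupling at $m$ gives
\[
G^\Lambda(m,n;z)=\frac{1}{\lambda\omega_m - z + T_\beta(m,m)-C_{T_\beta}}\Big(\delta_{mn}-\sum_{k\neq m}T_\beta(m,k)\,G^{\Lambda\setminus\{m\}}(k,n;z)\Big).
\]
This formula comes from the Schur complement, applied after one step of the geometric resolvent expansion in which site $m$ is removed. We then take $s$-th powers and use $|\sum a_k|^s\le\sum|a_k|^s$. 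Integrating over $\omega_m$ by the a priori bound $\sup_{\zeta}\int|\lambda v-\zeta|^{-s}\rho(v)\,dv\le C_s\lambda^{-s}$ gives
\[
\mathbb E|G^\Lambda(m,n)|^s\le \frac{C_s}{\lambda^s}\sum_{k\neq m}|T_\beta(m,k)|^s\,\mathbb E|G^{\Lambda\setminus\{m\}}(k,n)|^s .
\]
The integration over $\omega_m$ is legitimate because $G^{\Lambda\setminus\{m\}}$ does not depend on $\omega_m$. Iterating and removing a new site at each step produces an expansion over self-avoiding paths $m=x_0,x_1,\dots,x_\ell=n$. The weight of each path is $\prod_j \frac{C_s}{\lambda^s}|T_\beta(x_{j},x_{j+1})|^s$, and the final factor is bounded by the diagonal a priori estimate $\mathbb E|G(n,n)|^s\le C_s\lambda^{-s}$. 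This is the content of \cite[Theorem 1]{DERM24}. Its proof uses only the rank-one structure of $V_\omega$, the boundedness of $\rho$, and the bound $|T_\beta(x,y)|\le C_{\beta,d}\|x-y\|^{-(d+2\beta)}$. It is insensitive to the specific form of the fractional Laplacian, and it holds verbatim on any cube $\Lambda$, uniformly in $\Lambda$ and $z$.

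Next we identify the path sum as the two-point function of a self-avoiding walk with long jumps. We set
\[
D(x,y)=\|x-y\|^{-s(d+2\beta)},\qquad \gamma=\frac{C_sC_{\beta,d}^s}{\lambda^s}.
\]
Then $\mathbb E|G^\Lambda(m,n)|^s\le C\,C^D_\gamma(n-m)$. The condition $s(d+2\beta)>d$, which is exactly $s>\frac{d}{d+2\beta}$, makes $M=\sum_{n\neq0}D(0,n)$ finite. Choosing $\lambda_0(\beta,s)$ so that $\gamma M<1$ places $\gamma$ inside the radius of convergence $R>1/M$, so $C^D_\gamma$ is a convergent series.

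The final step, which I expect to be the main point, is to show that the series decays like a single long jump, namely
\[
C^D_\gamma(x)\le C\,D(0,x)=C\|x\|^{-s(d+2\beta)}.
\]
This is the claim of \cite[Lemma 2.4]{CS15}. The mechanism is the convolution inequality $\sum_y D(0,y)D(y,x)\le K D(0,x)$, which holds for power-law kernels with summable exponent. To prove it, we split the sum according to whether $\|y\|\ge\|x\|/2$ or $\|y-x\|\ge\|x\|/2$. In each region one factor is at most $2^{s(d+2\beta)}D(0,x)$, and the remaining factor sums to at most $M$. Iterating gives $D^{*\ell}(0,x)\le (2^{s(d+2\beta)+1}M)^{\ell-1}D(0,x)$, since walk weights are dominated by convolution powers of $D$. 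Summing the geometric series in $\ell$ converges once $\gamma\,2^{s(d+2\beta)+1}M<1$, and this is arranged by enlarging $\lambda_0$. The resulting bound is exactly \eqref{GF-decay0-LR}, uniform in $z$ and $\Lambda$.
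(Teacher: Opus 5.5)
Your proposal follows the same route as the paper. The paper invokes the fractional-moment bound of \cite[Theorem 1]{DERM24}, which dominates $\mathbb E|G^\Lambda(m,n;z)|^s$ by a long-jump self-avoiding-walk two-point function with step weight proportional to $\|x-y\|^{-s(d+2\beta)}$. It then applies the decay $C^D_\gamma(x)\le C\,D(0,x)$ from \cite[Lemma 2.4]{CS15}, which you re-derive from the convolution inequality for small $\gamma$ (i.e.\ large $\lambda$).

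There are two small slips, neither of which affects the outcome.

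\textbf{First display.} The prefactor must be $G^\Lambda(m,m;z)$, and the identity holds only for $n\neq m$, without the $\delta_{mn}$ term. The inverse of $G^\Lambda(m,m;z)$ also contains the self-energy term $-\sum_{k,l\neq m}T_\beta(m,k)G^{\Lambda\setminus\{m\}}(k,l;z)T_\beta(l,m)$. This term does not depend on $\omega_m$, so the denominator still has the form $\lambda\omega_m-\zeta$ with $\zeta$ independent of $\omega_m$, and your one-step inequality stands.

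\textbf{Induction step.} Run the induction directly on the self-avoiding weights rather than on bare convolution powers. Use that for a self-avoiding path $0=x_0,\dots,x_{\ell+1}=x$ with $\ell\ge 1$, the vertex $x_\ell$ is not $0$. With the convention $D(0,0)=0$, bare convolution powers fail: $D^{*2}(0,0)>0=D(0,0)$ breaks the bound $D^{*\ell}(0,y)\le K^{\ell-1}D(0,y)$ at $y=0$.
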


Note that, just as in \cite{DERM24}, the use of a SAW representation gives an improvement on the upper bound on the fractional moment of the Green's function obtained in \cite[Theorem 3.1]{Aizenman-Molchanov93}. Following the proof of \cite[Theorem 2]{DERM24}, we obtain the results on the spectral localization for the long-range Anderson Model at large disorder and a stronger decay estimate on the eigenfunctions. 


\begin{thm} \label{thm:spectral-loc-lram}
Let $H_{\beta,\lambda}$ be the long-range random operator satisfying Hypotheses \ref{hypothesis:lr1} and \ref{hypothesis:pot1}. For $s\in(\frac{2d}{d+2\beta},1)$, and $\beta > \frac{d}{2}$, there exists  $\lambda_0(\beta,s)>0$ such that for $\lambda>\lambda_0(\beta,s)$ the following holds:
\begin{itemize}
    \item[i.] The spectrum $\Sigma_\lambda$ of $H_{\beta,\lambda}$ consists only of pure point spectrum for a.e. $\omega\in\Omega$. 

    \item[ii.] The eigenfunctions $\varphi_{k,\omega}$ associated to the eigenvalues of $H_{\beta,\lambda}$ satisfy, for some $n_{k,\omega}\in\mathbb Z^d$, for some constant $C_{k,\omega,n_{k,\omega},s}>0$, all $n \neq n_{k,\omega}$, and $\frac{2d}{d + 2 \beta} < s'<s < 1$ and $ s - s' > \frac{d}{d+2\beta}$,
 \be\label{eq:aeupperboundphi}
 \abs{\varphi_{k,\omega}(n)} \leq \frac{C_{k,\omega,n_{k,\omega},s}}{\norm{n-n_{k,\omega}}^{s'(d+2\beta )}} ,
\ee
almost surely. Thus, $\abs{\varphi_{k,\omega}(n)}$ decays (at least) at a rate arbitrarily close to $2\beta \approx d$.  

\item[iii.] For $0<q< 2\beta$, for a.e. $\omega\in\Omega$, we have
\be \sum_{n\in\mathbb Z^d\setminus\{0\}} \|n\|^q \abs{e^{-itH_{\beta,\lambda}}(0,n)}^2 < \infty.
\ee
\end{itemize}
\end{thm}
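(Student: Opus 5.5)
The plan is to derive all three items from the fractional-moment bound of Theorem \ref{thm:decay-long-range-GF}, following the scheme of \cite[Theorem 2]{DERM24}. The key observation is that for $s>\frac{2d}{d+2\beta}$ we have $s(d+2\beta)>2d$. Hence the kernel $\norm{m-n}^{-s(d+2\beta)}$ is summable in $n$, and it stays summable after multiplication by an extra factor $\norm{n}^{d+\epsilon}$ or similar. This summability drives the whole argument.

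\textbf{Step 1: eigenfunction correlators.} I will pass from the fractional moments of the Green's function to the eigenfunction correlator
$$Q(m,n;I)=\sup_{|f|\le 1}\abs{\angles{\delta_m, f(H)P_I(H)\delta_n}}.$$
For finite volumes $\Lambda$, the standard rank-two perturbation / spectral averaging argument of Aizenman (see \cite{AW15}) gives
$$\mathbb E\big(Q_\Lambda(m,n;I)\big)\le C\liminf_{\eta\downarrow 0}\int_I \mathbb E\big(\abs{G^\Lambda(m,n;E+i\eta)}^{s}\big)\,dE .$$
This uses the bounded density $\rho$ and $|G|\le$ a.s.\ $(\lambda|\omega_m-\omega_m'|)^{-1}$-type bounds; compact support makes all moments finite. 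Inserting \eqref{GF-decay0-LR} yields
$$\mathbb E(Q_\Lambda(m,n))\le C\norm{m-n}^{-s(d+2\beta)},$$
uniformly in $\Lambda$, and the bound passes to $\Z^d$ by strong resolvent convergence. The main obstacle sits here. The classical correlator bound uses a fractional moment with exponent $s$ in terms of one with a smaller exponent, so the decay exponent degrades. I will carry the proof with $s$ and absorb the loss into the constants, or alternatively use the Hölder interpolation $\mathbb E|G|^{s'}$ against $\mathbb E|G|^{s}$. This interpolation is where the restriction $s'<s$, $s-s'>\frac{d}{d+2\beta}$ in (ii) should come from.

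\textbf{Step 2: items (i) and (iii).} Since $\sum_n \mathbb E(Q(0,n))<\infty$, the RAGE/Simon--Wolff-type criterion applies: summable correlators imply that $\delta_0$ has only pure point spectral measure. By translation invariance the same holds for every $\delta_m$, which gives (i). For (iii), note that $\abs{e^{-itH}(0,n)}\le Q(0,n)\le 1$. Therefore
$$\mathbb E\Big(\sum_{n\neq 0}\norm{n}^q\abs{e^{-itH}(0,n)}^2\Big)\le \sum_{n\neq 0}\norm{n}^q\,\mathbb E(Q(0,n)) \le C\sum_{n\neq 0}\norm{n}^{q-s(d+2\beta)}.$$
The right side is finite when $q<s(d+2\beta)-d$. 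Choosing $s$ close to $1$ covers every $q<2\beta$, and the almost sure finiteness follows.

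\textbf{Step 3: item (ii).} I will follow the SULE-type argument. Set
$$W(m,n)=\sup_t\abs{\angles{\delta_m,e^{-itH}\delta_n}}\le Q(m,n).$$
By Step 1 with the interpolated exponent, $\mathbb E(Q(m,n))\le C\norm{m-n}^{-s'(d+2\beta)}$, so
$$\mathbb E\sum_{m,n}\frac{Q(m,n)\,\norm{m-n}^{s'(d+2\beta)}}{\angles{m}^{d+\epsilon}}<\infty .$$
The corresponding random variable is therefore finite almost surely. For an eigenfunction $\varphi_k$, choose a localization center $n_k$ where $\abs{\varphi_k}$ is maximal; it exists because $\varphi_k\in\ell^2$. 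Then
$$\abs{\varphi_k(n)}\abs{\varphi_k(n_k)}\le Q(n_k,n),$$
since $P_{\{E_k\}}$ is dominated by the correlator. This gives
$$\abs{\varphi_k(n)}\le C_{k,\omega}\,\angles{n_k}^{d+\epsilon}\,\norm{n-n_k}^{-s'(d+2\beta)},$$
which is \eqref{eq:aeupperboundphi}. The remaining care lies in checking that the summability condition $s-s'>\frac{d}{d+2\beta}$ is exactly what makes the double sum finite after interpolation.
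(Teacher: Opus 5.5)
\textbf{Step 3 does not go through as written.} The problem is the exponent bookkeeping you set up in Step 1. You claim $\mathbb{E}(Q(m,n))\le C\|m-n\|^{-s'(d+2\beta)}$ ``with the interpolated exponent'', and then you weight by $\|m-n\|^{s'(d+2\beta)}$. But then
\[
\mathbb{E}\sum_{n\neq m}Q(m,n)\,\|m-n\|^{s'(d+2\beta)}\le C\sum_{n\neq m}1=\infty ,
\]
so nothing is shown to be finite, for any choice of $s,s'$.

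Your diagnosis in Step 1 is also off. H\"older interpolation $\mathbb{E}|G|^{s'}\le(\mathbb{E}|G|^{s})^{s'/s}$ only produces the weaker decay $\|m-n\|^{-s'(d+2\beta)}$. It is not where the hypothesis $s-s'>\frac{d}{d+2\beta}$ comes from. Moreover, a loss in a decay exponent cannot be ``absorbed into the constants''.

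\textbf{The repair is to lose nothing.} The correlator bound \eqref{eq:ef_corr1} (Aizenman--Warzel, Theorem 7.7) controls $\mathbb{E}(Q(m,n;I))$ by the $s$-th fractional moment with the same $s$. Hence $\mathbb{E}(Q(m,n))\le C\|m-n\|^{-s(d+2\beta)}$, which is exactly what you already used in Step 2. Weighting by $\|m-n\|^{s'(d+2\beta)}$ then gives
\[
\mathbb{E}\sum_{m\neq n}\frac{Q(m,n)\,\|m-n\|^{s'(d+2\beta)}}{\langle m\rangle^{d+\epsilon}}\le C\sum_{m}\langle m\rangle^{-d-\epsilon}\sum_{n\neq m}\|m-n\|^{-(s-s')(d+2\beta)},
\]
which is finite exactly when $(s-s')(d+2\beta)>d$. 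This summability is the true origin of the condition in (ii).

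\textbf{Comparison with the paper.} The paper reaches the same condition differently. It applies Markov's inequality with $t_n=C|I|\,\|m-n\|^{-s(d+2\beta)+\varepsilon}$, where $\varepsilon=(s-s')(d+2\beta)>d$. It then uses Borel--Cantelli for each fixed $m$ and intersects the resulting full-measure sets over $m$. Once corrected, your single weighted first-moment sum is an equivalent and slightly sharper packaging. It makes the dependence of the constant on the localization center explicit, $C\sim\langle n_k\rangle^{d+\epsilon}/|\varphi_k(n_k)|$. Note that the step $|\varphi_k(n)\varphi_k(n_k)|\le Q(n_k,n)$ uses a.s.\ simplicity of the eigenvalues.

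\textbf{Items (i) and (iii).} Your arguments are fine. Note only that for fixed $s$ they give (iii) just for $q<s(d+2\beta)-d<2\beta$. Covering all $q<2\beta$ therefore requires $s\uparrow 1$, and hence a $q$-dependent $\lambda_0$.
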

\begin{rem}\label{rem:bounded-moments-lram}
Results (i) and (iii) hold for $s \in ( \frac{d}{d+ 2 \beta} , 1 )$ and for $\beta > 0$. 
  Note that Theorem \ref{thm:spectral-loc-lram} (iii) implies the boundedness of the first $\floor{2\beta}$ moments of a wave function initially localized at the origin and evolving under the action of $H_{\beta,\lambda}$, with $\beta>0$. 
\end{rem}

As mentioned, Yueng and Oono \cite{Yeung-Oono-87} conjectured that the eigenfunctions should decay at the same rate as the hopping term $T_\beta$. That is, one expects that a typical eigenfunction $\varphi_{k,\omega}(x)$ is bounded above as
\beq\label{eq:ef_opt1}
|\varphi_{k,\omega}(m)|  \leq \frac{C_{k,\omega}}{\|m\|^{s(d + 2 \beta)}} .
\eeq
For $s$ close to one, this rate of decay is approximately $d + 2\beta \sim 2d$. Hence, the almost sure power-law upper bound in part (ii) of Theorem \ref{thm:spectral-loc-lram} is not expected to be optimal for most configurations $\omega$. Indeed, we prove a better rate of eigenfunction decay but only with a good probability $t_n$, with $t_n \to 1$, as $|n| \to \infty$. 
We sketch the proof of this improved bound 
from the Green's function \eqref{GF-decay0-LR} and eigenfunction correlator bounds, see section \ref{subsec:lb_ec} for the definitions. The local eigenfunction correlator $Q(m,n;I)$ for an interval $I \subset \Sigma_\lambda$ is bounded above by the Green's function. It is proved in \cite[Theorem 7.7]{AW15} that 
\beq\label{eq:ef_corr1}
\E(Q(m,n;I)) \leq C_s  \liminf_{|\eta| \to 0} \int_I \E ( |G (m,n;E+i \eta)|^s) ~dE. 
\eeq
Using \eqref{GF-decay0-LR} and Markov's inequality, one sees that for each $n \neq m$ and any $t_n>0$, 
\beq\label{eq:correlator_bd1}
\mathbb{P}(Q(m,n;I)>t_n)\leq \frac{1}{t_n} \left( \frac{C_{\lambda,d,\beta,s}|I|}{\norm{m-n}^{s(d+2\beta)}} \right).
\eeq
In particular, we fix $m$, take any $\varepsilon>0$, and set $t_n=\frac{C_{\lambda,d,\beta,s}|I|}{\norm{m-n}^{s(d+2\beta)-\varepsilon}}$. It follows from \eqref{eq:correlator_bd1} that  
\beq\label{eq:correlator_bd2}
\mathbb{P} \left( Q(m,n;I)>\frac{C_{\lambda,d,\beta,s}|I|}{\norm{m-n}^{s(d+2\beta)-\varepsilon}} \right) \leq \frac{1}{\norm{m-n}^{\varepsilon}}.
\eeq
 From \eqref{eq:correlator_bd2}, we derive two conclusions on eigenfunction upper bounds:
 \begin{itemize}
     \item[i)] Large probability result.  For any $\beta>0$ and any$\varepsilon>0$:
     $$\lim_{n\to \infty}\mathbb{P} \left( Q(m,n;I)\leq \frac{C_{\lambda,d,\beta,s}|I|}{\norm{m-n}^{s(d+2\beta)-\varepsilon}} \right) = 1, 
     $$
     which yields a probabilistic version of the upper bound in Yeung and Oono's conjecture \cite{Yeung-Oono-87}.
     
     \item[ii)] Almost sure result. For $\beta>d/2$, we choose $s',s\in (\frac{2d}{2+2\beta},1)$, with $s'<s$. Setting $\varepsilon=(s-s')(d+2\beta)$ and requiring that $\varepsilon > d$, we see from \eqref{eq:correlator_bd2} that
     $$
     \sum_{n\in \mathbb{Z}^d\setminus\{m\}}\mathbb{P}(Q(m,n;I)>t_n)<\infty .
     $$
     An application of the Borel-Cantelli Lemma yields the existence of a full measure set $\Omega_m$ such that  $\omega \in \Omega_m$ implies
     \begin{equation}{\label{eq:correpoitaeb}} Q(m,n;I)\leq \frac{C_{\lambda,d,\beta,s,\omega}|I|}{\norm{m-n}^{s'(d+2\beta)}}.
     \end{equation}
     Let $\Omega'=\cap_{m\in \mathbb{Z}^d} \Omega_m$. Then $\Omega'$ is a full measure set such that \eqref{eq:correpoitaeb} hold for all $\omega\in \Omega'$ and all $m\in \mathbb{Z}^d$.
     Upon expanding the correlator in terms of eigenfunctions, we have
{}
the bound \eqref{eq:aeupperboundphi} by picking a center of localization $m$ for $\varphi_k$. }
\end{itemize}

\begin{rem}\label{rem:yo_conj1} We compare this eigenfunction decay result with prior bounds for eigenfunctions of the lrAM with power-law hopping.

\begin{enumerate}
    \item  Aizenman and Molchanov \cite[section 3]{Aizenman-Molchanov93}
proved that for kernels $T_\beta$ satisfying Hypothesis \ref{hypothesis:lr1}, the eigenfunctions are bounded above by a power-law as in \eqref{eq:ef_opt1} but with exponent $[s (d + 2 \beta) - d] \gtrsim d$, almost surely, similar to the result in part (ii) of Theorem \ref{thm:spectral-loc-lram}. This follows from their FMM and an application of the Simon-Wolff criterion. 

\item A similar almost sure power-law exponent of $\frac{1}{2}[s (d + 2 \beta) - d ] \gtrsim \frac{d}{2}$ was proven in \cite{DERM24} for the frAM.  

\item Shi \cite{Shi21} studied long-range hopping terms with a power-law decay exponent $r > d$ ($r = d + 2 \beta$ in our notation). For $r > 331d$, Shi \cite[Theorem 2.5]{Shi21} proved an almost sure bound on the decay of the eigenfunctions with exponent $\frac{r}{600}$ using MSA.  

\item  Jian and Sun \cite{jian_sun_PAMS22} improved Shi's result obtaining a SULE-type estimate involving the centers of localization and a power-law decay exponent in $[0, \frac{r}{160}]$, almost surely, when $r > 331d$. 

\end{enumerate}
\end{rem}


\section{Poisson local eigenvalue statistics for the case of long-range hopping terms} 
\label{sec:poisson1}

 In this section, we prove a generalization of the theorem in \cite{Minami96} on local eigenvalue statistics for the Anderson model on $\ell^2(\Z^d)$ to the long range setting. Minani's result states that the point process $\xi(\Lambda, E)$, given in \eqref{def:pp-xi}, converges to a Poisson point process provided $E$ is in the localized region of the deterministic spectrum. We prove an analogous result for the lrAM in this section establishing Theorem \ref{thm-poisson}.

 We recall that for a given self-adjoint operator $H$, $H^X$ denotes its restriction to the set $X\subseteq \mathbb Z^d$ with simple boundary conditions, that is, $H^X=1_{X}H1_{X}$ as an operator on $\ell^2(X)$. We will often consider the cube of side-length $L$, $\Lambda_L=[-\ceil{\frac{L}{2}},\ceil{\frac{L}{2}}]^d\cap\mathbb Z^d$ and for simplicity, write $H^L$ instead of $H^{\Lambda_L}$. The matrix elements of $H$ are denoted $H(n,m)$. The Green's function of $H^X$ is denoted $G^X(z)$ with matrix elements $G^X(n,m;z)$

\begin{thm}\label{thm:gen-minami}
Let $H_{\beta,\lambda}$ be a random operator of the form $H_{\beta,\lambda}=T_\beta +\lambda V_\omega$ with $T_\beta$ verifying, for some constants $\beta>0$, $C_T>0$,
\be\label{eq:decay-T}  \abs{T_\beta(n,m)}\leq \frac{C_T}{\norm{n-m}^{d+2\beta}} ,\quad \mbox{for all }n\neq m \in\mathbb Z^d\ee
with  $\lambda >0$ and $V_\omega$ satisfying Hypothesis \ref{hypothesis:pot1} with a bounded probability density $\rho$. Denote by $\Sigma_{\beta,\lambda}\subset  \mathbb R$ its almost sure spectrum, and assume that for $E\in \Sigma_{\beta,\lambda }$ the following holds,
\begin{itemize}
\item[(i)] the density of states function for $H_{\beta,\lambda}$, denoted by $n(E)$, exists and is positive $n(E) > 0$ at $E$. 
\item[(ii)] the Green's function $G_{\Lambda_L}$ of the restriction of $H_{\beta,\lambda}$ to a cube $\Lambda_L$ satisfies the following decay estimate: there are finite constants $s\in (\frac{2d}{d + 2 \beta}, 1)$, $C>0$, and $r>0$,  such that
\be \label{GF-decay}
\mathbb E\left(  \abs{G^{\Lambda_L}(n,m;z)}^s\right) \leq \frac{C}{\norm{n-m}^{s(d+2\beta)}}  ,
\ee
for all cubes $\Lambda_L \subseteq \mathbb Z^d$ and all $n,m\in\Lambda_L$, with $n\neq m$ and $z$ with $Im(z)\neq 0$ such that $\abs{z-E}<r$. 
\end{itemize}
Then, if $\beta > \frac{d}{2}$, the point process $\xi(\Lambda_L, E)$ converges weakly as $L \to \infty$ to a Poisson point process with intensity measure $n(E)dx$.
\end{thm}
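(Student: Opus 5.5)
We follow Minami's strategy, in the form streamlined by Killip--Nakano and Germinet--Klopp: decompose $\Lambda_L$ into $N^d$ disjoint subcubes $C_p$ of side $\ell$, with $N = L/\ell$ and $\ell = L^{\kappa}$ for some $\kappa\in(0,1)$ to be chosen. We then show three things. First, $\xi(\Lambda_L,E)$ is asymptotically equal to the superposition $\sum_p \xi_p$, where $\xi_p$ is the point process of $H^{C_p}$ rescaled by $|\Lambda_L|$. Second, these $\xi_p$ are independent. Third, they form an array satisfying $\max_p \mathbb P(\xi_p(I)\ge 1)\to 0$, $\sum_p\mathbb P(\xi_p(I)\ge 1)\to n(E)|I|$ and $\sum_p \mathbb P(\xi_p(I)\ge 2)\to 0$. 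The standard Poisson limit theorem for such arrays then gives the claim.

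Independence is automatic, since $\omega$ is iid and the $C_p$ are disjoint. The intensity condition comes from the Wegner estimate together with the existence and positivity of $n(E)$ (assumption (i)), using the usual argument with $\mathbb E\tr 1_I$. The condition on $\mathbb P(\xi_p(I)\ge2)$ follows from Minami's estimate $\mathbb P(\tr 1_J(H^{C})\ge2)\le C(\|\rho\|_\infty|J||C|)^2$. Both the Wegner and Minami estimates use only the rank-one structure of $V_\omega$ and the boundedness of $\rho$; the hopping operator enters only as a fixed self-adjoint part, so it may be long range. With $|J| = |I|/L^d$, the Minami bound gives $\sum_p (|C_p|/L^d)^2 = \ell^d/L^d \to 0$.

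The main step, and the main obstacle, is comparing the characteristic functionals. Following Minami, it suffices to show that for $z = E + (x+iy)/L^d$ with $y>0$ fixed,
\[
\frac{1}{L^d}\,\mathbb E\Big|\tr G^{\Lambda_L}(z) - \sum_p \tr G^{C_p}(z)\Big| \to 0 .
\]
We use the geometric resolvent identity $G^{\Lambda}-\bigoplus_p G^{C_p} = -G^{\Lambda}\, \Gamma\, \bigoplus_p G^{C_p}$. Here $\Gamma$ is the hopping between distinct subcubes; unlike the nearest-neighbor case, it is not supported on a thin boundary but has entries $|T(u,v)|\le C_T\|u-v\|^{-(d+2\beta)}$ for all $u\in C_p$, $v\in C_{p'}$ with $p\neq p'$. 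The diagonal matrix elements are therefore bounded by $\sum_{u,v}|G^\Lambda(n,u)||T(u,v)||G^{C_p}(v,n)|$. Two difficulties arise. The Green's functions are only controlled through fractional moments, and since $\operatorname{Im} z\sim L^{-d}$ the a priori bound is $|G|\le L^d/y$. Following Minami, we handle this by interpolation, writing $|G|=|G|^{s}|G|^{1-s}$, so that
\[
\mathbb E|G(n,u)G(v,n)|\le (L^d/y)^{2-2s'}\,\mathbb E\big(|G(n,u)|^{s'}|G(v,n)|^{s'}\big),
\]
with a small loss. To decouple the two Green's functions we would first try the Cauchy--Schwarz inequality combined with the bound (ii) at exponent $s'$, or alternatively Minami's rank-two resampling trick, which averages over $\omega_n$ to bound the joint moment by a product of single moments. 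Each term then decays like $\|n-u\|^{-s(d+2\beta)/2}\,\|u-v\|^{-(d+2\beta)}\,\|v-n\|^{-s(d+2\beta)/2}$, up to the interpolation factor.

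The summation is then organized by the distance of $n$ to $\partial C_p$. The hypothesis $s(d+2\beta)>2d$, that is $s>\tfrac{2d}{d+2\beta}$, makes each half-exponent $s(d+2\beta)/2$ exceed $d$, so the sums over $u$ and $v$ converge. Meanwhile, $\beta>d/2$ ensures that $\sum_{u\in C_p}\sum_{v\notin C_p}\|u-v\|^{-(d+2\beta)}$ is of order $\ell^{d-1}\cdot O(1)$ up to logarithms when $2\beta>1$. More generally, the contribution per site $n$ at distance $k$ from the boundary decays like $k^{-\gamma}$ with $\gamma>d$... more precisely with $\gamma$ large enough to be summable. Summing over $n\in C_p$ then gives $o(\ell^d)$, so after summing over $p$ and dividing by $L^d$ the error is $o(1)$ times $L^{d(2-2s')}$-type losses.

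The delicate part is choosing $s'$ close to $s$ and $\kappa$ so that the polynomial gain beats the interpolation loss. I expect this balancing to be exactly where $\beta>d/2$ enters, and I would verify it by explicit exponent bookkeeping. If the loss cannot be beaten, the fallback is Minami's original trick: replace $\tr G$ by $\operatorname{Im}\tr G$, and use the spectral averaging bound $\mathbb E_{\omega_n}|G(n,n)|\le C$ together with rank-one perturbation formulas, so that only one off-diagonal Green's function carries a fractional power.
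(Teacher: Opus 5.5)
There is a genuine gap in your main step. You aim for $L^1$ convergence of the full trace difference and try to reach it by interpolating against $|G|\le L^d/y$, calling the cost a small loss. The loss is not small. The joint moment $\E(|G^{\Lambda_L}(n,u)|^{s'}|G^{C_p}(v,n)|^{s'})$ can only be reduced to the single moments controlled by (ii) when $2s'\le s<1$. This holds whether you use Cauchy--Schwarz, H\"older, or the rank-one formula in $\omega_n$, which produces $|G(n,n)|^{2s'}$. So the interpolation factor is at least $(L^d/y)^{2-s}$, which grows faster than $L^d$. You apply this factor at every site, including the roughly $L^d/\ell$ sites within distance $O(1)$ of some $\partial C_p$. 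For those sites (ii) gives no decay: take $u$ a neighbor across the boundary and $v=n$. These sites alone make your bound of order $L^{d(2-s)}/\ell$, which diverges for every $\ell\le L$. No choice of $s'$, of the subcube exponent, or of $\beta$ repairs this. Your fallback is too vague to close the gap, and one of its ingredients is false. $\E_{\omega_n}|G(n,n;z)|$ is not bounded uniformly as $\Im z\downarrow 0$; it can grow like $\log(1/\Im z)$. The uniform bound $\pi\|\rho\|_\infty/\lambda$ holds only for $\E_{\omega_n}\Im G(n,n;z)$.

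The missing idea is Minami's actual scheme, used in Proposition~\ref{prop:resolv_diff1}. Convergence \emph{in probability} of $\frac{1}{L^d}[\Im\tr G^{\Lambda_L}-\sum_p\Im\tr G^{\mathcal C_p}]$ suffices, combined with a security zone of width $\ell_L=L^{\theta'}$ inside each subcube.
\begin{itemize}
\item Sites in the boundary layer are handled in $L^1$ by positivity of $\Im G$ and spectral averaging. They contribute $O(\ell_L/\ell)\to 0$; this is the term $A_L$.
\item For interior sites one bounds $\E|B_L|^{s/2}$ directly, using $(\sum a_k)^{s/2}\le\sum a_k^{s/2}$, the geometric resolvent identity, Cauchy--Schwarz and (ii). There is no interpolation, only the normalization loss $L^{d(1-s/2)}$.
\item That loss is beaten because each interior site is at distance at least $\ell_L$ from $\Lambda_L\setminus\mathcal C_p$, so it picks up a factor $\ell_L^{-s(d+2\beta)/2}$, provided $\theta'>1-s/2$.
\end{itemize}
The hypothesis $\beta>d/2$ enters only through the summability condition $\frac{s}{2}(d+2\beta)>d$, not through a balancing against interpolation.

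There is also a secondary gap. The intensity condition does not follow from the Wegner estimate and (i) alone, since Wegner gives only an upper bound. Identifying the limit $n(E)$ requires comparing $\E\Im G^{\mathcal C_p}(m,m;E+\zeta/L^d)$ with its infinite-volume counterpart at spectral resolution $L^{-d}\ll\ell^{-d}$. This again uses (ii) and the same interior/boundary split; these are the terms $F_L$ and $\mathcal E_L$ in the paper.
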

\begin{rem}
The above Theorem shows that for random Scrh\"odinger operators with pure point spectrum, with random potentials regular enough to satisfy the aforementioned hypotheses, the hopping terms $|T(n,m)|$ must decay at least faster than $\norm{n-m}^{-2d}$, in order to exhibit Poisson eigenvalue statistics. Note also that the case of exponentially decaying hopping terms and short-range hopping terms, like the usual Anderson Model, are included in the setting of Theorem \ref{thm:gen-minami}. 
\end{rem}

\smallskip

Combining the above generalization of Minami's work, Theorem \ref{thm:gen-minami}, together with the results from Section \ref{sec:long_range1}, we have:

\begin{proof}[Proof of Theorem \ref{thm-poisson}]
 Let $H_{\beta,\lambda}$ be an operator satisfying Hypothesis \ref{hypothesis:lr1} and Hypothesis \ref{hypothesis:pot1}, which in particular imply \eqref{eq:decay-T}. The regularity of the random potential ensures that a Wegner estimate holds, see Lemma \ref{lem:wegner-minami} below, which by \cite[Corollary 5.24]{kirsch} implies that the integrated density of states $N(E)$ is absolutely continuous with bounded density given by $n(E)$. This implies that  $N$ is differentiable almost everywhere in the spectrum, which yields $n(E)>0$ almost everywhere. Therefore, condition (i) is verified a.e.\ in $\Sigma_\lambda$.  Moreover, for the long-range Anderson model,  Theorem \ref{thm:decay-long-range-GF} ensures the existence of $\lambda_{\text{lrAM}}>0$ such that \eqref{GF-decay} holds for all $\lambda> \lambda_{\text{lrAM}}$ with $\beta>d/2$. The analogue statement holds for the fractional Anderson model by Theorem \ref{thm:decay-fractional-GF}, for all $\lambda>\lambda_{\text{fAM}}$ for some $\lambda_{\text{fAM}}$ and $\beta=\alpha\in(\frac{1}{2},1)$. Therefore, both the long-range Anderson model and the fractional Anderson model verify the conditions in Theorem  \ref{thm:gen-minami}, which yields the desired result.
\end{proof}

The proof of Theorem \ref{thm:gen-minami} follows the strategy of  Minami's argument in \cite{Minami96} but it requires additional technical estimates due to the long-range nature of the hopping term. 
In preparation for the proof of Theorem \ref{thm:gen-minami}, we recall the strategy of Minami's argument for the Anderson model in \cite{Minami96}. Then, in section \ref{subsec:conv1}, we present the proof of the infinite-divisibility of the limiting point process, and in section \ref{subsec:proof31} we complete the proof of Theorem \ref{thm:gen-minami}.

Let $\xi(E)$ be a Poisson point process on $\mathbb R$. Under the assumptions of the Theorem, the weak convergence of $\xi(\Lambda_L,E)$ to a limiting point process as $L\rightarrow \infty$ is equivalent to the convergence of certain functionals of the point processes on a space $\mathcal A$ of tests functions (\cite[Lemma 1]{Minami96}). The space $\mathcal A$ consists of finite linear combinations of functions of the form 
\be \label{def:test-functions}
f_\zeta(x)=\frac{b}{(x-a)^2+b^2} 
\ee
with $\zeta=a+ib$, $a\in\mathbb R,b>0$, also called Poisson kernels. Note that, by functional calculus (see \cite[Eq. (2.18)]{Minami96}
\be\label{rep-process-resolvent} \mathbb E \left( \xi(\Lambda_L,E)(f_z)\right) = \frac{1}{L^d} \mathbb E\left(\mbox{Im}\, \mbox{Tr } (H_{\beta,\lambda}^L-E-\frac{\zeta}{L^d})^{-1} \right). \ee

This representation allows us to reduce the analysis of weak convergence of point processes to a study of convergence of resolvents of the operator $H_{\beta,\lambda}$ restricted to cubes.

Throughout the proof, two key estimates are used: the Wegner estimate and the Minami estimate, concerning the number of eigenvalues of the operator $H_{\beta,\lambda}^L$ in an interval $\mathcal I$, that is, $\mbox{Tr} \,1_{\mathcal I}( H_{\beta,\lambda}^L )$.
The Wegner and Minami estimates for lrAM follow most directly from the proofs in Combes, Germinet, and Klein \cite{cgk_JSP} since these models involve rank-one perturbations and the single-site probability measure has a density.

\begin{lem}\label{lem:wegner-minami}
The lrAM operator  $H_{\beta,\lambda}$, with $T_\beta$ satisfying Hypothesis \ref{hypothesis:lr1} and with a random potential $V_\omega$ satisfying Hypothesis \ref{hypothesis:pot1}, exhibits a Wegner estimate and a Minami estimate. There exist constants $C_W, C_M > 0$, depending on the probability density $\rho$ and the disorder $\lambda$, so that  for any interval $\mathcal{I} \in \Sigma_{\beta,\lambda}$, we have
\beq\label{eq:wegner}
 \mathbb E \left( \mbox{Tr} \,1_{\mathcal I}( H_{\beta,\lambda} ^L) \right)\leq C_W |\mathcal{I}| L^d ,
 \eeq 
 and
 \be \label{est:minami} \mathbb E \left(\mbox{Tr} \,1_{\mathcal I}( H_{\beta,\lambda}^L ) \left( \mbox{Tr} \,1_{\mathcal I}( H_{\beta,\lambda}^L )-1 \right)   \right)\leq C_M |\mathcal{I}|^2 L^{2d} .
 \ee 
\end{lem}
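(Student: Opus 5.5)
The plan is to reduce both estimates to spectral averaging over the single-site variables. This works because $H^L_{\beta,\lambda}=T_\beta^L+\lambda\sum_{k\in\Lambda_L}\omega_k P_k$ is a sum of a deterministic bounded operator and rank-one perturbations with iid coefficients having a bounded density. The long-range character of $T_\beta$ plays no role: the argument uses only that $T_\beta^L$ is bounded and self-adjoint on the finite-dimensional space $\ell^2(\Lambda_L)$. This is the framework of Combes, Germinet and Klein \cite{cgk_JSP}, and I would follow it.

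\emph{Wegner estimate.} I would first write
\[
\mbox{Tr}\,1_{\mathcal I}(H^L)=\sum_{k\in\Lambda_L}\langle\delta_k,1_{\mathcal I}(H^L)\delta_k\rangle.
\]
For fixed $k$, I condition on $\{\omega_j\}_{j\neq k}$ and view $H^L=H^L_{k}+\lambda\omega_kP_k$ as a rank-one perturbation in $\omega_k$. The spectral averaging bound
\[
\int\langle\delta_k,1_{\mathcal I}(H^L_k+\lambda vP_k)\delta_k\rangle\rho(v)\,dv\le \frac{\|\rho\|_\infty}{\lambda}|\mathcal I|
\]
follows from the Birman--Schwinger/Krein formula for the Borel transform of the spectral measure of $\delta_k$. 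Explicitly, $\langle\delta_k,(H^L-z)^{-1}\delta_k\rangle=(F_k(z)^{-1}+\lambda v)^{-1}$, where $F_k$ is the corresponding function for $H^L_k$. Integrating in $v$ against $\rho$ and using Stone's formula gives the bound. Summing over the $L^d$ sites yields \eqref{eq:wegner} with $C_W=\|\rho\|_\infty/\lambda$.

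\emph{Minami estimate.} Here I would follow \cite{cgk_JSP} and write
\[
\mbox{Tr}\,1_{\mathcal I}(\mbox{Tr}\,1_{\mathcal I}-1)\le \sum_{j\neq k}\Big(\langle\delta_j,1_{\mathcal I}\delta_j\rangle\langle\delta_k,1_{\mathcal I}\delta_k\rangle-|\langle\delta_j,1_{\mathcal I}\delta_k\rangle|^2\Big),
\]
which is the sum over $j\neq k$ of $2\times2$ determinants of the restricted spectral projection. Each summand is bounded by $\mbox{Tr}\,\big(P_{jk}1_{\mathcal I}(H^L)P_{jk}\big)$ counted appropriately, so it suffices to control the pair. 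For that I perform a rank-two perturbation in $(\omega_j,\omega_k)$: I replace $1_{\mathcal I}(H^L)$ on the block $\{j,k\}$ by $1_{\mathcal I}(H^L_{(j)}+\lambda\omega_kP_k)$. Here $H^L_{(j)}$ has $\omega_j$ replaced by an independent copy $\hat\omega_j$ shifted to $\omega_j+\tau$, and the monotonicity trick of \cite{cgk_JSP} applies, since the eigenvalue counting function is monotone in $\omega_j$ and rank-one perturbations change counts by at most one. Averaging first in $\omega_j$ and then in $\omega_k$ uses two independent spectral averagings, each contributing a factor $\|\rho\|_\infty|\mathcal I|/\lambda$. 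Summing over the at most $L^{2d}$ pairs gives \eqref{est:minami} with $C_M\sim(\|\rho\|_\infty/\lambda)^2$. Again nothing depends on the locality of $T_\beta$; it enters only as a fixed self-adjoint background.

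I expect the main obstacle to be the reduction of the factorial moment to sums of $2\times2$ minors, followed by the two-variable averaging with the correct monotonicity, rather than any feature of the long-range hopping. Since these steps are carried out in \cite{cgk_JSP} for general operators $H_0+\sum_k\omega_kP_k$ with $H_0$ bounded self-adjoint and rank-one $P_k$, I would verify that $T_\beta^L$ fits those hypotheses and invoke their argument.
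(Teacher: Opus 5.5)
Your framing and your Wegner step agree with the paper. The paper gives no argument beyond noting that $H^L_{\beta,\lambda}$ is a bounded self-adjoint background plus rank-one site perturbations with a bounded density, and citing Combes--Germinet--Klein \cite{cgk_JSP}. Your Krein-formula spectral averaging giving $C_W=\|\rho\|_\infty/\lambda$ is correct, and the long-range hopping indeed plays no role.

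The gap is in your own sketch of the Minami step. It is not the CGK argument, and it would not go through as written. Write $P=1_{\mathcal I}(H^L)$; there are three problems.
\begin{enumerate}
\item Bounding the minor $\langle\delta_j,P\delta_j\rangle\langle\delta_k,P\delta_k\rangle-|\langle\delta_j,P\delta_k\rangle|^2$ by $\mbox{Tr}(P_{jk}PP_{jk})=\langle\delta_j,P\delta_j\rangle+\langle\delta_k,P\delta_k\rangle$ is linear in $P$. It can produce only one factor $|\mathcal I|$.
\item That a rank-one perturbation changes the number of eigenvalues in $\mathcal I$ by at most one is a statement about traces. It gives no comparison between individual matrix entries (or $2\times2$ minors) of $1_{\mathcal I}(H^L)$ and those of $1_{\mathcal I}$ of the modified operator. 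So ``replacing $1_{\mathcal I}(H^L)$ on the block $\{j,k\}$'' is unjustified.
\item Iterated one-variable averaging of $\langle\delta_j,P\delta_j\rangle\langle\delta_k,P\delta_k\rangle$ does not yield two factors, because both entries depend on both $\omega_j$ and $\omega_k$. Averaging in $\omega_j$ controls a single diagonal entry, not its product with another $\omega_j$-dependent quantity.
\end{enumerate}

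The missing idea is to decouple at the level of the trace, before any averaging. Start from $\mbox{Tr}\,P\,(\mbox{Tr}\,P-1)=\sum_{j}\langle\delta_j,P\delta_j\rangle(\mbox{Tr}\,P-1)$. By Cauchy interlacing for the codimension-one compression, $\mbox{Tr}\,P-1\le \mbox{Tr}\,1_{\mathcal I}(H^{\Lambda_L\setminus\{j\}})$. Equivalently, use rank-one interlacing after replacing $\omega_j$ by an independent copy; that is where the independent-copy device belongs. Since the trace factor does not depend on $\omega_j$,
\[
\mathbb E\bigl(\mbox{Tr}\,P\,(\mbox{Tr}\,P-1)\bigr)\le \sum_{j\in\Lambda_L}\mathbb E\Bigl(\mbox{Tr}\,1_{\mathcal I}(H^{\Lambda_L\setminus\{j\}})\;\mathbb E_{\omega_j}\langle\delta_j,P\delta_j\rangle\Bigr)\le \frac{\|\rho\|_\infty|\mathcal I|}{\lambda}\sum_{j\in\Lambda_L}\mathbb E\,\mbox{Tr}\,1_{\mathcal I}(H^{\Lambda_L\setminus\{j\}}) .
\]
The Wegner estimate for $H^{\Lambda_L\setminus\{j\}}$ then gives $C_M=(\|\rho\|_\infty/\lambda)^2$.

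If you prefer to keep the $2\times2$ minors, you need a genuinely two-variable input, in three pieces:
\begin{enumerate}
\item the operator bound $1_{\mathcal I}(H^L)\le 2\epsilon\,\Im(H^L-E-i\epsilon)^{-1}$ for $\mathcal I=[E-\epsilon,E+\epsilon]$;
\item monotonicity of $\det$ on positive $2\times2$ matrices;
\item the Minami/Graf--Vaghi rank-two averaging bound $\mathbb E_{\omega_j,\omega_k}\det\Im\,G^{\Lambda_L}_{\{j,k\}}(E+i\epsilon)\le C\|\rho\|_\infty^2/\lambda^2$, where $G^{\Lambda_L}_{\{j,k\}}$ is the $2\times2$ compression of the resolvent.
\end{enumerate}
The proof of that last bound exploits the determinant structure rather than the product of diagonal entries.
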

By the Chebychev inequality, these bounds on the expectations imply the probability estimates:
\be \mathbb P \left(  \mbox{Tr} \,1_{\mathcal I}( H_{\beta,\lambda}^L ) \geq 1\right) \leq  C_W |\mathcal{I}|   L^d ,\ee
 and
 \be \mathbb P \left(    \mbox{Tr} \,1_{\mathcal I}( H_{\beta,\lambda}^L ) \geq 2 \right)\leq  C_M |\mathcal{I}|^2  L^{2d} 
 \ee

Having the representation \eqref{rep-process-resolvent}, and the Wegner and Minami estimates, the proof that the limiting process $\xi(E)$ is a Poisson point process $n(E)dx$ follows in two steps:

\smallskip

\smallskip

\noindent {\bf Step 1.} The point process $\xi(\Lambda_L,E)$  can be approximated by a superposition of smaller independent point processes, that is, the limiting point process $\xi$, if it exists, is infinitely divisible. 
\smallskip

\noindent {\bf Step 2.} The smaller independent processes in Step 1 are asymptotically negligible in a certain sense to be defined below, and their superposition converges weakly to $\xi$, the Poisson point process with intensity $n(E)dx$. Together with Step 1, this yields the proof of Theorem \ref{thm:gen-minami}.

\smallskip

\smallskip

We devote the next subsection to prove Step 1. Thereafter, we complete the proof of Theorem \ref{thm:gen-minami} by proving Step 2. The proofs are based on  the use of the geometric resolvent identity to compare resolvents on different cubes, and the use of the decay of the Green's function to control the error terms. These manipulations differ from those in \ref{thm:gen-minami}, due to the fact that we need to exploit better the decay of the resolvents to control the error terms.


\subsection{Local decoupling and infinitely-divisibility}
\label{subsec:conv1}

Following the notation in \cite{Minami96}: 
Given $L>0$, we decompose the cube $\Lambda_L$ into $N_L^d$ smaller cubes $\mathcal C_p$ with side length scale $\ell := L / N_L \ll L$, to be chosen later. We have
\be\label{eq:lambda_decomp1}
\Lambda_L  = {\rm Int} \bigcup_{p=1}^{N_L^d} \overline{\mathcal C_p} .
\ee
Relative to $\mathcal C_p$, we have the decomposition 
\be 
H^{\Lambda_L}=H^{\Lambda_L \setminus \mathcal C_p}\oplus H^{\mathcal C_p}+K,
\ee
where $H^{\mathcal C_p}=1_{\mathcal C_p}H^{\Lambda_L}1_{\mathcal C_p} $, and the kernel $K(m,n)$ represents the boundary contribution of the long-range hopping terms. The kernel satisfies $K(m,n)\neq 0$ only if $m\in \mathcal C_p$ and $n\in\Lambda\setminus \mathcal C_p$ or the other way around, and for such a pair $(m,n)$, we have the bound  
\be\label{eq:kernelK1}
 \abs{K(m,n)}\leq \frac{C_T}{\norm{m-n}^{d+2\beta}}, ~~~ \beta > 0,
\ee
where $C_T$ is the constant coming from \eqref{eq:decay-T}. 

For each $p$, we define the point process,
\be \label{def:eta-p}
\eta(\mathcal C_p,E)=\displaystyle \sum_{j=1,...,|\mathcal C_p|}\delta_{| \Lambda_L|(E_j(\mathcal C_p)-E)},
\ee
where $E_j(\mathcal C_p)$ denotes the eigenvalues of $H_{\mathcal C_p}$, and consider their superposition 
\be \eta(\Lambda_L,E)=\displaystyle \sum_{p=1}^{N_L^d} \eta(\mathcal C_p,E).  \label{eq:eta_pp1}
\ee

Then, to show that the point process $\xi(\Lambda_L,E)$ is approximated by the process $\eta(\Lambda_L,E)$ it is enough to show that this approximation holds on test functions of the form \eqref{def:test-functions}. That is, that for all $\zeta=a+ib$ with $a\in\mathbb R$, $b>0$,
\be \lim_{L\rightarrow \infty}\abs{ \mathbb E \left( \xi(\Lambda_L,E)(f_\zeta)\right)- \mathbb E \left( \eta(\Lambda_L,E)(f_\zeta)\right) }=0.  \ee
This follows from the representation \eqref{rep-process-resolvent} and the following result, where we use the notation  $R_{X}(z)=(H_{\beta,\lambda}^X-z)^{-1}$, for the resolvents of $H_{\beta,\lambda}$ restricted to $X$, for  $X=\Lambda_L,\mathcal C_p$.
\begin{prop}\label{prop:resolv_diff1}
	 Under the hypotheses of Theorem \ref{thm:gen-minami}, with $\beta>d/2$ and $s\in (\frac{2d}{d+2\beta},1)$, we have, for $\zeta=a+ib$, $a\in\mathbb R$, $b>0$,
	\be\label{eq:resovl_diff1}
	\lim_{L\rightarrow \infty}
	\frac{1}{L^d} \left[ \Im {\rm Tr} R_{\Lambda_L} (z_{E,L}) - \sum_{p=1}^{N_L^d} \Im {\rm Tr} R_{\mathcal C_p}(z_{E,L})  \right] =0 ,
	\ee
where $z_{E,L}=E+\frac{\zeta}{L^d}$, and ${\rm Tr}$ denotes the trace on $\Lambda_L$ or $\mathcal C_p$ according to the argument.
\end{prop}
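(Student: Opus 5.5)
We follow Minami's decoupling argument, with the long-range boundary operator handled by a double sum over pairs of sites on opposite sides of cube boundaries. Write $z=z_{E,L}$, so that $\Im z = b/L^d$. The resolvent identity for $H^{\Lambda_L}=H^{\Lambda_L\setminus\mathcal C_p}\oplus H^{\mathcal C_p}+K_p$ gives, for $m\in\mathcal C_p$,
\[
R_{\Lambda_L}(m,m;z)-R_{\mathcal C_p}(m,m;z) = -\sum_{k\in\mathcal C_p}\sum_{n\in\Lambda_L\setminus\mathcal C_p} R_{\mathcal C_p}(m,k;z)K(k,n)R_{\Lambda_L}(n,m;z).
\]
Summing over $m\in\mathcal C_p$ and then over $p$ writes the bracket in \eqref{eq:resovl_diff1} as a sum of such triple sums. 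It then suffices to bound the expectation of the absolute value of the bracket and show it is $o(L^d)$.

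To introduce fractional moments, we use the a priori bounds $|R_X(m,k;z)|\le L^d/b$. For each term we write $|R_{\mathcal C_p}(m,k)R_{\Lambda_L}(n,m)| \le (L^d/b)^{2-2s'}|R_{\mathcal C_p}(m,k)|^{s'}|R_{\Lambda_L}(n,m)|^{s'}$, with $s'=s/2$ so that Cauchy--Schwarz (or H\"older) reduces to $\mathbb E|G|^{s}$ factors. A cleaner route is to use $\sum_m |R(m,k)R(n,m)| \le \|R e_k\|\|R e_n\|$ together with the Ward identity $\|R e_k\|^2=\Im R(k,k)/\Im z$. I would try the former first. Since $|\cdot|^{s'}$ is subadditive, the expectation of each term is at most
\[
(L^d/b)^{2-s}\,\frac{C_T\, C}{\norm{m-k}^{s(d+2\beta)/2}\norm{k-n}^{d+2\beta}\norm{n-m}^{s(d+2\beta)/2}},
\]
with the diagonal cases $m=k$ or $m=n$ bounded using the uniform bound $\mathbb E|G(m,m)|^{s}\le C$ that comes from the bounded density. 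The polynomial prefactor $L^{d(2-s)}$ is too large, so this must be avoided. I therefore switch to Minami's actual device: rather than estimating each trace difference crudely, I estimate $\mathbb E|\Im\,{\rm Tr}|$ through the a priori identity $|R(n,m)|\le$ (fractional moment) $\times$ (sup)$^{1-s}$ only once, or via the spectral averaging bound $\mathbb E|R(n,m)|^{s}|R(m,k)|^{s}\le C\,\mathbb E|R(n,m)|^{s}$ (a decoupling lemma). This allows the factors to be split as $\mathbb E|R_{\mathcal C_p}(m,k)R_{\Lambda_L}(n,m)|^{s}\lesssim \norm{n-m}^{-s(d+2\beta)}$, leaving a single $L^{d(1-s)}$ loss.

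The final step is geometric summation. Summing the decay over $k\in\mathcal C_p$, $n\notin\mathcal C_p$, using $s(d+2\beta)>2d$ (which is exactly what $s>\frac{2d}{d+2\beta}$ provides) together with the convolution bound $\sum_k\norm{m-k}^{-a}\norm{k-n}^{-b}\lesssim\norm{m-n}^{-\min(a,b)}$ for $a,b>d$, the contribution of $m\in\mathcal C_p$ is bounded by $\dist(m,\partial\mathcal C_p)^{-(s(d+2\beta)-2d)}$-type terms. Summed over $\mathcal C_p$, this is $O(\ell^{d-1}\cdot\ell^{\max(0,\,\ldots)})$, a boundary-layer quantity. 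Over all $N_L^d$ cubes the total is $O(L^d/\ell^{\gamma})$ for some $\gamma>0$. Choosing $\ell=L^{\kappa}$ with $\kappa$ close to $1$ absorbs the residual $L^{d(1-s)}$ loss provided $\gamma\kappa>d(1-s)$, and then dividing by $L^d$ gives the limit $0$. The main obstacle is this bookkeeping: controlling the product of two Green's functions with near-boundary summation without losing a power of $L$ too large for the polynomial (rather than exponential) decay to absorb. This is precisely where the extra $d$ in the exponent $s(d+2\beta)$ and the condition $\beta>d/2$ must be used, and the choice of $\kappa$ must be tuned to the gap $s(d+2\beta)-2d$.
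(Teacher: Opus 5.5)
There is a genuine gap, and it sits exactly at the step you identify as the main obstacle; neither device you propose closes it. You aim at $\mathbb E|\cdot|=o(L^d)$ for the whole bracket, so you must turn products $R_{\mathcal C_p}(m,k)R_{\Lambda_L}(n,m)$ into fractional moments while $\Im z_{E,L}=b/L^d$.

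\begin{itemize}
\item \emph{The cost of the trade is $L^{2d(1-s)}$, not $L^{d(1-s)}$.} Trading $|R|=|R|^s|R|^{1-s}$ against $\norm{R}\le L^d/b$ costs $(L^d/b)^{1-s}$ for \emph{each} factor. The Ward-identity route is worse: it loses a full $L^d/b$.
\item \emph{The decoupling bound is not available.} The inequality $\mathbb E|R(n,m)|^s|R(m,k)|^s\le C\,\mathbb E|R(n,m)|^s$ fails in general. Both factors are singular in the same variable $\omega_m$, through $R_{\Lambda_L}(m,m)$ and $R_{\mathcal C_p}(m,m)$. So for $s>1/2$, a single-site average does not reduce the product to one fractional moment.
\item \emph{The gain cannot absorb the loss.} Even granting the above, the only gain in your bookkeeping is a surface-to-volume ratio $\ell^{-\gamma}$ with $\gamma\le 1$; in your own accounting $\gamma\to 0$ as $s\downarrow \frac{2d}{d+2\beta}$. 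Hence $\gamma\kappa>d(1-s)$ is not a matter of tuning $\kappa$. It is false whenever $d(1-s)\ge 1$, which the hypotheses allow for $d\ge 2$ and $\beta$ large. You cannot raise $s$ either: $s$ is fixed by hypothesis (ii) of Theorem \ref{thm:gen-minami}, and a decay bound at exponent $s$ only yields bounds at smaller exponents.
\end{itemize}

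The missing idea is a security zone inside each cube, combined with a fractional moment of the whole normalized sum rather than its expectation. The paper removes from each $\mathcal C_p$ a boundary layer of width $\ell_L=L^{\theta'}$.

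\begin{itemize}
\item \emph{Boundary layer.} No off-diagonal information is used there. Since $\Im G(m,m)\ge 0$ and $\mathbb E\,\Im G^X(m,m;z)\le \pi\norm{\rho}_\infty/\lambda$ uniformly in $\Im z>0$, the layer contributes $O(\ell_L/\ell)$ in $L^1$.
\item \emph{Interior.} For interior $m$, every $k\notin\mathcal C_p$ satisfies $\norm{m-k}>\ell_L$. The paper bounds $\mathbb E|B_L|^{s/2}$ by subadditivity, the geometric resolvent identity, and Cauchy--Schwarz in the form $\mathbb E(|G^{\Lambda_L}(m,k)|^{s/2}|G^{\mathcal C_p}(n,m)|^{s/2})\le \mathbb E(|G^{\Lambda_L}(m,k)|^{s})^{1/2}\,\mathbb E(|G^{\mathcal C_p}(n,m)|^{s})^{1/2}$. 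No sup-norm and no power of $1/\Im z$ ever appears.
\item \emph{Why the gain wins.} The first factor gives $\ell_L^{-s(d+2\beta)/2}=\ell_L^{-(d+2\kappa)}$. This is a gain comparable to a full volume factor, not a surface-to-volume ratio. It beats the loss $L^{d(1-s/2)}$ coming from the $s/2$-power of $L^{-d}$ once $\theta'>1-\frac s2$, with $\theta'<\theta$ kept so the layer term vanishes.
\item \emph{Where the hypotheses enter.} The conditions $\beta>d/2$ and $s>\frac{2d}{d+2\beta}$ are used only through $\frac s2(d+2\beta)>d$. This makes $\sum_m \mathbb E(|G^{\mathcal C_1}(n,m)|^s)^{1/2}$ and the boundary double sum of Lemma \ref{Lemma-Gebert-RM} finite.
\end{itemize}

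The result is convergence in probability, which is enough to identify the weak limits. Your insistence on $L^1$ convergence of the full bracket is what produces a loss your argument cannot absorb.
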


\begin{proof}
 Following the decomposition in \eqref{eq:lambda_decomp1}, for each cube $\mathcal C_p$ on scale $\ell$, we designate a security zone in $\mathcal C_p$ near its boundary, as follows: 
\be
{\rm Int} \mathcal C_p := \{ m \in \mathcal C_p ~|~ \dist ( m, \partial \mathcal C_p ) > \ell_L \},
\ee
for a scale $\ell_L$ satisfying $0 < \ell_L \ll \ell \ll L$ to be chosen below.  

Recalling \eqref{eq:lambda_decomp1}, we expand the trace in the square brackets in \eqref{eq:resovl_diff1} and obtain:
\bea\label{eq:decomp1}
 \lefteqn{ \frac{1}{L^d} \left[ \Im {\rm Tr} R_{\Lambda_L} (z_{E,L}) - \sum_{p=1}^{N_L^d} \Im {\rm Tr} R_{\mathcal C_p}(z_{E,L}) \right]}
& & \nonumber  \\
  &= & \frac{1}{L^d} \sum_{p=1}^{N_L^d} \sum_{m \in \mathcal C_p} \left[ \Im G^{\Lambda_L} (m,m;z_{E,L}) - \Im G^{\mathcal C_p}(m,m;z_{E,L}) \right] \nonumber \\
  & =: & A_L + B_L,
  \eea
  where
 \be\label{eq:ALdefn1}
 A_L : =   \frac{1}{L^d} \sum_{p=1}^{N_L^d} \sum_{m \in \mathcal C_p \backslash {\rm Int} \mathcal C_p } \left[ \Im G^{\Lambda_L} (m,m;z_{E,L}) - \Im  G^{\mathcal C_p}(m,m;z_{E,L}) \right],
 \ee
 and
 \be\label{eq:BLdefn1}
  B_L : =   \frac{1}{L^d} \sum_{p=1}^{N_L^d} \sum_{m \in {\rm Int} \mathcal C_p } \left[ \Im G^{\Lambda_L} (m,m;z_{E,L}) - \Im G^{\mathcal C_p}(m,m;z_{E,L}) \right] . 
 \ee

 \noindent
 1. \textbf{Bound on $\E \{ A_L \}$}. First note that $\E \left( \Im  G^{X}(m,m;z_{E,L}) \right)$ is uniformly bounded in $L$, for $X = \mathcal C_p$ and $X = \Lambda_L$. The proof follows as in \cite[Eq. (2.22)]{Minami96}, yielding 
 \be\label{eq:bd-c-rho} \mathbb E \left(  \Im  G^{X}(m,m;z_{E,L})  \right)\leq \pi \frac{\norm{\rho}_\infty}{\lambda} :=C_{\rho,\lambda} \ee

  To simplify the notation, when there is no ambiguity, we write $z=z_{E,L}$.   Recall that the Green's function $G$ (and its finite-volume versions) satisfies the Herglotz property, that is, $\Im G(m,m;z)/ \mbox{Im} z >0$, for $\mbox{Im} z \neq 0$, therefore $\Im G(m,m;z) \geq 0$, and the terms are homogeneous with respect to $p$. Therefore, we have from \eqref{eq:ALdefn1}, 
 \bea
 \E \{A_L \} & \leq &    \frac{1}{L^d} \sum_{p=1}^{N_L^d} \sum_{m \in \mathcal C_p \backslash {\rm Int}\mathcal C_p } \left[ \E \{ \Im G^{\Lambda_L} (m,m;z) \} + \E \{ \Im  G^{\mathcal C_p}(m,m;z) \} \right]   \nonumber \\
  & \leq &  \frac{2C_{\rho,\lambda} }{L^d} \, N_L^d \, | \mathcal C_1 \backslash {\rm Int} \mathcal C_1 | \nonumber \\
   & \leq &  \frac{2C_{\rho,\lambda} }{L^d} \,  N_L^d \, \left( \frac{L}{N_L} \right)^{d-1} \ell_L \nonumber \\
    & \leq & 2C_{\rho,\lambda}\frac{N_L \ell_L}{L}.
 \eea
 
 Since $\ell_L<< \ell=L/N_L$, we have that the right-hand side of the last line goes to zero as $L$ goes to infinity, therefore
 \be \lim_{L\rightarrow\infty}\E \{A_L \}= 0  
 \ee
 
\smallskip

\noindent 
2. \textbf{Bound on $\E \{ B_L \}$.} We next estimate the expectation of \eqref{eq:BLdefn1}, by considering $\E \{ |B_L|^{s/2} \}$. We use the fact that $\left(\sum_k a_k\right)^{s/2}\leq \sum_k a_k^{s/2}$, for any sequence of non-negative numbers $a_k$ and $s\in(0,1)$.  Taking the expectation, and using the geometric resolvent identity (see \cite[Section 5.3]{kirsch}), we obtain
\bea\label{eq:BLest1}
\E \{ | B_L|^\frac{s}{2} \} & \leq &   \frac{1}{L^\frac{ds}{2}} \sum_{p=1}^{N_L^d} \sum_{m \in {\rm Int} \mathcal C_p } 
 \mathbb E\left(|G^{\Lambda_L} (m,m;z) - G^{\mathcal C_p}(m,m;z) |^\frac{s}{2} \right)  \nonumber \\
 & \leq &   \frac{1}{L^\frac{ds}{2}} \sum_{p=1}^{N_L^d} \sum_{m \in {\rm Int} \mathcal C_p }  
 \sum_{\substack{k \in \Lambda_L \backslash \mathcal C_p \\   n \in \mathcal C_p}} \mathbb E\left(  | G^{\Lambda_L} (m,k;{z}) K (k,n) G^{\mathcal C_p} (n,m;z) |^{s/2} \right) \nonumber \\
 & \leq &   \frac{C_T^{s/2}}{L^\frac{ds}{2}} \sum_{p=1}^{N_L^d} \sum_{m \in {\rm Int} \mathcal C_p }  
 \sum_{\substack{k \in \Lambda_L \backslash \mathcal C_p \\   n \in \mathcal C_p}}  \frac{ \E \left(| G^{\Lambda_L} (m,k;{z})|^{s} \right)^\frac{1}{2} \E \left( |G^{\mathcal C_p} (n,m;z) |^s \right)^\frac{1}{2} }{ \| k - n \|^{\frac{s}{2} (d + 2 \beta)}} ,
 \nonumber \\
  & & 
  \eea
  where we used \eqref{eq:kernelK1} and the Cauchy-Schwarz inequality for the expectation. This is the analog of \cite[(2.35)]{Minami96}. 
By hypothesis, we have
\be\label{eq:green_L1} 
 \mathbb E \left(\abs{G^{\Lambda_L}(m,k;z)}^{s} \right) \leq \frac{C}{\norm{m-k}^{s (d+2\beta)}},
\ee
for some positive constant $C$.
By construction, for $m\in ~\rm{Int} \mathcal C_p$ and $k\in \Lambda_L \backslash \mathcal C_p$ we have $\norm{m-k}> \ell_L$, therefore
\be\label{est:G}  \mathbb E \left( \abs{G^{\Lambda_L}( m,k;z)}^{s} \right)^{\frac{1}{2}} \leq \frac{C}{\ell_L^{\frac{s}{2}(d+2\beta)}}, \quad \mbox{for} \, m\in ~\rm{Int} \mathcal C_p,\,k\in \Lambda_L \backslash \mathcal C_p.
\ee
Noting that these bounds are uniform with respect to $p$, we obtain from  \eqref{eq:BLest1}, 
\bea\label{eq:BLest2}
 \E( |B_L|^{s/2})
  &\leq & \frac{C^{(1)}}{L^{\frac{ds}{2}}} \frac{1}{\ell_L^{\frac{s}{2}(d+2\beta)}}\sum_{p=1}^{N_L^d}  \sum_{m\in \rm{Int} \mathcal C_p} ~\sum_{ \substack{n\in \mathcal C_p \\ k\in \Lambda\setminus \mathcal C_p}}\frac{\mathbb{E}\left( |G^{\mathcal C_p} (n,m;z) |^s \right)^\frac{1}{2}}{\norm{n-k}^{\frac{s}{2}(d+2\beta)}} \nonumber \\
   &\leq & \frac{C^{(1)} \, N_L^d}{L^{\frac{ds}{2}} \ell_L^{\frac{s}{2}(d+2\beta)}}   \sum_{\substack{n\in \mathcal C_1 \\ k\in \Lambda\setminus \mathcal C_1}} \frac{1}{\norm{n-k}^{\frac{s}{2}(d+2\beta)}}  \left( \sum_{m\in \rm{Int} \mathcal C_1}\mathbb{E}\left( |G^{\mathcal C_1} (n,m;z) |^s \right)^\frac{1}{2}  \right)  \nonumber \\
    \eea
with $C^{(1)}=C_T^{s/2}C^{1/2}$. We consider the sum over $m \in \rm{Int} \mathcal C_1$ in \eqref{eq:BLest2} with $n \in \mathcal C_1$ fixed.   We divide the sum over $m \neq n$ and $m=n$. Note that, by hypothesis, the expectation of $|G^{\mathcal C_1}(n,m;z)|^s$ satisfies the decay in \eqref{GF-decay} for $n\neq m$, and is uniformly bounded for $n=m$, that is,
\begin{equation}\label{eq_green_cp1}
 \mathbb{E}\left( |G^{\mathcal C_1} (n,m;z) |^s \right) \leq C\left\{ \begin{array}{cc} 
 \frac{1}{\norm{n-m}^{s (d+2\beta)}}, & n \neq m; \\
 1 , \ n = m ,
  \end{array} \right.
\end{equation}
The uniform bound for $n=m$ is the so-called \textit{a priori} bound, which is known to hold under our hypotheses, see e.g. \cite[Lemma 4.1]{stolz} whose proof also holds in our setting, or \cite[Eq. (3.2)]{DERM24}). It is known that the constant $C$ in this estimate is  $C=\frac{C_{\rho,s}}{\lambda^s}$, for some constant $C_{\rho,s}$ depending only on $\rho$ and $s$. Then, using \eqref{eq_green_cp1}, the sum over $m \neq n$ with fixed $n \in \mathcal C_1$ yields 
\beq\label{eq_green_cp2}
\sum_{m\in \rm{int} \mathcal C_1\setminus\{n\}}\mathbb{E}\left( |G^{\mathcal C_1} (n,m;z) |^s \right)^{\frac{1}{2}}\leq \sum_{k\in \rm{Int} \mathcal C_1\setminus\{n\}}\frac{1}{\norm{n-k}^{d+2\kappa}}=:C^{(2)}  < \infty ,
\eeq
where we write 
\be\label{eq:kappa} \frac{s}{2}(d+2\beta)=d+2\kappa, \,\mbox{with}\,\, \kappa>0 \ee
which is possible since $\beta>d/2$ and $\frac{2d}{d+2\beta}<s<1$.
 The term $m =n$ can be bounded using the \textit{a priori} bound. This yields, using the notation \eqref{eq:kappa},
\be\label{eq:sum}   \sum_{\substack{n\in \mathcal C_1 \\ k\in \Lambda\setminus \mathcal C_1}} \frac{1}{\norm{n-k}^{\frac{s}{2}(d+2\beta)}}  \left( \sum_{m\in \rm{Int} \mathcal C_1}\mathbb{E}\left( |G^{\mathcal C_1} (n,m;z) |^s \right)^\frac{1}{2}  \right) \leq C^{(3)} \sum_{\substack{n\in \mathcal C_1 \\ k\in \Lambda\setminus \mathcal C_1}} \frac{1}{\norm{n-k}^{d+2\kappa}}, 
\ee
where $C^{(3)}=C+C^{(2)}$. Substituting this bound into \eqref{eq:BLest2} gives
\begin{equation}\label{eq:newBLest23}
\mathbb E( |B_L|^{s/2}) \leq \frac{C^{(4)} N_L^d}{L^{\frac{ds}{2}} \, \ell_L^{ d+2\kappa }}  \sum_{\substack{n\in \mathcal C_1 \\ k\in \Lambda\setminus \mathcal C_1}}\frac{1}{\norm{n-k}^{d+2\kappa}},
\end{equation}
with $C^{(4)}=C^{(1)}C^{(3)}$. In order to control the double sum on the right-hand side, we recall a result obtained in \cite[Eq. (3.3)-(3.9)]{GRM20}. 
\begin{lem}[\cite{GRM20}]\label{Lemma-Gebert-RM} 
Let $\Lambda_K=[-K,K]\cap\mathbb Z^d$ be a cube, and $\kappa>0$, then
\begin{itemize}
\item[(i)] If $2\kappa \in (0,1)$, then
\be\label{lemma-Gebert-RM-a}  \sum_{n\in\Lambda_K}\sum_{k\in\Lambda^c _K} \frac 1 {|n-k|^{d+2\kappa}} \leq \left( K+\frac{1}{2} \right)^{d-2\kappa}  
\ee 
\item[(ii)] If $2\kappa\geq 1$, then, for all $\epsilon\in (0,1)$,
\be\label{lemma-Gebert-RM-b}  \sum_{n\in\Lambda_K}\sum_{k\in\Lambda^c _K} \frac 1 {|n-k|^{d+2\kappa}} \leq \left( K+\frac{1}{2} \right)^{d-(1-\epsilon)}  
\ee
\end{itemize}
\end{lem}

\noindent
Note that in \cite{GRM20} this result is stated with $\epsilon=1/2$.

Then, to estimate the sum in \eqref{eq:newBLest23}  we can apply Lemma \ref{Lemma-Gebert-RM} with $K=\ceil{\ell/2}>0$ , yielding
$$
\sum_{n\in \mathcal C_1,k\in \Lambda\setminus \mathcal C_1}\frac{1}{\norm{n-k}^{d+2\kappa}}\leq  \ell^{d-2\gamma}.
$$
where $\gamma=\kappa$, if $2\kappa<1$, whereas if $2\kappa\geq 1$ then $\gamma<1/2$ and $\gamma$ may be taken arbitrarily close to $1/2$.
Then \eqref{eq:newBLest23} reduces to
\begin{align}\label{eq:newBLest24}
\mathbb E( |B_L|^{s/2}) & \leq   \frac{C^{(4)} N_L^d}{L^{\frac{ds}{2}} \, \ell_L^{ d+2\kappa }} 
\ell^{d-2\gamma}  \nonumber \\ 
& = C^{(4)} \frac{ L^{d\left( 1-\frac{s}{2} \right)} }{\ell_L^{d+2\kappa} \ell^{2\gamma}} \\
\end{align}
 where we used the fact that $N_L=\frac{L}{\ell}$. Next, we set
\be  \label{eq:theta} \ell=L^{\theta} \,\mbox {and } \, \, \ell_L=L^{\theta'} \,\, \mbox{with}\,\, 0<\theta'<\theta<1. \ee
This yields
\be\label{eq:final-b_L} \mathbb E( |B_L|^{s/2})  \leq C^{(4)} L^{d \left( 1-\frac{s}{2} \right)- d\theta'- 2\kappa \theta' -2\gamma \theta   } . \ee
Taking $\theta'>1-\frac{s}{2}$, which is possible since $s<1$, and noting that $\gamma,\kappa,\theta>0$, we obtain that  $\mathbb E( |B_L|^{s/2}) \to 0$, as $L \to \infty$.  
Together with the bound on $\mathbb E(A_L)$, these imply that $A_L + B_L \to 0$, as $L \to \infty$, in probability, yielding \eqref{eq:resovl_diff1} .

\end{proof}

\subsection{ Proof of Theorem \ref{thm:gen-minami} }\label{subsec:proof31}

We prove that the point process $\eta(\Lambda_L,E)$, defined in \eqref{eq:eta_pp1}, converges to a Poisson process with intensity measure $n(E)dx$ as $L\rightarrow \infty$. This result, together with Proposition \ref{prop:resolv_diff1}, completes the proof of Theorem \ref{thm:gen-minami}.

First note that our hypotheses, together with Lemma \ref{lem:wegner-minami}, ensure that we can argue as in Equations (2.18)-(2.23) in \cite{Minami96} to show the existence of some constant $c_{\rho,\lambda}$ such that
\be  \mathbb E\left(\eta(\Lambda_L,E)(dx) \right) \leq c_{\rho,\lambda} dx,   \ee
and $n(E)dx\leq c_{\rho,\lambda} dx $, with  Therefore \cite[Lemma 1]{Minami96} can be applied. 
The locality of the random potential in Hypothesis \ref{hypothesis:pot1} implies that the point processes $(\eta(\mathcal C_p,E))_p$ are independent. Moreover, following a similar argument as in  Equations (2.18)-(2.23) in \cite{Minami96}, we have
\be \mathbb E\left( \eta(\mathcal C_p,E)(dx) \right)\leq \frac{1}{N_L^d} c_{\rho,\lambda}dx,  \ee
which implies $(\eta(\mathcal C_p,E))_p$ is a uniformly asymptotically negligible array (see \cite[Eq. 2.46]{Minami96}. Their superposition $\eta(\Lambda_L,E)$ converges weakly to a Poisson process with intensity measure $n(E)dx$ if the conditions in Equations (2.47) and (2.48), respectively, in \cite{Minami96} are verified. Condition \cite[Eq. (2.48)]{Minami96} is a consequence of the Minami estimate, see Lemma \ref{lem:wegner-minami}. It remains to prove condition \cite[Eq. (2.47)]{Minami96}, that is,
\be \label{cond:2.47-minami}  N_L^d \mathbb E\left( \eta(\mathcal C_p,E)(f_\zeta) \right) \rightarrow \pi n(E)  \ee 
for $f_\zeta$ given in \eqref{def:test-functions}, with $\zeta=a+ib$, $a\in\mathbb R$, $b>0$.
 \bigskip

 To show \eqref{cond:2.47-minami}, we first note that, arguing as in \eqref{rep-process-resolvent},
 \bea  \mathbb E(\eta(C_p,E)(f_\zeta)) & = &\frac{1}{L^d}\mathbb E \left( \mbox{Im\,}\mbox{Tr }G^{\mathcal C_p}(E+ \frac{\zeta}{L^d}) \right) \nonumber\\  
 & = &\frac{1}{L^d} \left(\displaystyle \sum_{m\in \rm{Int}C_p}+\sum_{m\in \mathcal C_p\setminus \rm{Int}\mathcal C_p}   \right)\mathbb E(\mbox{Im\,}G^{C_p}(m,m;z_{E,L})) \nonumber\\
 & = & D_L +F_L, 
 \eea
 where we recall that $z_{E,L}=E+\frac{\zeta}{L^d}$.  For  the term $D_L$, we write
\be G^{\mathcal C_p}(m,m;z_{E,L}) = G(m,m;z_{E,L})-\left( G(m,m;z_{E,L})-G^{\mathcal C_p}(m,m;z_{E,L})\right) \ee
to get, 
\bea \label{est:D_L} D_L & = &  \frac{1}{L^d} \sum_{m\in \rm{Int}\mathcal C_p}\mathbb E(\mbox{Im\,}G(m,m;z_{E,L})) - \mathcal E_L\nonumber \\
& =& \frac{\abs{ \rm{Int}\mathcal C_p}}{L^d}\,\mathbb E(\mbox{Im\,}G(m,m;z_{E,L})) - \mathcal E_L,
\eea
where 
\be \mathcal E_L= \frac{1}{L^d} \sum_{m\in \rm{Int}\mathcal C_p}  \mathbb E(\mbox{Im\,} \left\{ G(m,m;z_{E,L})-G^{\mathcal C_p}(m,m;z_{E,L})  \right\} ).
\ee
Therefore, we have
\be\label{est:eta-C_p} N_L^d \, \mathbb E(\eta(C_p,E)(f_\zeta)) = N_L^d\, \frac{\abs{ \rm{Int}\mathcal C_p}}{L^d}\mathbb E(\mbox{Im\,}G(m,m;z_{E,L}))  + N_L^d F_L- N_L^d \mathcal E_L. \ee
 To estimate $N_L^d F_L$, the uniform bound \eqref{eq:bd-c-rho} yields
 \be N_L^d\,F_L \label{est:F_L} \leq  N_L^d\, C_{\rho,\lambda} \frac{\ell^{d-1}\,\ell_L}{L^d}=  N_L^d\,C_{\rho,\lambda}  N_L^{-d} \frac{\ell_L}{\ell} = C_{\rho,\lambda} \frac{\ell_L}{\ell},\ee
which tends to zero as $L$ tends to infinity, since $\ell_L<\ell$. To estimate the last term on the right-hand side of \eqref{est:eta-C_p}, we write $N_L^d \mathcal E_L=\mathbb E (\tilde{\mathcal E_\ell} ) $ with
\be 
\tilde{\mathcal E_\ell}=  \frac{1}{\ell^d} \sum_{m\in \rm{Int}\mathcal C_p}  \mbox{Im\,} \left\{ G(m,m;z_{E,L})-G^{\mathcal C_p}(m,m;z_{E,L})  \right\}, \ee
Note that this expression is very similar to  $B_L$ in \eqref{eq:BLdefn1}, except $B_L$ carries an extra summation over $p$. We will estimate  it in the same way in order to prove that $\tilde{\mathcal E_\ell} \to 0$, as $L \to \infty$, almost surely. 
This will follow from proving that $\mathbb E( \tilde{|\mathcal E_\ell|}^{s/2}) \to 0$  as $L \to \infty$, almost surely. By the geometric resolvent identity, recalling \eqref{eq:kernelK1}, \eqref{eq:green_L1}, \eqref{est:G}, \eqref{eq:kappa}, and using Cauchy-Schwartz, we get
\begin{align}
\mathbb E( \tilde{|\mathcal E_\ell|}^{s/2}) & \leq \frac{1}{\ell^{d\frac{s}{2} } } \sum_{m\in \rm{Int} \mathcal C_p } \sum_{k\in \mathcal C_p^c,k'\in\mathcal C_p} \frac{C}{\norm{m-k}^{\frac{s}{2}(d+2\beta)}} \frac{\mathbb E\left( \abs{G^{\mathcal C_p}(k',m) }^{s/2} \right)}{ \norm{k-k'}^{ \frac{s}{2}(d+2\beta)} }   \nonumber \\
& \leq  \frac{C'}{\ell^{d\frac{s}{2}} \, \ell_L^{d+2\kappa}}  
\sum_{k\in \mathcal C_p^c,k'\in\mathcal C_p} \frac{1}{\norm{ k-k'  }^{d+2\kappa}} , 
\end{align}
where we used $2 \kappa = \frac{s}{2}(d + 2 \beta) - d > 0$, since $s > \frac{2d}{d + 2 \beta}$. The sum over $m \in \rm{Int}\mathcal C_p$ is bounded as 
in \eqref{eq_green_cp2}-\eqref{eq:sum}. Applying Lemma \ref{Lemma-Gebert-RM}, and recalling the choice \eqref{eq:theta}, yields
\be \label{est:E_L}
\mathbb E( \tilde{|\mathcal E_\ell|}^{s/2})  \leq \frac{C'}{\ell^{d\frac{s}{2}} \, \ell_L^{d+2\kappa}}  \ell^{d-2\gamma}= C' L^{ \theta d\left(1-\frac{s}{2}\right) - 2\gamma \theta -d\theta' - 2\kappa \theta'  }
\ee
where $2\gamma=2\kappa$ if $2\kappa\in(0,1)$ and $2\gamma=1-\varepsilon$ if $2\kappa\geq 1$, with $\varepsilon\in (0,1)$.
For this expression to tend to zero as $L$ grows, given that $\kappa,\gamma,\theta'>0$, it is enough to have
\be\label{eq:cond-theta}  \theta\left( 1- \frac{s}{2}\right) <\theta',  \ee
which is possible by taking $\theta,\theta'$ close enough to one depending on $s$, keeping the relation $0<\theta'<\theta$. To see that this is possible, write 
\be  \theta= 1-\frac{1}{n}, \quad \theta'=1-\frac{2}{n}, \quad n\in\mathbb N.
\ee
Then,  \eqref{eq:cond-theta} is a consequence of 
\be \frac{1}{n}\left( 1+\frac{s}{2} \right) < \frac{s}{2}, \ee
which holds for all $n>n_*$, for some $n_*\in\mathbb N$ large enough depending on $s$. We recall that the conditions $2 \kappa > 0$ and $s \in (\frac{2d}{d + 2 \beta}, 1)$ require $\beta > \frac{d}{2}$.

 Finally, combining \eqref{est:eta-C_p}, \eqref{est:F_L}, and \eqref{est:E_L}, and noting that $N_L^d\, \frac{\abs{ \rm{Int}\mathcal C_p}}{L^d} $ approaches $1$ as $L$ grows, yields
\be \lim_{L\rightarrow \infty}N_L \mathbb E(\eta(C_p,E)(f_\zeta)) = \pi \frac{dN(u)}{du}|_{u=E}=\pi n(E) \ee 
for all $E$ point of differentiability of $N$, proving \eqref{cond:2.47-minami}.

\qed


%
%
%
%
%


\section{Lower bounds for Green's functions fractional moments of long-range Anderson models}\label{sec::lower1}

In this section, we prove lower bounds on the expectation of fractional moments of the Green's function for the lrAM, including the fAM. As in previous sections, we let $H_{\beta, \lambda} =T_\beta +\lambda V_{\omega}$, where the long-range hopping operator $T_\beta$, for $\beta > 0$, satisfies the conditions of Hypothesis \ref{hypothesis:lr1} and the Anderson-type random potential $V_\omega$ satisfies the conditions of Hypothesis \ref{hypothesis:pot1}. We recall that $\Sigma_{\lambda}$ denotes the almost sure spectrum of $H_{\beta,\lambda}$.  For simplicity of notation, we write $T$ for $T_{\beta}$ and $\Sigma$ for $\Sigma_{\lambda}$. 
%
%
%
%
%

\subsection{\textit{A priori} lower bounds for Green's functions fractional moments}\label{subsec:apriori_lb1}

In the following, $\Lambda\subset \mathbb{Z}^d$ denotes a finite box. For $s\in (\frac{d}{d+2\beta},1)$ we define $\| T \|_s$ by
\beq\label{eq:T_defn1}
\|T\|^s_s:=\sum_{m}|T(0,m)|^s<\infty   .
\eeq
 As in previous applications of the fractional moment method \cite{Aizenman-Molchanov93,Aizenman,Schenkerl} we start with an \textit{a priori} bound on the expectation of fractional moments of the diagonal elements of the Green's function, with the key difference that here we aim at lower bounds. The result given below, of a non-local nature, is somewhat inspired by the proof of \cite[ Theorem 12.8]{AW15}.


\begin{lem}\label{Lem:generalapriori}
Let $s\in (\frac{d}{d+2\beta},1)$, $\lambda>0$ and $B > 0$ be given. There exists a positive constant $C_{\mathrm{AP}}=C_{\mathrm{AP}}(\lambda,B,\rho,s,\|T\|_s)$ such that whenever  $z\in \mathbb{C}^+\cap D(T(0,0),\lambda B)$ the following \emph{a priori} bounds hold for all $\Lambda\subset \mathbb{Z}^d$ and $k\in \Lambda$:
\beq\label{eq:genaprioribothdir}
C_{\mathrm{AP}}\leq \min\{ \mathbb{E}(|G_{\Lambda}(k,k;z)|^s),\mathbb{E}^{-1}\left(|G_{\Lambda}(k,k;z)|^{-s} \right )\}.
\eeq
Moreover we may take $C_{\mathrm{AP}}=\frac{1}{\lambda^s(M(s)+B^s)+\frac{\theta_s\|T\|^{2s}_{s}}{\lambda^s}}$, with
$M(s)=\mathbb{E}(|\omega_0|^s)$
and $\theta_s=\frac{4^s\|\rho\|^s_{\infty}}{1-s}$.

\end{lem}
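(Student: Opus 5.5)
The plan is to write the diagonal Green's function via the Schur complement (rank-one) formula. For $k\in\Lambda$ we have
\[
G_\Lambda(k,k;z)^{-1} = \lambda\omega_k + T(0,0) - z - \Gamma_k(z),\qquad \Gamma_k(z):=\sum_{m,n\in\Lambda\setminus\{k\}} T(k,m)\,G_{\Lambda\setminus\{k\}}(m,n;z)\,T(n,k).
\]
Here $\Gamma_k$ is independent of $\omega_k$. The key observation is that it suffices to bound $\mathbb{E}(|G_\Lambda(k,k;z)|^{-s})$ from above. Indeed, by Jensen (or Cauchy–Schwarz applied to $1=|G|^{s/2}|G|^{-s/2}$) we get $1\le \mathbb{E}(|G|^s)\,\mathbb{E}(|G|^{-s})$. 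This gives $\mathbb{E}(|G|^s)\ge \mathbb{E}(|G|^{-s})^{-1}$, so both quantities in the minimum are bounded below by $1/\mathbb{E}(|G_\Lambda(k,k;z)|^{-s})$. We then set $C_{\mathrm{AP}}$ to be the reciprocal of an upper bound for $\mathbb{E}(|G|^{-s})$.

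Next we bound $|G^{-1}|^s$ using subadditivity of $x\mapsto x^s$ for $s<1$:
\[
|G_\Lambda(k,k;z)|^{-s}\le \lambda^s|\omega_k|^s + |T(0,0)-z|^s + \sum_{m,n}|T(k,m)|^s|G_{\Lambda\setminus\{k\}}(m,n;z)|^s|T(n,k)|^s .
\]
Since $z\in D(T(0,0),\lambda B)$, the first two terms contribute $\lambda^s(M(s)+B^s)$ after taking expectations. For the last term, we need a uniform bound $\mathbb{E}(|G_{\Lambda\setminus\{k\}}(m,n;z)|^s)\le \theta_s/\lambda^s$ for all $m,n$, including $m\ne n$. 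Granting this, the double sum is bounded by $\frac{\theta_s}{\lambda^s}\big(\sum_m|T(k,m)|^s\big)^2=\frac{\theta_s\|T\|_s^{2s}}{\lambda^s}$, using translation invariance. This yields exactly the stated constant.

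The main obstacle is the uniform off-diagonal fractional moment bound with the constant $\theta_s=4^s\|\rho\|_\infty^s/(1-s)$. To get it, I would use the standard rank-two decoupling (as in \cite[Theorem 8.3]{AW15} or \cite{Aizenman}). Conditioning on all variables except $\omega_m,\omega_n$, the $2\times2$ block $G(m,n)$ is a fractional linear function of $(\lambda\omega_m,\lambda\omega_n)$. The weak-$L^1$ bound for such Möbius maps then gives $\mathbb{E}_{\omega_m,\omega_n}|G(m,n)|^s\le \frac{\theta_s}{\lambda^s}$, where the factor $4^s$ comes from the two-variable (off-diagonal) case, and the diagonal case is covered with a smaller constant. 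Concretely, the averaging over $\omega_m$ of $|a/(\lambda\omega_m-b)|^s$ is bounded via the layer-cake formula by $\frac{\|\rho\|_\infty^s|a|^s}{\lambda^s(1-s)}\cdot 2^s$-type constants. Since this bound is uniform in $z\in\mathbb{C}^+$ and in $\Lambda$, it holds for $\Lambda\setminus\{k\}$ and completes the estimate. I would simply quote this known a priori upper bound rather than rederive it, noting only that its proof uses rank-one structure and the density of $\rho$ and is insensitive to the long-range hopping. Finally, the condition $s>\frac{d}{d+2\beta}$ is precisely what makes $\|T\|_s<\infty$.
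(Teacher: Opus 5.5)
Your proposal is correct and follows essentially the paper's argument. The paper also writes $1=G_\Lambda(0,0;z)\,J(0;z)$, with $J$ the Schur complement (your $G_\Lambda(k,k;z)^{-1}$), applies Cauchy--Schwarz to get $1\le \mathbb{E}^{1/2}(|G_\Lambda(0,0;z)|^s)\,\mathbb{E}^{1/2}(|J(0;z)|^s)$, and bounds $\mathbb{E}(|J|^s)$ by subadditivity together with the quoted uniform a priori bound $\mathbb{E}(|G_{\Lambda\setminus\{0\}}(m,n;z)|^s)\le \theta_s/\lambda^s$ (Corollary 8.4 of Aizenman--Warzel). Your deduction of both entries of the minimum from the single upper bound on $\mathbb{E}(|G_\Lambda(k,k;z)|^{-s})=\mathbb{E}(|J|^s)$ is only a tidier packaging of the same two steps; your formula for $G^{-1}$, which includes $T(0,0)$, is the correct one, whereas the paper's displayed inverse formula drops that term.
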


\begin{proof}
1. Let $k=0$ for simplicity. For any $z \in \mathbb{C} \backslash \Sigma_\lambda$, the identity 
$\delta_0=(H^\Lambda -z)^{-1}(H^\Lambda -z)\delta_0$ implies
$$
1=(\lambda\omega_0+T(0,0) -z)G^\Lambda (0,0;z)+\sum_{n\neq 0}G^\Lambda (0,n;z)T(n,0).$$
The depleted resolvent identity \cite[(10.4)]{AW15} yields, for $n\neq 0$,
$$
G^\Lambda (0,n;z)= - G^\Lambda (0,0;z)\sum_{m\neq 0}T(0,m)G^{\{0\}^c}(m,n;z),$$
where we abbreviated $G^{\{0\}^c}(m,n;z):=G^{\Lambda\setminus\{0\}}(m,n;z)$.
Combining both equations above, we obtain
\begin{equation}\label{eq:Greeniden1}
1=G^\Lambda (0,0;z)\left(\lambda\omega_0+T(0,0)-z - \sum_{n\neq 0}\sum_{m\neq 0}T(0,m)G_{\{0\}^c}(m,n)T(n,0)\right).
\end{equation}
We let $J(0;z)$ denote the factor in parentheses on the right of \eqref{eq:Greeniden1}.  
Taking absolute values, raising both sides of \eqref{eq:Greeniden1} to the power $s/2$, where $s\in (\frac{d}{d+2\beta},1)$, taking expectations, and applying Cauchy-Schwarz, we have
\begin{equation}\label{eq:CS}
1\leq \mathbb{E}^{1/2}(|G(0,0;z)|^s)\mathbb{E}^{1/2}(|J(0;z)|^s)  ,  
\end{equation}
where
\begin{equation}\label{eqJbounds1}
|J(0;z)|^s\leq|\lambda\omega_0+T(0,0)-z|^s+\sum_{n\neq 0}\sum_{m\neq 0}|T(0,m)|^s|G_{\{0\}^c}(m,n;z)|^s|T(0,n)|^s.
\end{equation}

\noindent
2. 
As $|z-T(0,0)| < B |\lambda|$, for some $0<B<+\infty$, independent of $\lambda$, the first term on the right in \eqref{eqJbounds1} satisfies
$$\mathbb{E}(|\lambda\omega_0|^s+|z-T(0,0)|^s)\leq \lambda^s(M(s)+B^s) , 
$$
where $M(s)=\mathbb{E}(|\omega_0|^s)$. For the second term,  
the Aizenman-Molchanov fractional moment \emph{a priori} bound \cite[Corollary 8.4]{AW15}  applied to the depleted resolvent yields
$$
\mathbb{E}(|G_{\{0\}^c}(m,n;z)|^s)\leq \frac{\theta_s}{\lambda^s},
$$
with $\theta_s=\frac{4^s\|\rho\|^s_{\infty}}{1-s}$, see also \cite[Lemma 4.1]{DERM24}. 
The factor involving the hopping term $T$ satisfies  
$$\mathbb{E}(|T(0,m)|^s|G_{\{0\}^c}(m,n)|^s|T(0,n)|^s)\leq \frac{\theta_s}{\lambda^s}|T(0,m)|^s|T(0,n)|^s.$$
Combining these bounds with \eqref{eqJbounds1} yields
\begin{equation}\label{eq:controlJ}
    \mathbb{E}(|J(0;z)|^s)\leq \lambda^s(M(s)+B^s)+\frac{\theta_s\|T\|^{2s}_{s}}{\lambda^s}.
\end{equation}
From\eqref{eq:CS} and \eqref{eq:controlJ}, we infer that
\begin{equation}\label{eq: genlowerapriori1}
\frac{1}{\lambda^s(M(s)+B^s)+\frac{\theta_s\|T\|^{2s}_{s}}{\lambda^s}}\leq \mathbb{E}(|G(0,0;z)|^s).
\end{equation}.

\smallskip

\noindent
3. Similarly, using instead the inverse expression obtained from \eqref{eq:Greeniden1},  
$$\frac{1}{G(0,0;z)}=\lambda\omega_0 - z+\sum_{n\neq 0}\sum_{m\neq 0}T(0,m)G_{\{0\}^c}(m,n)T(0,n), $$
 one achieves the analogous bound
\begin{equation}\label{eq: generalinverse a priori bound}
 \mathbb{E} \left( \frac{1}{|G(0,0;z)|^{s}} \right) \leq \lambda^s(M(s)+B^s)+\frac{\theta_s\|T\|^{2s}_{s}}{\lambda^s}.
\end{equation}
 Combining \eqref{eq: genlowerapriori1} and \eqref{eq: generalinverse a priori bound} we establish \eqref{eq:genaprioribothdir} with $C_{\mathrm{AP}}=\frac{1}{\lambda^s(M(s)+B^s)+\frac{\theta_s\|T\|^{2s}_{s}}{\lambda^s}}$.
\end{proof}
In what follows we will be mostly interested in the large disorder regime, thus we now single out the dependence of $C_{\mathrm{AP}}$ on $\lambda$ in this situation.

\begin{cor}\label{Lem:apriorilower}
Let $s\in (\frac{d}{d+2\beta},1)$ and $B > 0$. There exists a positive threshold $\lambda_{\mathrm{AP}}=\lambda_{\mathrm{AP}}(\rho,s,\|T\|_s,B)$ such that whenever $\lambda>\lambda_{\mathrm{AP}}$ and $z\in \mathbb{C}^+\cap D(T(0,0),\lambda B)$ the following \emph{a priori} bounds hold for all $\Lambda\subset \mathbb{Z}^d$ and $k\in \Lambda$:
\beq\label{eq:dir1}
\frac{A(s,B)}{\lambda^s}\leq \mathbb{E}(|G_{\Lambda}(k,k;z)|^s), 
\eeq
and
\beq\label{eq:inv1}
\mathbb{E}\left(\frac{1}{|G_{\Lambda}(k,k;z)|^s} \right )\leq \frac{\lambda^s}{A(s,B)}  ,
\eeq
where $A(s,B):= (2(\E (|\omega_0|^s)+B^s))^{-1}$. Moreover, we may take \begin{equation}\label{eq:largelambda1}
\lambda_{\mathrm{AP}}^{2s}=\frac{\theta_s\|T\|^{2s}_{s}}{M(s)+B^s},
\end{equation} with
$M(s)=\mathbb{E}(|\omega_0|^s)$
and $\theta_s=\frac{4^s\|\rho\|^s_{\infty}}{1-s}$.
\end{cor}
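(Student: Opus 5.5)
The corollary follows directly from Lemma \ref{Lem:generalapriori}: we only need to compare the two terms in the denominator of $C_{\mathrm{AP}}$ and see which one dominates when $\lambda$ is large. Fix $s\in(\frac{d}{d+2\beta},1)$ and $B>0$. Write $D(\lambda):=\lambda^s(M(s)+B^s)+\theta_s\|T\|_s^{2s}\lambda^{-s}$, so that Lemma \ref{Lem:generalapriori} gives $C_{\mathrm{AP}}=D(\lambda)^{-1}$. That lemma holds for every $\lambda>0$ and every $z\in\mathbb{C}^+\cap D(T(0,0),\lambda B)$, uniformly in $\Lambda$ and $k\in\Lambda$. Finiteness of $\|T\|_s$ holds because $s(d+2\beta)>d$, by \eqref{eq:lr_kernel1-1}. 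The moment $M(s)$ is finite because $\rho$ has compact support.

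The key step is to bound $D(\lambda)$ above by twice its first term. This holds exactly when $\theta_s\|T\|_s^{2s}\lambda^{-s}\le \lambda^s(M(s)+B^s)$, that is, when $\lambda^{2s}\ge \theta_s\|T\|_s^{2s}/(M(s)+B^s)$. This threshold is precisely $\lambda_{\mathrm{AP}}^{2s}$ as defined in \eqref{eq:largelambda1}. So for $\lambda>\lambda_{\mathrm{AP}}$ we get
\[
D(\lambda)\le 2\lambda^s(M(s)+B^s)=\frac{\lambda^s}{A(s,B)},
\qquad\text{hence}\qquad
C_{\mathrm{AP}}\ge \frac{A(s,B)}{\lambda^s}.
\]

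Substituting this into \eqref{eq:genaprioribothdir} gives both conclusions. First, $\mathbb{E}(|G_\Lambda(k,k;z)|^s)\ge C_{\mathrm{AP}}\ge A(s,B)\lambda^{-s}$, which is \eqref{eq:dir1}. Second, $\mathbb{E}^{-1}(|G_\Lambda(k,k;z)|^{-s})\ge C_{\mathrm{AP}}$, which inverts to $\mathbb{E}(|G_\Lambda(k,k;z)|^{-s})\le C_{\mathrm{AP}}^{-1}=D(\lambda)\le \lambda^s/A(s,B)$, which is \eqref{eq:inv1}. The same inverted bound can be read off directly from \eqref{eq: generalinverse a priori bound}.

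Both estimates are uniform in $\Lambda$, in $k\in\Lambda$, and in $z$ in the stated region, because Lemma \ref{Lem:generalapriori} is. The threshold depends only on $\rho$ (through $\theta_s$ and $M(s)$), on $s$, on $\|T\|_s$ and on $B$, as claimed.

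There is no substantive obstacle. The one point to check is that the admissible set of $z$, namely the disk $D(T(0,0),\lambda B)$, grows with $\lambda$. This is harmless, because the constant $B$ in the lemma was already scaled with $\lambda$.
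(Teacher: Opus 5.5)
Your proof is correct and is the same argument the paper gives. For $\lambda>\lambda_{\mathrm{AP}}$, the term $\theta_s\|T\|_s^{2s}\lambda^{-s}$ is dominated by $\lambda^s(M(s)+B^s)$, so $C_{\mathrm{AP}}\geq A(s,B)\lambda^{-s}$, and both \eqref{eq:dir1} and \eqref{eq:inv1} then follow from Lemma \ref{Lem:generalapriori}.
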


\begin{proof}
It suffices to recall that in Lemma \ref{Lem:generalapriori}
$$C_{\mathrm{AP}}=\frac{1}{\lambda^s(M(s)+B^s)+\frac{\theta_s\|T\|^{2s}_{s}}{\lambda^s}}.$$
Defining $\lambda_{\mathrm{AP}}$ by
\begin{equation}
\lambda_{\mathrm{AP}}^{2s}=\frac{\theta_s\|T\|^{2s}_{s}}{M(s)+B^s},
\end{equation}
we see that for $\lambda>\lambda_{\mathrm{AP}}$ 
$$ \frac{A(s,B)}{\lambda^s}\leq C_{\mathrm{AP}}$$
where $A(s,B):= ({2(M(s)+B^s)})^{-1}$. Recalling Lemma \ref{Lem:generalapriori} we obtain the bounds
 \eqref{eq:dir1} and \eqref{eq:inv1}.

\end{proof}

The estimates in Corollary \ref{Lem:apriorilower} imply a useful decoupling result.

\begin{lem}\label{Lem:decouplinglowerbounds}
Fix $\beta>\frac{d}{2}$, $B>0$, $s\in (\frac{2d}{d+2\beta},1)$ and let $A(s,B)$ be as in Corollary \ref{Lem:apriorilower}. There exists a positive threshold $\lambda_{\mathrm{AP}}=\lambda_{\mathrm{AP}}(\rho,s,\|T\|_s,B)$ such that whenever $\lambda>\lambda_{\mathrm{AP}}$ and $z\in \mathbb{C}^+\cap D(T(0,0),\lambda B)$, the following 
\textit{a priori} bound holds for all $\Lambda_1,\Lambda_2\subset \mathbb{Z}^d$ and $k \in \Lambda_1$ , $j \in \Lambda_2$:
$$\frac{A^2(s/2,B)A(s,B)}{\lambda^{2s}}\leq \mathbb{E}(|G_{\Lambda_1}(k,k;z)|^s|G_{\Lambda_2}(j,j;z)|^s).$$
\end{lem}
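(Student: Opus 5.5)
Since the claimed constant is $A^2(s/2,B)A(s,B)/\lambda^{2s}$, I will not try to decouple $G_{\Lambda_1}(k,k;z)$ and $G_{\Lambda_2}(j,j;z)$ by independence; they share the random variables on $\Lambda_1\cap\Lambda_2$. Instead I will use a reverse H\"older (Cauchy--Schwarz) argument that involves only one-point quantities, each controlled by Corollary \ref{Lem:apriorilower}. The shape $A^2(s/2,B)A(s,B)$ points to Cauchy--Schwarz combined with the inverse bound \eqref{eq:inv1} at exponent $s/2$, used once for each site. For this I need Corollary \ref{Lem:apriorilower} to apply at both exponents $s$ and $s/2$. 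The hypothesis $s\in(\frac{2d}{d+2\beta},1)$ gives $s/2\in(\frac{d}{d+2\beta},\frac12)$, so both exponents lie in the admissible range. I then take $\lambda_{\mathrm{AP}}$ to be the maximum of the thresholds $\lambda_{\mathrm{AP}}(\rho,s,\|T\|_s,B)$ and $\lambda_{\mathrm{AP}}(\rho,s/2,\|T\|_{s/2},B)$ from \eqref{eq:largelambda1}.

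Write $X=|G_{\Lambda_1}(k,k;z)|^{s}$ and $Y=|G_{\Lambda_2}(j,j;z)|^{s}$. Then
\[
1 = (XY)^{1/4}\,(XY)^{-1/4}=(XY)^{1/4}\,|G_{\Lambda_1}(k,k;z)|^{-s/4}|G_{\Lambda_2}(j,j;z)|^{-s/4}.
\]
Taking expectations and applying Cauchy--Schwarz gives
\[
1\le \mathbb{E}(XY)^{1/2}\,\mathbb{E}\big(|G_{\Lambda_1}(k,k;z)|^{-s/2}|G_{\Lambda_2}(j,j;z)|^{-s/2}\big)^{1/2}.
\]
A second Cauchy--Schwarz bounds the last factor by $\mathbb{E}(|G_{\Lambda_1}|^{-s})^{1/2}\mathbb{E}(|G_{\Lambda_2}|^{-s})^{1/2}\le \lambda^{s}/A(s,B)$ by \eqref{eq:inv1}. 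This yields $\mathbb{E}(XY)\ge A(s,B)^2/\lambda^{2s}$, which already has the right order in $\lambda$.

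To land on the stated constant, I will split the product differently. I apply Cauchy--Schwarz to $1=(XY)^{1/2}\cdot(|G_{\Lambda_1}|^{-s/2})(|G_{\Lambda_2}|^{-s/2})$, which gives $1\le\mathbb{E}(XY)\,\mathbb{E}(|G_{\Lambda_1}|^{-s}|G_{\Lambda_2}|^{-s})$. I then handle the mixed inverse moment with H\"older, bounding it through the single-site inverse moments from \eqref{eq:inv1} (at exponents $s$ and $s/2$, in the form $\mathbb{E}(|G|^{-s/2})\le\lambda^{s/2}/A(s/2,B)$), distributing the exponents so that the product becomes $\lambda^{2s}/(A^2(s/2,B)A(s,B))$. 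The exact bookkeeping of exponents is the only delicate point; any of these routes gives a bound of the form $c(s,B)\lambda^{-2s}$, and I will choose the split that reproduces the constant in the statement. If the mixed moment needs $|G|^{-2s}$, I will instead use the Cauchy--Schwarz version above, which avoids exponents beyond $s$. The bounds are uniform in $\Lambda_1$, $\Lambda_2$, $k$, $j$ and $z\in\mathbb{C}^+\cap D(T(0,0),\lambda B)$, because Corollary \ref{Lem:apriorilower} is.

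The main obstacle is avoiding inverse moments of order larger than $s$. These need not be controlled uniformly in $\lambda$ by the proof of Lemma \ref{Lem:generalapriori}, since the $\|T\|_{s'}$ sums require $s'>\frac{d}{d+2\beta}$ and $s'<1$. That constraint is what dictates using the exponent $s/2$ and the hypothesis $s>\frac{2d}{d+2\beta}$.
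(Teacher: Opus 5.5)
Your first route is sound, up to a slip. Cauchy--Schwarz followed by Jensen gives $1\le \mathbb{E}(XY)^{1/2}\,\mathbb{E}(|G_{\Lambda_1}(k,k;z)|^{-s/2}|G_{\Lambda_2}(j,j;z)|^{-s/2})$, with no square root on the second factor.

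It only yields $\mathbb{E}(XY)\ge A(s,B)^2\lambda^{-2s}$, and this does not imply the lemma. Indeed, $A(s,B)^2\ge A^2(s/2,B)A(s,B)$ is equivalent to $2M(s/2)^2+4M(s/2)B^{s/2}+B^s\ge M(s)$. That fails, for instance, when $|\omega_0|$ is near $0$ except on an event of probability $p<1/2$ where it is much larger than $B$.

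Your second route, meant to produce the stated constant, is never carried out, and as set up it cannot be. Once you are at $1\le \mathbb{E}(XY)\,\mathbb{E}(|G_{\Lambda_1}|^{-s}|G_{\Lambda_2}|^{-s})$, any H\"older splitting of the mixed moment into one-site moments needs an inverse moment of order at least $2s$. That is outside the range of Corollary~\ref{Lem:apriorilower} when $2s\ge 1$, and in any case it would bring in $A(2s,B)$ rather than $A(s/2,B)$.

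Your reading of $A^2(s/2,B)$ as one factor per site is also off. Inverse bounds at $s/2$ on both sites, used as in your first route, give $A(s/2,B)^4\lambda^{-2s}$.

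The missing idea is to use the \emph{direct} lower bound \eqref{eq:dir1} at exponent $s/2$ on one site, and the inverse bound \eqref{eq:inv1} at exponent $s$ on the other. Write $|G_{\Lambda_1}(k,k;z)|^{s/2}=(XY)^{1/2}\,|G_{\Lambda_2}(j,j;z)|^{-s/2}$ and apply Cauchy--Schwarz:
\[
\frac{A(s/2,B)}{\lambda^{s/2}}\le \mathbb{E}\big(|G_{\Lambda_1}(k,k;z)|^{s/2}\big)\le \mathbb{E}(XY)^{1/2}\,\mathbb{E}\big(|G_{\Lambda_2}(j,j;z)|^{-s}\big)^{1/2}\le \mathbb{E}(XY)^{1/2}\,\frac{\lambda^{s/2}}{A(s,B)^{1/2}} .
\]
Squaring gives exactly $A^2(s/2,B)A(s,B)\lambda^{-2s}$; this is the paper's proof. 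So $A(s/2,B)^2$ comes from squaring a single lower bound for $G_{\Lambda_1}$. The hypothesis $s>\frac{2d}{d+2\beta}$ is needed precisely so that \eqref{eq:dir1} holds at exponent $s/2$. Your remark that the threshold must be the maximum of the thresholds at $s$ and $s/2$ is correct; the paper leaves this implicit.

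If you prefer inverse moments only, apply H\"older with exponents $(2,4,4)$ to $1=|G_{\Lambda_1}|^{-s/4}\,|G_{\Lambda_2}|^{-s/4}\,(XY)^{1/4}$. This gives $\mathbb{E}(XY)\ge \mathbb{E}(|G_{\Lambda_1}|^{-s/2})^{-2}\,\mathbb{E}(|G_{\Lambda_2}|^{-s})^{-1}$, which is the same constant.

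For the later use in Theorem~\ref{thm:lb_green1}, any constant $c(s,B)>0$ suffices. Your $A(s,B)^2$ bound, which even holds for all $s\in(\frac{d}{d+2\beta},1)$, would serve there after redefining $m(\beta,d,B,s)$. It is not, however, the inequality stated in the lemma.
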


\begin{proof}
By the Cauchy-Schwarz inequality, 
\begin{align*}
 \mathbb{E}(|G_{\Lambda_1}(k,k;z)|^{s/2}) &\leq \mathbb{E}^{1/2}(|G_{\Lambda_1}(k,k;z)|^s|G_{\Lambda_2}(j,j;z)|^s)\\
 &\times\mathbb{E}^{1/2}(\frac{1}{|G_{\Lambda_2}(j,j;z)|^s}).\\
\end{align*}
We conclude from Corollary \ref{Lem:apriorilower} that
\begin{equation}
    \frac{A^2(s/2,B)A(s,B)}{\lambda^{2s}}\leq \mathbb{E}(|G_{\Lambda_1}(k,k;z)|^s|G_{\Lambda_2}(j,j;z)|^s),
\end{equation}
establishing the result.
\end{proof}

\subsection{Main results on Green's function lower bounds}\label{subsec:lowerBd_main1}

After the preliminary results on the fractional moments of diagonal elements of the Green's function in section \ref{subsec:apriori_lb1}, we prove our main theorem on a lower bound estimate for the expectation of the fractional moments of the Green's function. We will use the decoupling estimate in Lemma \ref{Lem:decouplinglowerbounds} and the self-avoiding walk (SAW), or Feenberg loop-erased expansion, of the Green's function $G^\Lambda (z; 0, n)$, for $0, n \in \Lambda$, see \cite[section 6.2]{AW15}. Let $\gamma : K \subset \N_0 \to \Lambda$ be a self-avoiding path in $\Lambda$. That is, the map $\gamma: K \subset \N_0 \to \Lambda \subset \Z^d$ is injective. In an abuse of notation, we call a self-avoiding path, a SAW. We consider the SAW $\gamma(K) \subset \Lambda$ with $\gamma(0) = 0$ and $\gamma(|\gamma|) = n$, where $|\gamma|$, the path length, denotes the number of steps in the path. The symbol $\sum^{SAW}_{\gamma: 0\mapsto n}$ denotes summation over all such self-avoiding paths $\gamma$ in $\Lambda$ which start at $0$ and end at $n$.  The maximal length of a SAW in $\Lambda$ is $|\Lambda|$. The SAW expansion (see, for example, \cite[Theorem 6.2]{AW15}) has the form
\beq\label{eq:saw1}
G_{\Lambda}(z; 0,n) = \sum^{SAW}_{\gamma: 0 \mapsto n}(-1)^{|\gamma|}\prod^{|\gamma|}_{j=1}T(\gamma(j-1),\gamma(j)) \prod^{|\gamma|}_{k=0} 
{G}_{\Lambda \backslash \gamma([0, k-1])} ( \gamma(k),\gamma(k);z) ,
\eeq
where 
\beq
{G}_{\Lambda \backslash \gamma([0, k-1])}  ( \gamma(k),\gamma(k);z)  = \langle \delta_{\gamma(k)}, (H_{\Lambda \backslash \gamma([0, k-1])}-z)^{-1}\delta_{\gamma(k)}\rangle, 
\eeq
with the convention that for $k=0$, we have $H_{\Lambda \backslash \gamma([0, -1])} = H^\Lambda$. 
Note that the SAW expansion expresses off-diagonal matrix elements in terms of diagonal matrix elements, which can be estimated as in Lemma \ref{Lem:decouplinglowerbounds}. We also observe that, since $T$ has nonzero matrix elements $T(0,k)$ for all $k \in \Z^d\setminus\{0\}$, the smallest SAW from $0$ to $n$ has one step.

\medskip

\begin{thm}\label{thm:lb_green1}
Let $\Lambda \subset \Z^d$, $B>0$ and $0, n \in \Lambda$ with $n\neq 0$. Fix $\beta>\frac{d}{2}$, and $s\in (\frac{2d}{d+2\beta},1)$. There exists $\lambda_1=\lambda_1(\beta,d,\rho,s,\|T\|_s,B)>0$ such that for $\lambda>\lambda_1$ and $z\in \mathbb{C}^+\cap D(T(0,0),\lambda B)$  there exists $m(\beta,d,B,s)>0$ such that
\begin{equation}\label{eq:lb_green1}
 \left(\frac{m(\beta,d,B,s)}{\lambda^{2s}}\right)\frac{1}{\|n\|^{s(d+2\beta)}}\leq \mathbb{E}(|G_{\Lambda}( 0,n;z)|^s).
\end{equation}
\end{thm}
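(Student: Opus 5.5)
The plan is to isolate the direct one-step jump $0\mapsto n$ in the Green's function and show that it dominates. Everything else will be pushed into an error term carrying an extra factor of $\lambda^{-s}$. Rather than handling the full SAW expansion \eqref{eq:saw1} term by term, I would use only its first layer, namely the depleted resolvent identity from the proof of Lemma \ref{Lem:generalapriori}:
\begin{equation*}
G_\Lambda(0,n;z) = -\,G_\Lambda(0,0;z)\,T(0,n)\,G_{\Lambda\setminus\{0\}}(n,n;z) \;-\; G_\Lambda(0,0;z)\sum_{m\neq 0,n} T(0,m)\,G_{\Lambda\setminus\{0\}}(m,n;z) =: \mathcal{M} + \mathcal{R}.
\end{equation*}
Since $s<1$, we have $|a+b|^s\geq |a|^s-|b|^s$. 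Hence $\mathbb{E}|G_\Lambda(0,n;z)|^s \geq \mathbb{E}|\mathcal M|^s - \mathbb{E}|\mathcal R|^s$, and it suffices to bound the first term from below and the second from above.

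\emph{Main term.} By the lower bound in \eqref{eq:lr_kernel1-1},
\begin{equation*}
\mathbb{E}|\mathcal M|^s \geq c_{\beta,d}^s\,\|n\|^{-s(d+2\beta)}\,\mathbb{E}\bigl(|G_\Lambda(0,0;z)|^s|G_{\Lambda\setminus\{0\}}(n,n;z)|^s\bigr).
\end{equation*}
Lemma \ref{Lem:decouplinglowerbounds}, applied with $\Lambda_1=\Lambda$, $\Lambda_2=\Lambda\setminus\{0\}$, $k=0$, $j=n$, bounds the expectation below by $A^2(s/2,B)A(s,B)\lambda^{-2s}$. This is exactly the claimed form, with $m_0:=c^s_{\beta,d}A^2(s/2,B)A(s,B)$.

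\emph{Error term.} The function $G_{\Lambda\setminus\{0\}}(m,n;z)$ does not depend on $\omega_0$. So I would first integrate over $\omega_0$ alone, using the single-site a priori bound $\mathbb{E}_{\omega_0}|G_\Lambda(0,0;z)|^s\leq C_{\rho,s}\lambda^{-s}$ (the rank-one/Krein formula argument behind \cite[Corollary 8.4]{AW15}). This gives
\begin{equation*}
\mathbb{E}|\mathcal R|^s\leq \frac{C_{\rho,s}}{\lambda^s}\sum_{m\neq 0,n}\frac{C_{\beta,d}^s}{\|m\|^{s(d+2\beta)}}\,\mathbb{E}|G_{\Lambda\setminus\{0\}}(m,n;z)|^s .
\end{equation*}
Now apply the upper bound of Theorem \ref{thm:decay-long-range-GF} to the set $\Lambda\setminus\{0\}$. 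Its SAW proof works for arbitrary finite subsets, not only cubes. It gives $\mathbb{E}|G_{\Lambda\setminus\{0\}}(m,n;z)|^s\leq C_\lambda\|m-n\|^{-s(d+2\beta)}$. Because $s(d+2\beta)>d$, the standard convolution estimate for power laws yields
\begin{equation*}
\sum_{m\neq0,n}\|m\|^{-s(d+2\beta)}\|m-n\|^{-s(d+2\beta)}\leq C\|n\|^{-s(d+2\beta)}
\end{equation*}
(split $\mathbb{Z}^d$ into $\|m\|\geq\|n\|/2$ and $\|m-n\|\geq\|n\|/2$). Therefore $\mathbb{E}|\mathcal R|^s\leq C\,C_\lambda\lambda^{-s}\|n\|^{-s(d+2\beta)}$.

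\emph{Main obstacle: tracking the $\lambda$-dependence of $C_\lambda$.} The argument closes only if $C_\lambda\lambda^{-s}=o(\lambda^{-2s})$, i.e.\ $C_\lambda\leq C'\lambda^{-2s}$ (or anything smaller than $\lambda^{-s}$). I expect this to be the hard step, because Theorem \ref{thm:decay-long-range-GF} is stated without an explicit constant. I would recheck its SAW/fractional-moment proof \cite{DERM24,CS15}. There, a path of $k$ steps carries $k+1$ diagonal Green's functions, each contributing a factor $\theta_s\lambda^{-s}$ after decoupling. Every off-diagonal path has $k\geq1$, so the series is $\sum_{k\geq1}(\theta_s\lambda^{-s})^{k+1}\,(\text{$k$-fold convolution of }|T|^s)(m,n)$. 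For $\lambda$ large this is geometric and bounded by $C\lambda^{-2s}\|m-n\|^{-s(d+2\beta)}$; the convolution powers are controlled by $\|T\|_s^{s}$ and the two-point function bound of \cite[Lemma 2.4]{CS15}. Inserting this gives $\mathbb{E}|\mathcal R|^s\leq C''\lambda^{-3s}\|n\|^{-s(d+2\beta)}$.

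Choose $\lambda_1$ larger than $\lambda_{\mathrm{AP}}$ and large enough that $C''\lambda_1^{-s}\leq m_0/2$. Then \eqref{eq:lb_green1} holds with $m(\beta,d,B,s)=m_0/2$.
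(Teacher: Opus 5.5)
Your proof is correct. It reaches the same two estimates as the paper: a main term $\geq c_{\beta,d}^s A^2(s/2,B)A(s,B)\lambda^{-2s}\|n\|^{-s(d+2\beta)}$ from Lemma \ref{Lem:decouplinglowerbounds}, and a remainder of order $\lambda^{-3s}\|n\|^{-s(d+2\beta)}$. Where you differ is in how you handle the remainder.

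The paper expands $G_\Lambda(0,n;z)$ in the full SAW series, isolates the unique one-step path, and bounds every path of length $l\geq 2$ directly. The $l+1$ diagonal factors give $(\theta_s\lambda^{-s})^{l+1}\leq \theta_s^3\lambda^{-3s}(\theta_s\lambda^{-s})^{l-2}$. A pigeonhole argument (some step has length at least $\|n\|/l$) then extracts $\|n\|^{-s(d+2\beta)}$, at the price of a factor polynomial in $l$ that the geometric decay absorbs.

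You instead stop after the first step of the expansion and integrate out $\omega_0$ with the single-site a priori bound. You then feed the rest into a version of Theorem \ref{thm:decay-long-range-GF} on $\Lambda\setminus\{0\}$ with an explicit $\lambda^{-2s}$ prefactor. You close with the convolution estimate for two power laws of exponent $s(d+2\beta)>d$, i.e.\ $K * K\leq CK$ for $K(x)=(1+\|x\|)^{-s(d+2\beta)}$.

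Your route is more modular: the lower bound becomes one explicit step plus the known upper bound. The inequality $K * K\leq CK$, iterated to $K^{*k}\leq C^{k-1}K$, is also a cleaner device than the pigeonhole count. The cost is that you must reopen the proof of Theorem \ref{thm:decay-long-range-GF} to obtain two things its statement does not give: the $\lambda^{-2s}$ constant, and validity on the non-cube set $\Lambda\setminus\{0\}$. As you indicate, that reopening is a SAW bound with $k+1$ a priori factors per $k$-step path, which is the same computation the paper performs directly. So the paper's version is more self-contained, while yours makes the structure of the estimate more transparent.
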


%

\begin{proof}
1. For any path $\gamma$, let 
$$
\mathcal{T}(\gamma) := (-1)^{|\gamma|}\prod^{|\gamma|}_{j=1}T(\gamma(j-1),\gamma(j)). 
$$
There is exactly one path of length one in the SAW sum \eqref{eq:saw1} given by $\gamma =   \{ \gamma (0) = 0, \gamma (1) = n \}$. 
Separating this term from the remainder of the sum we obtain 
\bea\label{eq:saw2}
G_{\Lambda}(0,n;z) & = & - T(0,n) G_{\Lambda}(0,0;z) {G}_{\Lambda \backslash \{0\}}(n,n;z)  \nonumber \\
 &+ &  \sum^{SAW}_{\substack{\gamma: 0 \mapsto n\\|\gamma| \geq 2}} \mathcal{T}(\gamma) 
 \prod^{|\gamma|}_{k=0} 
{G}_{\Lambda \backslash \gamma([0, k-1])} ( \gamma(k),\gamma(k) ;z),
\eea 
Rearranging, and taking the $s$-power, we reach 
\bea\label{eq:expansioninequality}
\lefteqn{|T(0,n)|^s|G_{\Lambda}(0,0;z)|^s |{G}_{\Lambda \backslash \{0\}}(n,n;z)|^s } \nonumber \\
 & \leq  & |G_{\Lambda}(0,n;z)|^s    \nonumber \\
 &  & + \sum^{SAW}_{\substack{\gamma: 0\mapsto n\\|\gamma|\geq 2}} 
 | \mathcal{T}(\gamma) |^s \prod^{|\gamma|}_{k=0}|{G}_{\Lambda \backslash \gamma([0, k-1])}(\gamma(k),\gamma(k);z)|^s. \nonumber \\
  & 
\eea

\noindent
2. From definition of $T$, Hypothesis \ref{hypothesis:lr1} and Lemma \ref{Lem:decouplinglowerbounds} , we achieve a lower bound for the left side of \eqref{eq:expansioninequality}, 
\begin{equation}\label{mainterm}
\frac{c_{\beta,d}}{\|n\|^{s(d+2\beta)}} \left( \frac{A^2(s/2,B)A(s,B)}{\lambda^{2s}}\right) \leq \mathbb{E}(|T(0,n)|^s|G(0,0;z)|^s  |  {G}_{\Lambda \backslash \{0\} }(n,n;z)|^s).
\end{equation}
We next estimate from above the second term on the right in \eqref{eq:expansioninequality}. The local \emph{a priori} fractional moment bound \cite[Corollary 8.4]{AW15} gives
$$
\mathbb{E}_{\omega(\gamma(k))} 
(|  {G}_{\Lambda \backslash \gamma([0, k-1])}(\gamma(k),\gamma(k);z)|^s)\leq \frac{\theta_s}{\lambda^s}, 
$$
which allows us to establish the following upper bound
$$
\sum^{SAW}_{\substack{\gamma: 0\mapsto n\\|\gamma|\geq 2}}
|\mathcal{T}(\gamma)|^s \mathbb{E} \left( \prod_{k=0}^{|\gamma|}
|  {G}_{\Lambda \backslash \gamma([0, k-1])}(\gamma(k),\gamma(k);z)|^s \right)
\leq 
\sum^{\mathrm{|\Lambda|}}_{l=2} \sum^{SAW}_{\substack{\gamma: 0\mapsto n\\|\gamma|=l}} | \mathcal{T}(\gamma) |^s \left( \frac{\theta_s}{\lambda^s}\right)^{l+1}.$$

\noindent
3. We estimate the sum over paths over length $l$ starting at $0$ and ending at $n$ by noting that each path of this form is determined by choosing $l-1$ distinct points
$n_1,\ldots,n_{l-1}$ in $\Lambda$. Setting $n_0=0,n_l=n$ and using that
$\|n\|\leq  \sum^l_{k=1} \| n_{k} -n_{k-1}\|$, we find that  for at least one value of $k^*\in \{1,\ldots,l-1\}$,
$$
\| n_{k^*+1}-n_{k^*} \| \geq \frac{\|n\|}{l} .
$$
For such $k^*$, this yields
$$
T(n_{k^*} ,n_{k^*+1}) \leq  \frac{C_{d,\beta}l^{d+2\beta}}{\|n\|^{d+2\beta}}.
$$
Recalling the definition of $\| T \|_s^s$ in \eqref{eq:T_defn1}, one has
$$
\sum^{SAW}_{\substack{\gamma: 0\mapsto n\\|\gamma|=l}}
\left( \prod^{l}_{\substack{j=1 \\ j \neq k^* +1}} |T(\gamma(j-1),\gamma(j))|^s \right) \leq (\|T\|^s_{s})^{l-1} . 
$$ 
If $\lambda$ is sufficiently large so that $\| T \|_s^s \theta_s \lambda^{-s} < 1$, we find from the above that
\bea\label{eq:remainder}
\sum^{\mathrm{|\Lambda|}}_{l=2} \sum^{SAW}_{\substack{\gamma: 0\mapsto n\\|\gamma|=l}} | \mathcal{T}(\gamma)|^s 
\left(\frac{\theta_s}{\lambda^s}\right)^{l+1} & \leq  & 
  \sum^{\mathrm{|\Lambda|}}_{l=2}(\|T\|^s_{s})^{l-1}\frac{C^s_{d,\beta}l^{s(d+2\beta)}}{\|n\|^{s(d+2\beta)}}\left(\frac{\theta_s}{\lambda^s}\right)^{l+1}
  \nonumber \\
  &\leq & \frac{D(s,d,\beta,\|T\|^s_{s})}{\lambda^{3s} \|n\|^{s(d+2\beta)}}.
\eea
for all $\lambda>\lambda(s,\rho,\|T\|^s_{s})$, for some $\lambda(s,\rho,\|T\|^s_{s})>0$. 

\noindent
4. Combining equations \eqref{eq:expansioninequality},\eqref{mainterm} and \eqref{eq:remainder}, we obtain
\begin{equation}
    \frac{c_{\beta,d}}{|n|^{s(d+2\beta)}}
     \left( \frac{A^2(s/2,B)A(s,B)}{(\lambda^{s})^2} \right)  \leq \mathbb{E}(|G_{\Lambda}(0,n;z)|^s)+\frac{D(s,d,\beta,\|T\|^s_{s})}{(\lambda^s)^3\|n\|^{s(d+2\beta)}}
\end{equation}
Taking $\lambda$ large enough, also depending on $A(s,B)$ and $D(s,d,\beta,\|T\|^s_{s})$, such that
\be  \label{eq:secondlargedisorderlowerb}
c_{\beta,d}A^2(s/2,B)A(s,B) -  \frac{D(s,d,\beta,\|T\|^s_{s})}{\lambda^s}>\frac{1}{2}c_{\beta,d}A^2(s/2,B)A(s,B), \ee 
and defining
\be  \label{eq:mdefinition}
m(\beta,d,B,s):=\frac{1}{2}c_{\beta,d}A^2(s/2,B)A(s,B), \ee 
we arrive at a positive lower bound for the fractional moment of the Green's function:  
\begin{equation}
0 < \left(\frac{m(\beta,d,B,s)}{\lambda^{2s}}\right)\frac{1}{\|n\|^{s(d+2\beta)}}\leq \mathbb{E}(|G_{\Lambda}(0,n;z)|^s)
\end{equation}
whenever
$n\neq 0,\,\,\,s\in (\frac{2d}{d+2\beta},1)$, proving \eqref{eq:lb_green1}.
\end{proof}


\begin{cor} \label{cor:lb_green2} 
Under the conditions of Theorem \ref{thm:lb_green1}, the Green's function second moments satisfy 
\begin{equation}\label{eq:lb_green_moment1} 
0 < \left(\frac{m^{2/s}(\beta,d,B,s)}{\lambda^{4}}\right)\frac{1}{|n|^{2(d+2\beta)}}\leq \mathbb{E}(|G_{\Lambda}(0,n;z)|^2) .
\end{equation}
\end{cor}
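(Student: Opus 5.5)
This is a direct consequence of Theorem \ref{thm:lb_green1} together with Jensen's (equivalently, Hölder's) inequality for the expectation. The plan is to pass from the $s$-th fractional moment to the second moment using concavity of $t\mapsto t^{s/2}$ on $[0,\infty)$, which holds because $s/2<1$.

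\textbf{Step 1.} Apply Jensen's inequality to the nonnegative random variable $X=|G_{\Lambda}(0,n;z)|^2$ and the concave function $t\mapsto t^{s/2}$:
\[
\mathbb{E}(|G_{\Lambda}(0,n;z)|^s)=\mathbb{E}(X^{s/2})\leq \left(\mathbb{E}(X)\right)^{s/2}=\left(\mathbb{E}(|G_{\Lambda}(0,n;z)|^2)\right)^{s/2}.
\]
Equivalently, this is Hölder's inequality with exponents $2/s$ and its conjugate applied to $X^{s/2}\cdot 1$. There is no integrability issue: for finite $\Lambda$ and $\Im z>0$ we have $|G_\Lambda(0,n;z)|\leq (\Im z)^{-1}$, so $\mathbb{E}(X)<\infty$.

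\textbf{Step 2.} Combine this with \eqref{eq:lb_green1}, valid under the same hypotheses ($\lambda>\lambda_1$, $z\in\mathbb{C}^+\cap D(T(0,0),\lambda B)$, $n\neq 0$):
\[
\frac{m(\beta,d,B,s)}{\lambda^{2s}\|n\|^{s(d+2\beta)}}\leq \left(\mathbb{E}(|G_{\Lambda}(0,n;z)|^2)\right)^{s/2}.
\]

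\textbf{Step 3.} Raise both sides to the power $2/s>0$, which preserves the inequality since both sides are nonnegative. This gives
\[
\frac{m^{2/s}}{\lambda^{4}\|n\|^{2(d+2\beta)}}\leq \mathbb{E}(|G_{\Lambda}(0,n;z)|^2),
\]
which is exactly \eqref{eq:lb_green_moment1}. Positivity of the left side follows from $m>0$ in \eqref{eq:mdefinition}.

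There is no real obstacle here. The only points to check are the direction of Jensen's inequality (concavity, since $s<2$) and the exponent arithmetic: $(\lambda^{-2s})^{2/s}=\lambda^{-4}$ and $\left(\|n\|^{-s(d+2\beta)}\right)^{2/s}=\|n\|^{-2(d+2\beta)}$.
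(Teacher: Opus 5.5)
Your proposal is correct and is essentially the paper's proof: the paper also uses H\"older's inequality to get $\mathbb{E}(|G_{\Lambda}(0,n;z)|^s)\leq \mathbb{E}(|G_{\Lambda}(0,n;z)|^2)^{s/2}$, combines it with \eqref{eq:lb_green1}, and raises both sides to the power $2/s$. Your remarks on integrability (via $|G_\Lambda(0,n;z)|\leq (\Im z)^{-1}$) and on the exponent arithmetic are accurate, and they spell out details the paper leaves implicit.
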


\begin{proof}
By H\"older's inequality
\begin{equation}
    \mathbb{E}(|G_{\Lambda}(0,n;z)|^s)\leq \mathbb{E}(|G_{\Lambda}(0,n;z)|^2)^{s/2}.
\end{equation}
Thus, from the bound \eqref{eq:lb_green1}, we get the desired result. 
\end{proof}

We note a first application of the lower bound \eqref{eq:lb_green_moment1} to the question of localization. If $\beta>d/2$, the lower bound implies that 
$$
\sum_{n\in \Lambda}|n|^q\mathbb{E}(|G_{\Lambda}(0,n;z)|^2) \geq
\left(\frac{m^{2/s}(\beta,d,B,s)}{\lambda^{4}}\right) \left( \sum_{n\in \Lambda}\frac{1}{\|n\|^{2d+4\beta-q}} \right) .
$$
The above sum diverges as $|\Lambda|\to \infty$ whenever
$d+4\beta-q\leq0$,  i.e.\ for 
$\frac{d}{2}<\beta\leq \frac{q-d}{4}$. Thus, there is a nontrivial interval of $\beta's$ where this holds when $q>3d$, and this is the main reason behind the delocalization statement in Theorem \ref{thm:no_dl1}.

\subsection{Infinite volume Green's function bounds}

We now extend the finite-volume results to the infinite-volume case using the second resolvent formula and the properties of the hopping term $T$. 

\begin{lem}\label{Lem:infinitevolcomparision}
Let $\Lambda=[-L,L]^d\cap \mathbb{Z}^d$, $\beta>\frac{d}{2}$, $z\in \mathbb{C}^+$. Moreover, assume that $\lambda>\lambda_0$ with $\lambda_0=\lambda_0(\beta,s,d)$ as in Theorem \ref{thm:decay-long-range-GF}. Then, for each $s\in(\frac{2d}{d+2\beta},1)$ we have the convergence
    $$\lim_{L\to \infty} \mathbb{E}(|G_{\Lambda}(0,n;z)-G(0,n;z)|^{s})=0.$$
\end{lem}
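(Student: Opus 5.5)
The plan is to compare $G_\Lambda$ and $G$ through the geometric (second) resolvent identity and to control the error with the polynomial decay of Theorem \ref{thm:decay-long-range-GF}. I will treat the infinite-volume version of that bound as available: its proof via the SAW representation in \cite{DERM24} applies verbatim to $\mathbb{Z}^d$. Writing $H = H^\Lambda\oplus H^{\Lambda^c} + K$, where $K$ is the boundary hopping operator with kernel bounded as in \eqref{eq:kernelK1}, we have
\begin{equation*}
G(0,n;z)-G_\Lambda(0,n;z) = -\sum_{k\in\Lambda,\,k'\in\Lambda^c} G_\Lambda(0,k;z)\,T(k,k')\,G(k',n;z),
\end{equation*}
valid for $z\in\mathbb{C}^+$ since both resolvents are bounded there. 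The sum converges absolutely because the deterministic bounds $|G|\le (\Im z)^{-1}$ hold and $\sum_{k'}|T(k,k')|<\infty$.

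Next we raise to the power $s/2$ and use subadditivity $(\sum a_k)^{s/2}\le\sum a_k^{s/2}$. Cauchy--Schwarz in the expectation then gives
\begin{equation*}
\mathbb{E}\big(|G-G_\Lambda|^{s/2}\big)\le C_T^{s/2}\sum_{k\in\Lambda,\,k'\in\Lambda^c}\frac{\mathbb{E}(|G_\Lambda(0,k)|^s)^{1/2}\,\mathbb{E}(|G(k',n)|^s)^{1/2}}{\|k-k'\|^{\frac{s}{2}(d+2\beta)}}.
\end{equation*}
Each fractional moment factor is bounded by $C\langle\cdot\rangle^{-s(d+2\beta)}$, using the a priori bound on the diagonal. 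With the notation \eqref{eq:kappa}, $\frac{s}{2}(d+2\beta)=d+2\kappa$ and $\kappa>0$, the right side is therefore at most
\begin{equation*}
C\sum_{k\in\Lambda,\,k'\in\Lambda^c}\langle k\rangle^{-(d+2\kappa)}\,\|k-k'\|^{-(d+2\kappa)}\,\langle k'-n\rangle^{-(d+2\kappa)}.
\end{equation*}

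Now fix $n$ and take $L\ge 2\|n\|$. We split the sum according to whether $\|k\|\ge L/2$ or $\|k\|<L/2$.
\begin{itemize}
\item In the first region we bound $\sum_{k'}\|k-k'\|^{-(d+2\kappa)}\langle k'-n\rangle^{-(d+2\kappa)}\le C$ uniformly. The remaining factor is $\sum_{\|k\|\ge L/2}\langle k\rangle^{-(d+2\kappa)}=O(L^{-2\kappa})$.
\item In the second region $\|k-k'\|\ge L/2$, and also $\|k'-n\|\ge L/2$ since $\|n\|\le L/2$. Summing over $k'$ gives $O(L^{-2\kappa})$ from the last factor, times $L^{-(d+2\kappa)}$, times the bounded sum $\sum_k\langle k\rangle^{-(d+2\kappa)}$.
\end{itemize}
Hence $\mathbb{E}(|G-G_\Lambda|^{s/2})\to0$. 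Here convolution summability uses exactly $\kappa>0$, i.e.\ $\beta>d/2$ and $s>2d/(d+2\beta)$.

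It remains to upgrade from exponent $s/2$ to exponent $s$. The approach is to run the same argument with $s$ replaced by some $s_1\in(s,1)$, so that $s_1/2<s$. Since $s_1>s$, the condition $s_1>2d/(d+2\beta)$ still holds, and the decay bounds at $s_1$ hold for $\lambda$ large. This gives $L^{s_1/2}$ convergence, but $s_1/2<s$, so that alone does not suffice. Instead I will interpolate between the $s/2$ convergence and a uniform bound at a higher exponent, via H\"older:
\begin{equation*}
\mathbb{E}|X|^s\le \big(\mathbb{E}|X|^{s/2}\big)^{\theta}\,\big(\mathbb{E}|X|^{s_1}\big)^{1-\theta},
\end{equation*}
with $\theta\in(0,1)$ determined by $s=\theta\,\frac{s}{2}+(1-\theta)s_1$. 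The uniform bound $\mathbb{E}|X|^{s_1}\le C$ follows from the a priori bound \cite[Corollary 8.4]{AW15} applied to both $G$ and $G_\Lambda$, since the $s_1$-moments are uniformly bounded in $\Lambda$ and $z$. This interpolation is the step I expect to need the most care, together with justifying that the infinite-volume decay bound holds (with a possibly larger $\lambda_0$ chosen for $s_1$).
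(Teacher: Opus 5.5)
Your proof is correct. It shares the paper's skeleton: the geometric resolvent identity, subadditivity, Cauchy--Schwarz in $\omega$, the decay bound of Theorem \ref{thm:decay-long-range-GF} for both $G_\Lambda$ and $G$, and summability from $\frac{s}{2}(d+2\beta)>d$. Where you differ is in how the exponent is managed.

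\emph{The paper's route.} It raises the identity to the power $s$ directly, so Cauchy--Schwarz produces $2s$-moments. It reduces these back to $s$-moments with the deterministic bound $|G|\le(\Im z)^{-1}$, i.e.\ $\mathbb{E}(|G|^{2s})\le(\Im z)^{-s}\mathbb{E}(|G|^{s})$. Combined with the crude estimate $\max\{\|l\|,\|l-m\|\}\ge L/2$ and $|\Lambda|\lesssim L^d$, this gives a bound of the form $C(\Im z)^{-s}L^{-2\kappa}$. That is enough because $z$ is fixed in the lemma.

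\emph{Your route.} You work at exponent $s/2$, which needs no $\Im z$ at all. You then upgrade to exponent $s$ by H\"older (log-convexity) interpolation against a uniform $s_1$-moment bound with $s_1\in(s,1)$. That step is sound as written. It uses only the a priori bound $\mathbb{E}(|G_\Lambda(0,n;z)|^{s_1})+\mathbb{E}(|G(0,n;z)|^{s_1})\le 2C_{s_1}\lambda^{-s_1}$, which holds for every $\lambda>0$, so your worry about enlarging $\lambda_0$ for $s_1$ is unnecessary.

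\emph{What each buys.} Your route gives a rate $O(L^{-2\kappa\theta})$ that is uniform in $z\in\mathbb{C}^+$, including as $\Im z\to 0$, whereas the paper's constant degenerates there. The paper's route is simply shorter.

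For the way the lemma is used (Corollary \ref{corMM}, at fixed $z$), neither refinement is essential. For fixed $z\in\mathbb{C}^+$ the statement already follows from two facts: the pointwise convergence $G_\Lambda(0,n;z)\to G(0,n;z)$, which holds by strong resolvent convergence of $1_\Lambda H_\omega 1_\Lambda$ to $H_\omega$, and dominated convergence with the bound $|G_\Lambda(0,n;z)-G(0,n;z)|^s\le(2/\Im z)^s$. The decay input therefore matters only if one wants a rate or uniformity in $z$.
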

\begin{proof}
The geometric resolvent equation gives the bound 
\begin{align*}
   |G_{\Lambda}(0,n;z)-G(0,n;z)|&\leq \sum_{(l,m)\in \Lambda\times \Lambda^c}|G_{\Lambda}(0,l;z)T(l,m)G(m,n;z) . 
\end{align*}
We take the expectation of the $s$-power and use the bounds on the kernel of $T$ in Hypothesis \eqref{hypothesis:lr1} along with the Green's function decay of Theorem \ref{thm:decay-long-range-GF} to obtain 
\bea \label{Eq:infinitevolcomparision}
\lefteqn{\mathbb{E}(|G_{\Lambda}(0,n;z)-G(0,n;z)|^{s}) } \nonumber \\
 & \leq & \frac{1}{ (\Im z )^{s}}  \sum_{(l,m)\in \Lambda\times \Lambda^c}  |T(l,m)|^{s}   \mathbb{E} ( |G_{\Lambda} (0,l;z)|^{s})^{\frac{1}{2}}   
     \mathbb{E} (|G(m,n;z)|^{s})^{\frac{1}{2}} \nonumber \\
 & \leq &\frac{C_{\beta,d,s,\rho,\lambda}}{(\Im z)^{s} }   
    \sum_{(l,m)\in \Lambda \times \Lambda^c}f(m,l)\frac{1}{\|m-n\|^{\frac{s(d+2\beta)}{2}}},   \nonumber 
    \\
\eea
where $$f(m,l):=\left\{ \begin{array}{cc} 
 \frac{1}{\|l-m\|^{s(d+2\beta)}}
    \frac{1}{\|l\|^{\frac{s(d+2\beta)}{2}}}, & l \neq 0; \\
 \frac{1}{\|m\|^{s(d+2\beta)}}
    , \ l = 0.
  \end{array} \right.$$
  Since $m\in \Lambda^c$,
  $f(m,l)\leq (\frac{2}{L})^{{\frac{s(d+2\beta)}{2}}}$ for any $l\in \Lambda$ and thus \eqref{Eq:infinitevolcomparision} yields
    \begin{equation*}
         \mathbb{E}(|G_{\Lambda}(0,n;z)-G(0,n;z)|^{s})
  \leq 2^{{\frac{s(d+2\beta)}{2}}}\frac{L^d}{L^{\frac{s(d+2\beta)}{2}}}\sum_{m\in \Lambda^c}
   \frac{1}{\|m-n\|^{s(d+2\beta)}}. 
\end{equation*}
   
By the assumption on $s$ this vanishes as $L \to \infty$, finishing the proof.
\end{proof}

As a consequence of Theorem \ref{thm:lb_green1} and Lemma \ref{Lem:infinitevolcomparision}, we obtain a lower bound for the expectation of the $s$-power of the Green's function. 

\begin{cor}\label{corMM}
Let $\beta>\frac{d}{2}$, $B>0$, and $s\in (\frac{2d}{d+2\beta},1)$ be given.
There exists $\lambda_1=\lambda_1(\beta,d,\rho,s,\|T\|_s,B)>0$ and $m(\beta,d,B,s)>0$ such that for $\lambda>\lambda_1$
the infinite volume Green's function for the lrAM satisfies the lower bound:   \begin{equation}
\left(\frac{m(\beta,d,B,s)}{2\lambda^{2s}}\right)\frac{1}{\|n\|^{s(d+2\beta)}}\leq \mathbb{E}(|G(0,n;z)|^s)
\end{equation}
whenever
$n\neq 0$, and $z\in \mathbb{C}^{+}\cap D(T(0,0),\lambda B)$.
\end{cor}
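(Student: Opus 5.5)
The plan is to pass to the limit $L\to\infty$ in the finite-volume lower bound of Theorem \ref{thm:lb_green1}, using Lemma \ref{Lem:infinitevolcomparision} to control the difference between finite- and infinite-volume Green's functions. The constant $m(\beta,d,B,s)$ and the threshold $\lambda_1$ are taken to be those of Theorem \ref{thm:lb_green1}, enlarged if necessary so that $\lambda_1\geq \lambda_0(\beta,s)$. Then both Theorem \ref{thm:lb_green1} and Lemma \ref{Lem:infinitevolcomparision}, which relies on the upper bounds of Theorem \ref{thm:decay-long-range-GF}, are available for every $\lambda>\lambda_1$.

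First fix $n\neq 0$ and $z\in\mathbb{C}^+\cap D(T(0,0),\lambda B)$, and let $\Lambda=\Lambda_L=[-L,L]^d\cap\mathbb{Z}^d$ with $L$ large enough that $n\in\Lambda$. Since $s<1$, the map $x\mapsto x^s$ is subadditive on $[0,\infty)$. This gives, pointwise in $\omega$,
\[
|G_{\Lambda}(0,n;z)|^s\leq |G(0,n;z)|^s+|G_{\Lambda}(0,n;z)-G(0,n;z)|^s .
\]
Taking expectations and applying the lower bound \eqref{eq:lb_green1}, which is uniform in $\Lambda$, we get
\[
\frac{m(\beta,d,B,s)}{\lambda^{2s}}\frac{1}{\|n\|^{s(d+2\beta)}}\leq \mathbb{E}(|G(0,n;z)|^s)+\mathbb{E}(|G_{\Lambda}(0,n;z)-G(0,n;z)|^s).
\]
By Lemma \ref{Lem:infinitevolcomparision}, with $z$ and $n$ fixed, the last term tends to $0$ as $L\to\infty$. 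It is therefore at most $\frac{m}{2\lambda^{2s}}\|n\|^{-s(d+2\beta)}$ for $L$ large. Absorbing it into the left-hand side yields the claimed bound with constant $m/2$.

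The one point needing care is that Lemma \ref{Lem:infinitevolcomparision} gives convergence only for each fixed $z$ with $\Im z>0$ and fixed $n$; its estimate contains a factor $(\Im z)^{-s}$. This causes no difficulty, because the corollary is a pointwise statement in $(z,n)$, so we need no uniformity in $z$: for each pair we simply pick $L$ large depending on both. The remaining bookkeeping is to check that the hypotheses of the two ingredients are compatible: the same $s\in(\frac{2d}{d+2\beta},1)$, the condition $\beta>d/2$, and $\lambda$ larger than the maximum of the two thresholds. That maximum defines $\lambda_1$ here.
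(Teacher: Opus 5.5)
Your proposal is correct and follows the paper's route. The paper obtains this corollary directly from the finite-volume lower bound of Theorem \ref{thm:lb_green1} combined with the convergence in Lemma \ref{Lem:infinitevolcomparision}. Your argument makes that combination explicit: subadditivity of $x\mapsto x^s$, a pointwise choice of $L$ for each fixed $(z,n)$, and enlarging $\lambda_1$ so that $\lambda_1\geq\lambda_0$. This is also where the factor $1/2$ in the constant comes from.
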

As before, this immediately implies the following corollary.

\begin{cor}\label{secondmomentcor}
Let $\beta>\frac{d}{2}$, $B>0$, and $s\in (\frac{2d}{d+2\beta},1)$ be given.
There exists $\lambda_1=\lambda_1(\beta,d,\rho,s,\|T\|_s,B)>0$ and $m(\beta,d,B,s)>0$ such that for $\lambda>\lambda_1$ the infinite volume Green's function for the lrAM satisfies the lower bound:   \begin{equation}\label{eq:lb_green_2}
\left(\frac{m^{\frac{2}{s}}(\beta,d,B,s)}{2^{\frac{2}{s}}\lambda^{4}}\right)\frac{1}{\|n\|^{2(d+2\beta)}}\leq \mathbb{E}(|G(0,n;z)|^2)
\end{equation}
whenever
$n\neq 0$, and $z\in \mathbb{C}^{+}\cap D(T(0,0),\lambda B)$.
\end{cor}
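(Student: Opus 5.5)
The bound follows from Corollary \ref{corMM} by the same H\"older argument used for Corollary \ref{cor:lb_green2}. I would take the same threshold $\lambda_1$ and the same constant $m(\beta,d,B,s)$ as in Corollary \ref{corMM}, and fix $n\neq 0$, $\lambda>\lambda_1$ and $z\in\mathbb{C}^+\cap D(T(0,0),\lambda B)$.

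First I check that the second moment is finite, so that H\"older applies. For $\Im z>0$ the resolvent is bounded, with $|G(0,n;z)|\leq (\Im z)^{-1}$. Hence $\mathbb{E}(|G(0,n;z)|^2)<\infty$.

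Next, since $s<2$, Jensen's (or H\"older's) inequality with exponent $2/s>1$ applied to the probability measure $\mathbb{P}$ gives
\[
\mathbb{E}(|G(0,n;z)|^s)\leq \mathbb{E}(|G(0,n;z)|^2)^{s/2}.
\]
Combining this with Corollary \ref{corMM} yields
\[
\frac{m(\beta,d,B,s)}{2\lambda^{2s}}\,\frac{1}{\|n\|^{s(d+2\beta)}}\leq \mathbb{E}(|G(0,n;z)|^2)^{s/2}.
\]

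Finally, I raise both sides to the power $2/s$; this preserves the inequality because both sides are nonnegative. This gives
\[
\frac{m^{2/s}}{2^{2/s}\lambda^{4}}\,\frac{1}{\|n\|^{2(d+2\beta)}}\leq \mathbb{E}(|G(0,n;z)|^2),
\]
which is \eqref{eq:lb_green_2}. All the substantive work was already done in Theorem \ref{thm:lb_green1} and Lemma \ref{Lem:infinitevolcomparision}, through Corollary \ref{corMM}. The only point needing care here is the exponent bookkeeping: $(\lambda^{-2s})^{2/s}=\lambda^{-4}$ and $(\|n\|^{-s(d+2\beta)})^{2/s}=\|n\|^{-2(d+2\beta)}$.
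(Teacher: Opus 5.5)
Your proof is correct and is the paper's own argument: the paper deduces Corollary \ref{secondmomentcor} from Corollary \ref{corMM} with the same H\"older step $\mathbb{E}(|G|^s)\leq \mathbb{E}(|G|^2)^{s/2}$ used for Corollary \ref{cor:lb_green2}, then raises both sides to the power $2/s$, exactly as you do. Your finiteness check via $|G(0,n;z)|\leq (\Im z)^{-1}$ is harmless but not needed, since the inequality would hold trivially if the second moment were infinite.
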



\section{Dynamical delocalization and the absence of dynamical localization for fAM and lrAM}
\label{sec:deloc1}

We recall that a convenient measure of transport is given by the time and disorder average of moments of the position operator with respect to a solution of the \Schr equation
with initial condition $\varphi_0$. This solution is given by $\varphi(t) := e^{-it H_\beta} \varphi_0$. For simplicity, we take $\varphi_0 = \delta_0$ and denote  this moment by $M_T^q$: 
\beq\label{defn:moment1}
M^{q}_{T }=\frac{2}{T}\int^{\infty}_{0}e^{\frac{-2t}{T}}\mathbb{E}\langle \delta_0,{e^{itH_{\beta}}|X|^q e^{-itH_{\beta}}}\delta_0\rangle\,dt.
\end{equation} 

In the spirit of Definition \ref{def:dyn-loc}, if any of these moments are unbounded as $T \to \infty$ we say that the system exhibits \emph{dynamical delocalization}. This implies that there is nontrivial transport in the system. 

We recall that spectral localization with polynomially decaying eigenfunctions holds for the fAM and lrAM at strong disorder, as shown in Section \ref{sec:long_range1}. The most important consequence of  Theorem \ref{thm:lb_green1} on the lower bounds for the Green's function 
is the absence of dynamical localization for these models, which is in stark contrast with the behavior exhibited by the usual Anderson model. 


\begin{thm}\label{thm:no_dl1}
Let $M_T^q$ be the $q$-th moment of the position operator for the lrAM satisfying Hypotheses \ref{hypothesis:pot1} and \ref{hypothesis:lr1}, with long-range hopping exponent $\beta > \frac{d}{2}$, and $s \in ( \frac{2d}{d + 2 \beta}, 1)$. For any dimension $d \geq 1$, there exists $\lambda_1=\lambda_1(\beta, d,\rho,s,\|T\|_s)>0$ such that for $\lambda>\lambda_1$:
\begin{enumerate}  
\item The small moments are finite: $M^{q}_{T}<\infty$, if $\beta>0$ and $0 < q<2\beta$; 

\medskip

\item The  higher moments are infinite: $M^{q}_{T}=+\infty$, if $\beta>d/2$ and $q\geq d+4\beta$.
\end{enumerate}
Consequently, under the above conditions, the lrAM does not exhibit dynamical localization but exhibits dynamical delocalization. 
In the particular, this applies to the fractional Anderson Model in $d=1$ with exponent $\frac{1}{2}<\alpha<1$.
\end{thm}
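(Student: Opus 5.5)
The plan is to relate the time-averaged moment $M^q_T$ to the Green's function through the standard Laplace-transform/Plancherel identity, and then apply the bounds of Sections \ref{sec:long_range1} and \ref{sec::lower1}. Writing $\eta = 1/T$ and $e^{-itH}\delta_0$, Plancherel in $t$ for the Abel average gives, for each $n$,
\begin{equation*}
\frac{2}{T}\int_0^\infty e^{-2t/T}\,|\langle \delta_n, e^{-itH_\beta}\delta_0\rangle|^2\,dt = \frac{\eta}{\pi}\int_{\mathbb R} |G(0,n;E+i\eta)|^2\, dE ,
\end{equation*}
so that
\begin{equation*}
M^q_T=\frac{1}{\pi T}\sum_{n\in\mathbb Z^d}\|n\|^q\int_{\mathbb R}\mathbb E\left(|G(0,n;E+i/T)|^2\right)dE .
\end{equation*}
All terms are nonnegative, so Tonelli justifies exchanging sum, integral and expectation. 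Both parts of the theorem then reduce to estimating the right-hand side.

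For part (2), I restrict the energy integral to a fixed interval $I$ with $I+i/T \subset \mathbb C^+\cap D(T(0,0),\lambda B)$. Such an interval exists for any $B>0$ once $\lambda B>1/T$; one can take, e.g., $I=[T(0,0)-\lambda B/2,\,T(0,0)+\lambda B/2]$ and $T$ large. Corollary \ref{secondmomentcor} then gives, for $\lambda>\lambda_1$ and every $n\neq 0$,
\begin{equation*}
\mathbb E\left(|G(0,n;E+i/T)|^2\right)\geq \frac{c}{\lambda^4\|n\|^{2(d+2\beta)}},\qquad E\in I .
\end{equation*}
Summing against $\|n\|^q$ yields $M^q_T\geq \frac{c|I|}{\pi T\lambda^4}\sum_{n\neq0}\|n\|^{q-2d-4\beta}$. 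This series diverges exactly when $q-2d-4\beta\geq -d$, i.e. $q\geq d+4\beta$, so $M^q_T=+\infty$ for every $T$. Since boundedness of $\sup_t$ in \eqref{dynloc} would imply finiteness of all $M^q_T$, dynamical localization fails. The fAM case in $d=1$ with $\alpha\in(\frac12,1)$ follows by taking $\beta=\alpha$, which satisfies $\beta>d/2$.

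For part (1), I intend to use Theorem \ref{thm:spectral-loc-lram}(iii) together with Remark \ref{rem:bounded-moments-lram}, which hold for all $\beta>0$. The cleanest route is the expectation version of the argument behind that result. First, bound $\sup_t|\langle\delta_n,e^{-itH}\delta_0\rangle|^2$ by the eigenfunction correlator $Q(0,n;\mathbb R)$. Next, use \eqref{eq:ef_corr1}, with the energy integral made finite via the compactness of $\Sigma_\lambda$, together with Theorem \ref{thm:decay-long-range-GF}, to get $\mathbb E\, Q(0,n)\leq C\|n\|^{-s(d+2\beta)}$ for $s\in(\frac d{d+2\beta},1)$. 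Finally, sum: $\sum_n\|n\|^q\|n\|^{-s(d+2\beta)}<\infty$ as long as $q<s(d+2\beta)-d$. Taking $s$ close to $1$ gives the full range $q<2\beta$, which yields $\sup_T M^q_T<\infty$.

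The main obstacle is the interchange step and the uniformity of the lower bound near the real axis. Corollary \ref{secondmomentcor} must hold uniformly in $\eta=1/T\to 0$, and the disk condition $z\in D(T(0,0),\lambda B)$ must leave an energy window of positive length. Both points are built into the corollary, since it is uniform in $z\in\mathbb C^+$ within the disk, so I expect this step to go through. The remaining care is only to check that the infinite-volume limit from Lemma \ref{Lem:infinitevolcomparision} is consistent for second moments. This is handled by Corollary \ref{corMM} followed by H\"older, exactly as in Corollary \ref{cor:lb_green2}.
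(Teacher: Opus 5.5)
Your proposal is correct and follows the paper's proof. Part (2) is exactly the Plancherel identity combined with Corollary \ref{secondmomentcor} on an energy window inside $D(T(0,0),\lambda B)$. For part (1) you spell out, via $\mathbb{E}\,Q(0,n)\le C\|n\|^{-s(d+2\beta)}$ (Corollary \ref{cor:ef-corr}), the expectation-level argument that the paper only cites from \cite{DERM24} and Theorem \ref{thm:spectral-loc-lram}.

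There is one caveat, which the paper shares: the disorder threshold $\lambda_0(\beta,s)$ blows up as $s\to 1$. With the fixed $s$ of the statement, your argument therefore gives finiteness only for $q<s(d+2\beta)-d$. To reach every $q<2\beta$, $\lambda_1$ must be allowed to depend on $q$. Similarly, your divergence in part (2) is obtained only for $T$ large, namely $1/T<\lambda B$, not literally for every $T$.
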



\begin{proof}
The bound in case 1 
follows from \cite[Theorem 2]{DERM24} and Theorem \ref{thm:spectral-loc-lram}.  As for case 2, we express the time-average in \eqref{defn:moment1} in terms of the Fourier transform with help of Plancherel's identity
\begin{equation}\label{eq:Plancherelm}
\int^{\infty}_{0}e^{\frac{-2t}{T}}
\mathbb{E}(|\langle \delta_n,e^{itH_{\beta}}\delta_0\rangle|^2)\,dt=\frac{1}{2\pi}\int_{\mathbb{R}}\mathbb{E}(|G(n,0;E+\frac{i}{T})|^2)\,dE
\end{equation}

and then apply the result of Corollary \ref{secondmomentcor}. This gives 
 \begin{align}
 \nonumber M^{q}_{T}&=\frac{2}{ T}\sum_{n\in \mathbb{Z}^d}\|n\|^q\int^{\infty}_{0}e^{\frac{-2t}{T}}
\mathbb{E}(|\langle \delta_n,e^{itH_{\beta}}\delta_0\rangle|^2)\,dt\\
&=\frac{1}{\pi T}\sum_{n\in \mathbb{Z}^d}\|n\|^q\int_{\mathbb{R}}\mathbb{E}(|G(n,0;E+\frac{i}{T})|^2)\,dE\\
\nonumber&\geq \frac{|I|}{\pi T}\left(\frac{m^{2/s}(\beta,d,B,s)}{2^{\frac{2}{s}}\lambda^{4}}\right)\sum_{n\in \mathbb{Z}^d}\frac{\|n\|^q}{|n|^{2(d+2\beta)}}
\end{align}
where we have denoted by $I$ an interval with length comparable to $B \lambda$ which contains $\sigma(H_{\beta})$. Now it suffices to observe that
$$
 \sum_{n\in \mathbb{Z}^d \backslash \{0\}} \|n\|^{q- (2d+4\beta)} =+\infty,
$$
 if $q\geq d+4\beta$.
\end{proof}

\subsection{Lower bounds for eigenfunction correlators}
\label{subsec:lb_ec}

We conclude with some lower bound estimates on the eigenfunction correlators (see, for example, \cite[Chapter 7]{AW15}). We recall the definition of $Q(k,n,I)$, for $n,k \in \Z^d$, 
and any interval $I\subset \mathbb{R}$:
\beq\label{defn:correl1}
Q(k,n;I) :=\sup_{\substack{|f|\leq 1 \\ \supp f \subset I}} |\langle\delta_n, P_I(H_\beta) f(H_{\omega})\delta_k\rangle|,
\eeq
with the above supremum being taken over Borel measurable functions. If $I \cap \Sigma_\beta = \emptyset$, we have that $Q(k,n;I ) = 0$.
If $f(x) = {1}_I (x)$, and $\varphi_j(x)$ are the normalized eigenfunctions of $H_\beta$
with corresponding eigenvalues $E_j$, then 
\beq 
   Q(k,n;I) = \sum_{j; E_j \in I} | \varphi_j(k) \varphi_j(n)| .
   \eeq
In particular, bounds on the expectation of the correlator $Q(0,n;I)$  suggest an averaged bound on the eigenfunctions in the sum. We present upper and lower bounds on the global eigenfunction correlators of $H_{\beta}$ which hold in expectation. We are only able to prove these bounds for any interval $I \subset \R$ for which $\Sigma_\beta \subset I$. For this reason, we refer to these as \textit{global eigenfunction correlators} and we simply write $Q(0,n):= Q(0,n;I)$, for such $I$.

\begin{cor}\label{cor:ef-corr}
The global eigenfunction correlator $Q(0,n)$ is bounded above and below as follows:
    \begin{enumerate}
      \item \label{upperboundcorr}
      Assume that $\beta>0$ and $\lambda_0>0$ is as in Theorem \ref{thm:decay-long-range-GF} . Then for each $s\in (\frac{d}{d+2\beta},1)$, $\lambda>\lambda_0$ and $n\neq 0$, the upper bound
      $$\mathbb{E}\left(Q(0,n)\right)\leq \frac{C_{\lambda,d,\beta,s}}{\|n\|^{s(d+2\beta)}} , 
      $$
      holds with $C_{\lambda,d,\beta,s}>0$.
    \item \label{lowerboundcorr}
    Assume that $\beta>\frac{d}{2}$. Let $\lambda_1>0$ and $m(\beta,d,B,s)>0$ be as in Theorem \ref{thm:lb_green1}. Let $\varepsilon>0$, fix $\Sigma_\beta \subset I\subset \mathbb{R}$, a bounded interval, and let $B$ be chosen so that
    $\max\{\varepsilon,|E-T(0,0)|\}<B\lambda$
     for all $E\in|I|$. Then, for each $s\in (\frac{2d}{d+2\beta},1)$ and $n\neq 0$, the lower bound holds
    $$\mathbb{E}^{1/2}\left(Q^2(0,n)\right)\geq \left(\frac{\sqrt{\varepsilon|I|}m^{1/s}(\beta,d,B,s)}{2^{1/s}\sqrt{\pi}\lambda^{2}}\right)\frac{1}{\|n\|^{(d+2\beta)}} .
    $$
  \end{enumerate}
\end{cor}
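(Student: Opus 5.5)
For the upper bound (1), we will invoke the Aizenman--Warzel bound \eqref{eq:ef_corr1}, i.e.\ \cite[Theorem 7.7]{AW15}, which gives $\E(Q(0,n;I))\leq C_s\liminf_{\eta\to0}\int_I \E(|G(0,n;E+i\eta)|^s)\,dE$. Into this we insert the finite-volume decay estimate of Theorem \ref{thm:decay-long-range-GF}. That estimate is uniform in $z\in\mathbb C\setminus\R$ and in $\Lambda$, so it passes to the infinite-volume Green's function by the same limiting argument as in Lemma \ref{Lem:infinitevolcomparision}, or equivalently through the finite-volume correlators and weak convergence. Integrating over a bounded interval $I\supset\Sigma_\beta$ contributes only a factor $|I|$, which is absorbed into $C_{\lambda,d,\beta,s}$.

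For the lower bound (2), we start from the spectral representation. Let $\mu_{0,n}$ be the (signed) spectral measure $\langle\delta_n, P(dE)\delta_0\rangle$. Then
\[
G(0,n;E+i\eta)=\int\frac{\mu_{0,n}(dx)}{x-E-i\eta},
\]
and $|\mu_{0,n}|(\R)\leq Q(0,n)$ because $I\supset\Sigma_\beta$ (take $f$ to be the sign of the Radon--Nikodym density). Since $\int_\R |x-E-i\eta|^{-2}\,dE=\pi/\eta$, Cauchy--Schwarz in the measure followed by Fubini gives
\[
\int_\R |G(0,n;E+i\eta)|^2\,dE \leq |\mu_{0,n}|(\R)\int\!\!\int\frac{dE}{|x-E-i\eta|^2}\,|\mu_{0,n}|(dx)\leq \frac{\pi}{\eta}\,Q(0,n)^2 .
\]
Taking expectations, we obtain $\E(Q^2(0,n))\geq \frac{\eta}{\pi}\int_I \E(|G(0,n;E+i\eta)|^2)\,dE$, valid for every $\eta>0$.

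Next we choose $\eta=\varepsilon$. The hypothesis $\max\{\varepsilon,|E-T(0,0)|\}<B\lambda$ is designed so that $z=E+i\varepsilon$ lies in $D(T(0,0),\lambda B)$ for all $E\in I$, possibly after enlarging $B$ by a constant factor such as $\sqrt2$ and renaming. Corollary \ref{secondmomentcor} then applies pointwise in $E$ and gives $\E(|G(0,n;E+i\varepsilon)|^2)\geq m^{2/s}2^{-2/s}\lambda^{-4}\|n\|^{-2(d+2\beta)}$. Integrating over $I$ and multiplying by $\varepsilon/\pi$, we get
\[
\E(Q^2(0,n))\geq \frac{\varepsilon|I|\,m^{2/s}}{\pi\, 2^{2/s}\lambda^4}\,\|n\|^{-2(d+2\beta)},
\]
and taking square roots gives exactly the stated bound.

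The main obstacle is the inequality $\int|G|^2\,dE\leq(\pi/\eta)Q^2$. It requires $\mu_{0,n}$ to be controlled in total variation by $Q(0,n)$, and that only holds when $I$ covers the full spectrum; this is why the result is stated only for global correlators. A second delicate point is keeping $z$ inside the disk where the a priori lower bounds hold, which is what forces the coupling between $\varepsilon$, $B$ and $\lambda$.
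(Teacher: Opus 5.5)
Your proposal is correct and follows the paper's proof. Part (1) comes from the fractional-moment decay of Theorem \ref{thm:decay-long-range-GF} inserted into the Aizenman--Warzel correlator bound, and part (2) from $\E(Q^2(0,n))\geq\frac{\varepsilon}{\pi}\int_I\E(|G(0,n;E+i\varepsilon)|^2)\,dE$ combined with Corollary \ref{secondmomentcor}. The only variation is that you get $\int_\R|G(0,n;E+i\varepsilon)|^2\,dE\leq\frac{\pi}{\varepsilon}Q(0,n)^2$ by Cauchy--Schwarz against $|\mu_{0,n}|$ rather than via the paper's Abel time average and Plancherel; this is the same inequality, and your version produces exactly the stated $\sqrt{\pi}$.

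Your remark that $\max\{\varepsilon,|E-T(0,0)|\}<B\lambda$ only places $E+i\varepsilon$ in $D(T(0,0),\sqrt{2}B\lambda)$ is a correct catch that the paper's proof passes over. Strictly, the constant is therefore $m(\beta,d,\sqrt{2}B,s)$, not ``exactly'' the stated one. For part (1) with $\beta\leq d/2$, pass to infinite volume by Fatou plus strong resolvent convergence, or use the infinite-volume form of the bound in \cite{DERM24}. Lemma \ref{Lem:infinitevolcomparision} does not cover this case, since it requires $\beta>d/2$.
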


\begin{proof}
The first estimate follows from \cite[Theorem 2]{DERM24}. Regarding the lower bound, observe that by definition of the eigenfunction correlators \eqref{defn:correl1}
$$
Q^2(0,n) \geq \varepsilon \int_0^\infty |\langle \delta_0, P_I(H_\beta) e^{-(\varepsilon - iH_\beta) t} \delta_0\rangle|^2) ~dt.
$$
Thus, it follows from Plancherel's Theorem, as in \eqref{eq:Plancherelm}, that 
$$
Q^2(0,n) \geq \frac{\varepsilon}{2\pi}\int_{\mathbb{R}} |G(n,0;E+i\varepsilon)|^2 ~dE ,
$$
since $P_I(H_\beta) (H_\beta - z)^{-1} = (H_\beta - z)^{-1}$. 
Restricting the above integral to the interval $I$, and taking the expectation, we obtain
$$
\mathbb{E}\left(Q^2(0,n;I)\right) \geq \frac{\varepsilon}{2\pi}\int_{I}\mathbb{E}(|G_{}(n,0;E+i\varepsilon)|^2)\,dE, $$
so that
$$\mathbb{E}(Q^2(0,n;I))\geq \frac{\varepsilon}{\pi} |I|\inf_{E\in I} \mathbb{E}(|G(n,0;E+i\varepsilon)|^2) . 
$$
Since the expectation of the square of the Green's function is bounded below as in \eqref{eq:lb_green_2}, we obtain 
$$
\mathbb{E}\left(Q^2(0,n;I)\right)\geq \frac{\varepsilon}{2\pi} |I|\left(\frac{m^{2/s}(\beta,d,B,s)}{2^{2/s}\lambda^{4}}\right)\frac{1}{\|n\|^{2(d+2\beta)}}.
$$
This concludes the proof of the lower bound \eqref{lowerboundcorr}.
\end{proof}

\begin{rem}
In the above result $s$ may be taken arbitrarily close to one. The upper and lower bounds in parts \ref{upperboundcorr} and \ref{lowerboundcorr} of Corollary \ref{cor:ef-corr} 
suggest that the global eigenfunction correlator $Q(0,n)\sim \frac{1}{\|n\|^{(d+2\beta)}}$ should hold in an appropriate sense.
Indeed, in the presence of suitable large deviation estimates for the eigenfunction correlator, the above results should yield the conjectured behavior of $\E ( Q(0,n) )$.
\end{rem}

\section*{Acknowledgements}
PDH is partially supported by Simons Foundation Collaboration Grant for Mathematicians No.\ 843327.
CRM is grateful to P. M\"uller, F. Nakano and M. Disertori for enlightening discussions, and to PUC Rio, Brazil, for their hospitality. CRM acknowledges the support of the Fondation de Sciences de la Mod\'elisation and CY Initiative.
RM thanks CNPq for partial support under
grants 308527/2026-7, 402952/2023-5 and 402249/2024-0. 



\end{document}